\newcommand*{\email}[1]{\texttt{#1}}
\newtheorem{theorem}{Theorem}[section]
\newtheorem{lemma}{Lemma}[section]
\newtheorem{corollary}{Corollary}[section]
\newtheorem{proposition}{Proposition}[section]
\newtheorem{claim}{Claim}[section]
\def\final{0}  
\def\iflong{\iffalse}
\newcommand{\cnote}[1]{{\color{blue}[{\tiny Calvin: \bf #1}]\marginpar{*}}}
\newcommand{\wnote}[1]{{\color{cyan}[{\tiny Weihang: \bf #1}]\marginpar{*}}}
\newcommand{\knote}[1]{{\color{red}[{\tiny Karthik: \bf #1}]\marginpar{\color{red}*}}}
\newcommand{\todo}[1]{{\color{red}[{\tiny TODO: \bf #1}]\marginpar{\color{red}*}}}
\newcommand{\cnote}[1]{}
\newcommand{\wnote}[1]{}
\newcommand{\knote}[1]{}
\newcommand{\todo}[1]{}
\newcommand{\R}{\mathbb{R}}
\newcommand{\collection}{\mathcal{C}}
\newcommand{\family}{\mathcal{F}}
\newcommand{\opt}{OPT_{k\text{-cut}}}
\newcommand{\optk}{OPT_{k\text{-cut}}}
\newcommand{\optmm}{OPT_{\text{minmax-}k\text{-partition}}}
\newcommand{\mcp}{\mathcal{P}}
\newcommand{\deltacard}{d}
\newcommand{\minmax}{minmax\xspace}
\newcommand{\enumhkcut}{\textsc{Enum-Hypergraph-$k$-Cut}\xspace}
\newcommand{\enumgkcut}{\textsc{Enum-Graph-$k$-Cut}\xspace}
\newcommand{\enumMMh}{\textsc{Enum-MinMax-Hypergraph-$k$-Partition}\xspace}
\newcommand{\enumMMg}{\textsc{Enum-MinMax-Graph-$k$-Partition}\xspace}
\newcommand{\enumkcutsetreps}{\textsc{Enum-MinMax-Hypergraph-$k$-Cut-Set-Reps}\xspace}
\newcommand{\minkcutset}{\textsc{min-$k$-cut-set}\xspace}
\newcommand{\minmaxkcutset}{\textsc{minmax-$k$-cut-set}\xspace}
\newcommand{\minkcutsets}{\textsc{min-$k$-cut-set}s\xspace}
\newcommand{\minmaxkcutsets}{\textsc{minmax-$k$-cut-set}s\xspace}
\newcommand{\mmh}{\textsc{Minmax-Hypergraph-$k$-Partition}\xspace}
\newcommand{\mmg}{\textsc{Minmax-Graph-$k$-Partition}\xspace}
\newcommand{\mmsubmodkpart}{\textsc{Minmax-Submod-$k$-Partition}\xspace}
\newcommand{\mmsymsubmodkpart}{\textsc{Minmax-SymSubmod-$k$-Partition}\xspace}
\newcommand{\hkcut}{\textsc{Hypergraph-$k$-Cut}\xspace}
\newcommand{\gkcut}{\textsc{Graph-$k$-Cut}\xspace}
\newcommand{\submodkpart}{\textsc{Minsum-Submod-$k$-Partition}\xspace} 
\newcommand{\symsubmodkpart}{\textsc{Minsum-SymSubmod-$k$-Partition}\xspace}
\newcommand{\mypara}[1]{\medskip \noindent {\bf #1}}
\def\complement#1{\overline{#1}}
\def\set#1{\{ #1 \}}
\title{Counting and enumerating optimum cut sets for 
hypergraph $k$-partitioning problems for fixed $k$\footnote{University of Illinois, Urbana-Champaign. Email: \email{\{calvinb2, karthe, weihang3\}@illinois.edu}. Supported in part by NSF grants CCF-1814613 and  CCF-1907937.}
}
\author{
Calvin Beideman \and 
Karthekeyan Chandrasekaran \and 
Weihang Wang
}
\date{}
\begin{document}

\maketitle

\begin{abstract}
We consider the problem of enumerating optimal solutions for two hypergraph $k$-partitioning problems---namely, \hkcut and \mmh. The input in hypergraph $k$-partitioning problems is a hypergraph $G=(V, E)$ with positive hyperedge costs along with a fixed positive integer $k$. 
The goal is to find a partition of $V$ into $k$ non-empty parts $(V_1, V_2, \ldots, V_k)$---known as a $k$-partition---so as to minimize an objective of interest. 
\begin{enumerate}
    \item If the objective of interest is the maximum cut value of the parts, then the problem is known as \mmh. A subset of hyperedges is a \minmaxkcutset if it is the subset of hyperedges crossing an optimum $k$-partition for \mmh. 
    \item If the objective of interest is the total cost of hyperedges crossing the $k$-partition, then the problem is known as \hkcut. A subset of hyperedges is a \minkcutset if it is the subset of hyperedges crossing an optimum $k$-partition for \hkcut. 
\end{enumerate}
We give the first polynomial bound on the number of \minmaxkcutsets and a polynomial-time algorithm to enumerate all of them in hypergraphs for every fixed $k$. Our technique is strong enough to also enable an $n^{O(k)}p$-time deterministic algorithm to enumerate all \minkcutsets in hypergraphs, thus improving on the previously known $n^{O(k^2)}p$-time deterministic algorithm, where $n$ is the number of vertices and $p$ is the size of the hypergraph. The correctness analysis of our enumeration approach relies on a structural result that is a strong and unifying generalization of known structural results for \hkcut and \mmh. We believe that our structural result is likely to be of independent interest in the theory of hypergraphs (and graphs).
\end{abstract}

\newpage
\setcounter{page}{1}

\section{Introduction}
\label{sec:intro}


In hypergraph $k$-partitioning problems, the input consists of a hypergraph $G=(V, E)$ with positive hyperedge-costs $c:E\rightarrow \R_+$ and a fixed positive integer $k$ (e.g., $k=2,3,4, \ldots$). The goal is to find a partition of the vertex set into $k$ \emph{non-empty} parts $V_1, V_2, \ldots, V_k$ so as to minimize an objective of interest. 
There are several natural objectives of interest in hypergraph $k$-partitioning problems. In this work, we focus on two particular objectives: \mmh and \hkcut: 
\begin{enumerate}
\item In \mmh, the objective is to minimize the maximum cut value of the parts of the $k$-partition---i.e., minimize $\max_{i=1}^k c(\delta(V_i))$; here $\delta(V_i)$ is the set of hyperedges intersecting both $V_i$ and $V\setminus V_i$ 
and $c(\delta(V_i))=\sum_{e\in \delta(V_i)}c(e)$ is the total cost of hyperedges in $\delta(V_i)$. 
\item In \hkcut\footnote{We emphasize that the objective of \hkcut is not equivalent to minimizing $\sum_{i=1}^k c(\delta(V_i))$. 
}, the objective is to minimize the cost of hyperedges crossing 
the $k$-partition---i.e., minimize $c(\delta(V_1, \ldots, V_k))$; here $\delta(V_1, \ldots, V_k)$ is the set of hyperedges that intersect at least two sets in $\{V_1, \ldots, V_k\}$ and $c(\delta(V_1, \ldots, V_k))=\sum_{e\in \delta(V_1, \ldots, V_k)}c(e)$ is the total cost of hyperedges in $\delta(V_1, \ldots, V_k)$. 
\end{enumerate}
If the input $G$ is a graph, then we will refer to these problems as \mmg and \gkcut respectively. 
We note that the case of $k=2$ corresponds to global minimum cut in both objectives. 
In this work, we focus on the problem of enumerating all optimum solutions to \mmh and \hkcut.

\mypara{Motivations and Related Problems.} 
We consider the problem of counting and enumerating optimum solutions for partitioning problems over hypergraphs for three reasons. 
Firstly, 
hyperedges 
provide more powerful modeling capabilities than edges and consequently, several problems in hypergraphs become non-trivial in comparison to graphs. 
Although hypergraphs and partitioning problems over hypergraphs (including \mmh) were discussed as early as 1973 by Lawler \cite{La73}, most of these problems still remain open. 
The powerful modeling capability of hyperedges has been useful in a variety of modern applications, which in turn, has led to a resurgence in the study of hypergraphs with recent works focusing on min-cuts, cut-sparsifiers, spectral-sparsifiers, etc. 
\cite{KK15, ChekuriX18, GKP17, CXY19, FPZ19, CKN20, KKTY21, SY19, BTS19, CC20, BCW22}. Our work adds to this rich and emerging theory of hypergraphs. 

Secondly, hypergraph $k$-partitioning problems are special cases of submodular $k$-partitioning problems. In submodular $k$-partitioning problems, the input is a finite ground set $V$, a submodular function\footnote{A real-valued set function $f:2^V\rightarrow \R$ is submodular if $f(A) + f(B) \ge f(A\cap B) + f(A\cup B)$ 
$\forall$ 
$A, B\subseteq V$.} $f:2^V\rightarrow \R$ provided by an evaluation oracle\footnote{An evaluation oracle for a set function $f$ over a ground set $V$ returns the value of $f(S)$ given $S \subseteq V$.} and a positive integer $k$ (e.g., $k=2,3,4,\ldots$). The goal is to partition the ground set $V$ into $k$ non-empty parts $V_1, V_2, \ldots, V_k$ so as to minimize an objective of interest. Two natural objectives are of interest: (1) In \mmsubmodkpart, the objective is to minimize $\max_{i=1}^k f(V_i)$ and (2) In \submodkpart, the objective is to minimize $\sum_{i=1}^k f(V_i)$. If the given submodular function is symmetric\footnote{A real-valued set function $f:2^V\rightarrow \R$ is symmetric if $f(A) = f(V\setminus A)$ 
$\forall$ 
$A\subseteq V$.}, then we denote the resulting problems as \mmsymsubmodkpart and \symsubmodkpart respectively. Since the hypergraph cut function is symmetric submodular, it follows that \mmh is a special case of \mmsymsubmodkpart. Moreover, \hkcut is a special case of \submodkpart (this reduction is slightly non-trivial with the submodular function in the reduction being asymmetric---e.g., see \cite{OFN12} for the reduction). Queyranne claimed, in 1999, a polynomial-time algorithm for \symsubmodkpart for every fixed $k$ \cite{Q99}, however the claim was retracted subsequently (see \cite{GQ}). The complexity status of submodular $k$-partitioning problems (for fixed $k\ge 4$) are open, so recent works have focused on hypergraph $k$-partitioning problems as a stepping stone towards submodular $k$-partitioning \cite{ZNI05, Zhthesis, OFN12, GQ, CC20, CC21, BCW22}. Our work contributes to this stepping stone by advancing the state of the art in hypergraph $k$-partitioning problems. 
We emphasize that the complexity status of two other variants of hypergraph $k$-partitioning problems which are also special cases of \submodkpart are still open (see \cite{ZNI05, Zhthesis, OFN12} for these variants). 


Thirdly, counting and enumeration of optimum solutions for \emph{graph} $k$-partitioning problems are fundamental to graph theory and extremal combinatorics. 
They have found farther reaching applications than initially envisioned. We discuss some of the results and applications for $k=2$ and $k>2$ now. 
For $k=2$ in connected graphs, 
it is well-known that the number of min-cuts and the number of $\alpha$-approximate min-cuts are at most $\binom{n}{2}$ and $O(n^{2\alpha})$ respectively, and they can all be enumerated in polynomial time for constant $\alpha$. 
These combinatorial results have been the crucial ingredients of several algorithmic and representation results in graphs. 
On the algorithmic front, these results enable fast randomized construction of graph skeletons which, in turn, plays a crucial role in fast algorithms to solve graph min-cut \cite{Karger00}. 
On the representation front, counting results form the backbone of cut sparsifiers which in turn have found applications in sketching and streaming \cite{AG09, AGM12, AGM12-pods, KK15}. 
A polygon representation of the family of $6/5$-approximate min-cuts in graphs was given by Benczur and Goemans in 1997 (see \cite{Ben95, Benthesis, BG08})---this representation was used in the recent groundbreaking 
$(3/2-\epsilon)$-approximation for metric TSP \cite{KKG21}. 
On the approximation front, in addition to the $(3/2-\epsilon)$-approximation for metric TSP \cite{KKG21}, counting results also led to the recent $1.5$-approximation for path TSP \cite{Zen19}. 
For $k>2$, we note that fast algorithms for \gkcut have been of interest since they help in generating cutting planes while solving TSP \cite{ccps-book, abcc06-book}. 
A recent series of works aimed towards improving the bounds on the number of optimum solutions for \gkcut 
culminated in a drastic improvement in the run-time to solve \gkcut \cite{GLL19-STOC, GLL20-STOC, GHLL20}.
Given the status of counting and enumeration results for $k$-partitioning in graphs and their algorithmic and representation implications that were discovered subsequently, we believe that a similar understanding 
in hypergraphs could serve as an important ingredient in the 
algorithmic and representation theory of hypergraphs. 

\mypara{The Enumeration Problem.} 
There is a fundamental structural distinction between hypergraphs and graphs that becomes apparent while enumerating optimum solutions to $k$-partitioning problems. 
In connected graphs, the number of optimum $k$-partitions for \gkcut and for \mmg are $n^{O(k)}$ and $n^{O(k^2)}$ respectively and they can all be enumerated in polynomial time, where $n$ is the number of vertices in the input graph \cite{KS96, Th08, CQX20, GLL20-STOC, GHLL20, CW21}. 
In contrast, a connected hypergraph could have exponentially many optimum $k$-partitions for both \mmh and \hkcut even for $k=2$---e.g., consider the hypergraph with a single hyperedge containing all vertices; we will denote this as the spanning-hyperedge-example. Hence, enumerating all optimum $k$-partitions for hypergraph $k$-partitioning problems in polynomial time is impossible. 
Instead, our goal in the enumeration problems is to enumerate \emph{$k$-cut-sets} corresponding to optimum $k$-partitions. We will call a subset $F\subseteq E$ of hyperedges to be a $k$-cut-set if there exists a $k$-partition $(V_1, \ldots, V_k)$ such that $F=\delta(V_1, \ldots, V_k)$; we will call a $2$-cut-set as a cut-set. 
In the enumeration problems that we will consider, the input consists of a hypergraph $G=(V, E)$ with positive hyperedge-costs $c:E\rightarrow \R_+$ and a fixed positive integer $k$ (e.g., $k=2,3,4,\ldots$).
\begin{enumerate}
    \item For an optimum $k$-partition $(V_1, \ldots, V_k)$ for \mmh in $(G, c)$, we will denote $\delta(V_1, \ldots, V_k)$ as a \minmaxkcutset. In \enumMMh, the goal is to enumerate all \minmaxkcutsets.
    \item For an optimum $k$-partition $(V_1, \ldots, V_k)$ for \hkcut in $(G, c)$, we will denote $\delta(V_1, \ldots, V_k)$ as a \minkcutset. In \enumhkcut, the goal is to enumerate all \minkcutsets. 
\end{enumerate}
We observe that in the spanning-hyperedge-example, although the number of optimum $k$-partitions for \mmh (as well as  \hkcut) is exponential, the number of \minmaxkcutset{s} (as well as \minkcutset{s}) is only one. 

\subsection{Results}
In contrast to graphs, whose representation size is the number of edges, the representation size of a hypergraph $G=(V, E)$ is $p:=\sum_{e\in E}|e|$. Throughout, our algorithmic discussion will focus on the case of fixed $k$ (e.g., $k=2,3,4,\ldots$). 

There are no prior results regarding \enumMMh in the literature. We recall the status of \mmh. 
As mentioned earlier, \mmh was discussed as early as 1973 by Lawler \cite{La73} with its complexity status being open until recently. 
We note that the objective here could be viewed as aiming to find a \emph{fair} $k$-partition, i.e., a $k$-partition where no part pays too much in cut value. Motivated by this connection to fairness, Chandrasekaran and Chekuri (2021) \cite{CC21} studied the more general problem of \mmsymsubmodkpart. They gave the first (deterministic) polynomial-time algorithm to solve \mmsymsubmodkpart and as a consequence, obtained the first polynomial-time algorithm to solve \mmh. Their algorithm does not show any bound on the number of \minmaxkcutset{s} since it solves the more general problem of \mmsymsubmodkpart for which the number of optimum $k$-partitions can indeed be exponential (recall the spanning-hyperedge-example). 
Focusing on hypergraphs raises the question of whether all $k$-cut-sets corresponding to optimum solutions can be enumerated efficiently for every fixed $k$. We answer this question affirmatively by giving 
the first polynomial-time algorithm for \enumMMh. 

\begin{theorem}\label{thm:minmax-enumeration}
There exists a deterministic algorithm to solve \enumMMh that runs in time $O(kn^{4k^2-2k+1}p)$, where $n$ is the number of vertices and $p$ is the size of the input hypergraph. Moreover, the number of \minmaxkcutset{s} in a $n$-vertex hypergraph is $O(n^{4k^2-2k})$. 
\end{theorem}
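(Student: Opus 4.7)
The plan is to reduce \enumMMh to a bounded-size guessing procedure: every \minmaxkcutset will be produced by at least one tuple of $4k^2-2k$ representative vertices through a sequence of $k$ hypergraph min-$(s,T)$-cut computations. As a preprocessing step, I would first compute $\optmm$ using the polynomial-time algorithm of Chandrasekaran and Chekuri \cite{CC21} for \mmsymsubmodkpart.

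The heart of the proof is a structural lemma---the unifying generalization of known structural theorems for \hkcut and \mmh flagged in the abstract. Concretely, I would prove that for every optimum $k$-partition $(V_1, \ldots, V_k)$ for \mmh, there exist tuples of vertices $(u_i, T_i)$ for $i = 1, \ldots, k$ with $u_i \in V_i$, $T_i \subseteq V \setminus V_i$, and $\sum_i (1 + |T_i|) \le 4k^2 - 2k$, such that each $V_i$ is a minimum $(u_i, T_i)$-hypergraph cut under a fixed tie-breaking rule. Given this lemma, the algorithm iterates over all ordered tuples of $4k^2 - 2k$ vertices (at most $n^{4k^2-2k}$ of them), interprets each tuple as the prescribed $(u_i, T_i)$ roles, performs $k$ hypergraph min-$(s,T)$-cut computations in $O(np)$ time each, and verifies that the induced $k$-partition has minmax value $\optmm$. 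The running-time bound $O(kn^{4k^2-2k+1}p)$ follows, and since every \minmaxkcutset is produced by at least one tuple, the number of \minmaxkcutset{s} is at most $O(n^{4k^2-2k})$.

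The main obstacle is establishing the structural lemma: the uncrossing argument must preserve the minmax constraint that \emph{every single} part has cut value at most $\optmm$, which is strictly more restrictive than the total-crossing constraint governing the structural theorem for \hkcut. I would proceed by iteratively uncrossing a candidate partition with the minimum $(u_i, T_i)$-cut for each $i$, using posimodularity and submodularity of the hypergraph cut function to argue that no uncrossed part exceeds cut value $\optmm$. The count $4k^2 - 2k = 2k(2k-1)$ should emerge from needing $O(k)$ outside-witnesses per part---one terminal in each other part, together with extra witnesses to resolve ties and to enforce the minmax constraint---summed over the $k$ parts of the optimum partition.
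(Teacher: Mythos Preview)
Your proposal has a genuine conceptual gap: the structural lemma you want to prove---that each part $V_i$ of an optimum minmax $k$-partition is itself the minimum $(u_i,T_i)$-terminal cut under a fixed tie-breaking rule---is false in hypergraphs. Take the spanning-hyperedge example (one hyperedge on all $n$ vertices). Every $2$-partition is optimum for \mmh, yet for any single source $u$ and any target set $T$, the source-minimal minimum $(u,T)$-terminal cut is $(\{u\},V\setminus\{u\})$; you can never recover an arbitrary $V_1$. More generally, there can be exponentially many optimum $k$-partitions sharing the same $k$-cut-set, so no bounded-size certificate can pin down the \emph{parts} $V_i$. The paper confronts this head-on: its structural theorem (Theorem~\ref{thm:cut-set-recovery}) only recovers the \emph{cut-set} $\delta(V_i)$, not $V_i$, and this is the key technical barrier the paper highlights. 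Consequently, the $k$ min-cut computations in your algorithm will in general return sets $U_1,\ldots,U_k$ that are neither disjoint nor cover $V$, so your verification step ``check that the induced $k$-partition has minmax value $\optmm$'' has nothing well-defined to check.

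The paper's fix is to introduce $k$-cut-set representatives: tuples $(U_1,\ldots,U_k)$ of pairwise disjoint sets with $U_i\subseteq V_i$ and $\delta(U_i)=\delta(V_i)$. It proves (Theorem~\ref{Thm: hyp minmax-1}) that one can efficiently test whether a given tuple is a representative of \emph{some} $k$-partition and, if so, reconstruct one; this replaces your verification step. For the structural side, the paper does not work directly with the minmax constraint as you propose. Instead it proves recovery relative to $\optk$: Theorems~\ref{theorem: structure thm 1} and~\ref{thm:cut-set-recovery} together say that any $2$-partition $(U,\overline{U})$ with $d(U)\le\optk$ has $\delta(U)$ recoverable from a source-minimal minimum $(S,T)$-terminal cut with $|S|,|T|\le 2k-1$. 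Since $d(V_i)\le\optmm\le\optk$ for every part of a minmax-optimal partition, this applies to each $V_i$. Two further points where your outline diverges: the paper needs source \emph{sets} $S$ of size up to $2k-1$, not a single vertex $u_i$ (a single source is provably insufficient already for Theorem~\ref{theorem: structure thm 1}); and the paper does not precompute $\optmm$ via \cite{CC21}---it obtains $\optmm$ intrinsically as the minimum of $\max_i d(U_i)$ over all verified representatives, which is what makes the final pruning step sound.
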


We emphasize that 
our result shows the first polynomial bound on the number of \minmaxkcutsets in hypergraphs for every fixed $k$ (in addition to a polynomial-time algorithm to enumerate all of them for every fixed $k$). 
Our upper bound of $n^{O(k^2)}$ on the number of \minmaxkcutset{s} is tight---there exist $n$-vertex connected \emph{graphs} for which the number of \minmaxkcutset{s} is $n^{\Theta(k^2)}$ (see Section \ref{sec:lower-bound}). 

\medskip
Next, we briefly recall the status of \hkcut and \enumhkcut. 
\hkcut was shown to be solvable in randomized polynomial time only recently \cite{CXY19, FPZ19}; the randomized algorithms also showed that the number of \minkcutset{s} is $O(n^{2k-2})$ and they can all be enumerated in randomized polynomial time. A subsequent deterministic algorithm was designed to solve \hkcut in time $n^{O(k)}p$ by Chandrasekaran and Chekuri \cite{CC20}. Chandrasekaran and Chekuri's techniques were extended to design the first deterministic polynomial-time algorithm to solve \enumhkcut in \cite{BCW22}. The algorithm for \enumhkcut given in \cite{BCW22} runs  in time  $n^{O(k^2)}p$. 
We note that this run-time has a 
quadratic dependence on $k$ in the exponent of $n$ although the number of \minkcutset{s} has only linear dependence on $k$ in the exponent of $n$ (since it is $O(n^{2k-2})$). So, an open question that remained from \cite{BCW22} is whether one can obtain an $n^{O(k)}p$-time deterministic algorithm for \enumhkcut. We resolve this question affirmatively.

\begin{theorem} \label{thm:k-cut-set-enumeration}
There exists a deterministic algorithm to solve \enumhkcut that runs in time $O(n^{16k-25}p)$, where $n$ is the number of vertices and $p$ is the size of the input hypergraph. 
\end{theorem}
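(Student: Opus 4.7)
The plan is to apply the same enumeration framework that will be used to prove Theorem \ref{thm:minmax-enumeration}, but specialized to the \hkcut objective, so as to avoid the quadratic blow-up in the exponent that \cite{BCW22} incurs. The key tool is the unifying structural theorem advertised in the abstract, which says, informally, that every optimum $k$-cut-set (for either the \minmax or \minsum objective) is determined by a \emph{representative} description of size $O(k)$, so that a representative can be guessed in $n^{O(k)}$ time and the corresponding cut-set can then be recovered in polynomial time.

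First I would state the consequence of the structural theorem specialized to \hkcut: for every \minkcutset $F$ there is a representative $R_F$ consisting of $O(k)$ vertices together with a constant number of auxiliary pieces (e.g., a bounded family of small sub-partitions), drawn from a universe of total size $n^{O(k)}$, and there is an efficient reconstruction procedure that, given $R_F$, outputs $F$ in time $\tilde{O}(p)$ via a bounded number of minimum $(s,t)$-cut computations on a contracted hypergraph. The improvement over \cite{BCW22} comes precisely from the fact that the new structural theorem permits an $O(k)$-sized representative, whereas the approach in \cite{BCW22} required $\Theta(k^2)$ pieces of information per cut-set.

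Second, I would enumerate every candidate representative in the $n^{O(k)}$ universe, run the reconstruction procedure on each, and emit those outputs $F$ that pass a post-processing cost check $c(F) = \optk$ and are genuinely realized as $\delta(V_1, \ldots, V_k)$ for some $k$-partition (verifiable by checking the number of connected components after removing $F$). Correctness follows because the structural theorem guarantees that every \minkcutset is the output of at least one candidate representative, and the cost check eliminates spurious outputs from invalid representatives. Bookkeeping the size of the representative universe, the polynomial cost of each reconstruction, and the $p$-factor from hypergraph min-cut subroutines yields the claimed $O(n^{16k-25}p)$ bound.

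The main obstacle is proving the unifying structural theorem with an $O(k)$-sized representative that depends only on the cut-set $F$ and not on any specific underlying optimum $k$-partition. Because a single \minkcutset can arise from exponentially many optimum $k$-partitions (as in the spanning-hyperedge-example), the representative must be extractable from $F$ alone; this forces a more delicate "divergent side" argument than those used in \cite{CXY19, FPZ19, CC20, BCW22}, where representatives were tied to particular $k$-partitions and therefore inflated the exponent. A secondary challenge is ensuring that the reconstruction subroutine produces exactly $F$ (and not merely some $k$-cut-set of the same cost) from every valid representative of $F$, so that no true \minkcutset is missed; this should follow from a uniqueness argument within the structural theorem, combined with the cost-and-realizability post-check to remove any residual false positives.
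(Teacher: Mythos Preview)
Your proposal has a genuine gap: the ``same enumeration framework'' used for Theorem~\ref{thm:minmax-enumeration} enumerates $k$-tuples $(U_1,\ldots,U_k)$ from a collection of size $n^{O(k)}$, which is $n^{O(k^2)}$ candidates, not $n^{O(k)}$. The structural theorem (Theorem~\ref{thm:cut-set-recovery}) does \emph{not} give an $O(k)$-sized representative for an entire \minkcutset $F$; it gives sets $S,T$ of size $O(k)$ that recover $\delta(U)$ for a \emph{single $2$-partition} $(U,\overline{U})$ with $d(U)=\optk$. For a general \minkcutset $F=\delta(V_1,\ldots,V_k)$ there need not exist any $2$-coarsening $(U,\overline{U})$ with $\delta(U)=F$ (take any graph example where every $2$-coarsening loses some crossing edges), so a one-shot guess of $(S,T)$ cannot recover $F$ directly. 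Your ``reconstruction procedure'' is therefore unspecified at exactly the point where the difficulty lies.

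The paper's actual mechanism is a divide-and-conquer recursion, which your proposal does not mention. For $U:=\cup_{i\le \lfloor k/2\rfloor}V_i$ one of two cases occurs: either $d(U)=\optk$, in which case $\delta(U)=F$ and Theorem~\ref{thm:cut-set-recovery} recovers $F$ in one shot; or $d(U)<\optk$, in which case Theorem~\ref{theorem: structure thm 1} recovers $U$ itself (not just its cut-set), and one recurses on $G[U]$ and $G[\overline{U}]$ with parameters $\lfloor k/2\rfloor$ and $\lceil k/2\rceil$. This recursion is what converts the per-side $O(k)$ representative into an algorithm for the whole $k$-cut-set. Even then, the naive recursion only yields $n^{O(k\log k)}$ time; the stated $O(n^{16k-25}p)$ bound requires a second, separate ingredient that you also omit: invoking the known $O(n^{2k-2})$ bound on the number of \minkcutsets (from \cite{CXY19}) to control the sizes of the recursively returned families $\family_A,\family'_A$ and hence the branching cost. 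Without both the divide-and-conquer structure and this counting input, you cannot reach the claimed exponent.
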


Our algorithms for both \enumMMh and \enumhkcut are based on a structural theorem that allows for efficient recovery of optimum $k$-cut-sets via minimum $(s,t)$-terminal cuts (see Theorem \ref{thm:cut-set-recovery}). 
Our structural theorem builds on structural theorems that have appeared in previous works on \mmh and \hkcut \cite{CC20, CC21, BCW22}. 
Our structural theorem may appear to be natural/incremental in comparison to ones that appeared in previous works, but formalizing the theorem and proving it is a significant part of our contribution. Moreover, 
our single structural theorem is strong enough to enable efficient algorithms for both \enumhkcut as well as \enumMMh  in contrast to previously known structural theorems. 
In this sense, our structural theorem can be viewed as a strong and unifying generalization of structural theorems that have appeared in previous works. 
We believe that our structural theorem 
will be of independent interest in the theory of cuts and partitioning in hypergraphs (as well as graphs). 

\subsection{Technical overview and main structural result}
\label{sec:techniques}
We focus on the unit-cost variant of \enumhkcut and \enumMMh in the rest of this work for the sake of notational simplicity---i.e., the cost of every hyperedge is $1$. Throughout, we will allow
multigraphs and hence, this is without loss of generality. Our
algorithms extend in a straightforward manner to arbitrary hyperedge
costs. They rely only on minimum $(s,t)$-terminal cut computations and hence, they are strongly polynomial-time algorithms.


\mypara{Notation and background.} 
Let $G=(V,E)$ be a hypergraph. Throughout this work, 
$n$ will denote the number of vertices in $G$, $m$ will denote the number of hyperedges in $G$, and $p:=\sum_{e\in E}|e|$ will denote the representation size of $G$. 
We will denote a partition of the vertex set into $h$ non-empty parts by an ordered tuple $(V_1, \ldots, V_h)$ and call such an ordered tuple as an $h$-partition. 
For a partition
$\mcp=(V_1,V_2,\ldots,V_h)$, 
we will say that a hyperedge $e$ crosses the partition $\mcp$ if it intersects at least two parts of the partition. 
We will refer to a $2$-partition as a cut. 
For a non-empty proper subset $U$ of vertices,
we will use $\complement{U}$ to denote $V\setminus U$, 
$\delta(U)$ to denote the set of hyperedges crossing the cut $(U, \complement{U})$, 
and $\deltacard(U):=|\delta(U)|$ to denote the cut value of $U$. 
We observe that $\delta(U)=\delta(\complement{U})$, so we will use $\deltacard(U)$ to denote the value of the cut $(U, \complement{U})$. 
More generally, given a partition
$\mcp=(V_1,V_2,\ldots,V_h)$, 
we denote 
the 
set 
of hyperedges crossing the partition by 
$\delta(V_1, V_2, \ldots, V_h)$ 
(also by $\delta(\mcp)$ for brevity)
and the number of hyperedges crossing the partition by 
$|\delta(V_1, V_2, \ldots, V_h)|$.  
We will denote the optimum value of \mmh and \hkcut respectively by 
\begin{align*}
\optmm&:=\min\left\{\max_{i\in [k]}|\delta(V_i)|: (V_1, \ldots, V_k) \text{ is a $k$-partition of $V$}\right\}  \text{ and} \\
\optk&:=\min\left\{\left|\delta(V_1, \ldots, V_k)\right|: (V_1, \ldots, V_k) \text{ is a $k$-partition of $V$}\right\}. 
\end{align*}
A key algorithmic tool will be the use of fixed-terminal cuts.  Let $S$, $T$ be
disjoint non-empty subsets of vertices. A $2$-partition $(U, \complement{U})$ is
an $(S,T)$-terminal cut if $S\subseteq U\subseteq V\setminus T$. Here,
the set $U$ is known as the source set and the set $\complement{U}$ is
known as the sink set.  A minimum-valued $(S,T)$-terminal cut is known
as a \emph{minimum $(S,T)$-terminal cut}.  Since there could be multiple
minimum $(S,T)$-terminal cuts, we will be interested in \emph{source
  minimal} minimum $(S,T)$-terminal cuts. 
For every pair of disjoint non-empty subsets $S$ and $T$ of vertices, there exists a unique source minimal minimum $(S,T)$-terminal cut and it can be found
in deterministic polynomial time via standard maxflow algorithms.
In particular, the source minimal minimum $(S, T)$-terminal cut can be found in time $O(np)$ \cite{ChekuriX18}. 



Our technique to enumerate all \minmaxkcutsets and all \minkcutsets will build on the approaches of  Chandrasekaran and Chekuri for \hkcut and \mmsymsubmodkpart \cite{CC20, CC21, BCW22}. 
We need the following structural theorem that was shown in \cite{BCW22}. 
\begin{restatable}{theorem}{thmStructureOne}
\label{theorem: structure thm 1}\cite{BCW22}
Let $G=(V,E)$ be a hypergraph and let $\opt$ be the optimum value of \hkcut in $G$ for some integer $k\ge 2$. Suppose $(U,\complement{U})$ is a $2$-partition of $V$ with $d(U)<\opt$. Then, for every pair of vertices $s\in U$ and $t\in\overline{U}$, there exist subsets  $S\subseteq U\setminus \{s\}$ and $T\subseteq \complement{U}\setminus \{t\}$ with $|S|\le 2k-3$ and $|T|\leq 2k-3$ such that $(U,\complement{U})$ is the unique minimum $(S\cup\{s\},T\cup\{t\})$-terminal cut in $G$.
\end{restatable}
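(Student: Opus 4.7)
The plan is to prove Theorem~\ref{theorem: structure thm 1} by a greedy iterative construction of $S$ and $T$, combined with a contradiction argument that bounds the number of iterations using the optimality of $\opt$ for \hkcut. Initialize $S := \emptyset$ and $T := \emptyset$ and repeat: compute the source-minimal minimum $(S \cup \{s\}, T \cup \{t\})$-terminal cut $(A, \complement{A})$; if $A = U$ we halt. Otherwise $(A, \complement{A})$ is a \emph{bad cut} certifying that $(U, \complement{U})$ is not yet the unique source-minimal minimum terminal cut, and because $s \in A \cap U$, $t \in \complement{A} \cap \complement{U}$ and $A \neq U$, at least one of $U \setminus A$ and $A \setminus U$ is nonempty. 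Pick a vertex $v$ from such a set and add it to $S$ if $v \in U \setminus A$, or to $T$ if $v \in A \setminus U$; note that $v \neq s$ in the former case and $v \neq t$ in the latter, so the containment conditions $S \subseteq U \setminus \{s\}$ and $T \subseteq \complement{U} \setminus \{t\}$ are preserved. Each iteration strictly removes the current bad cut from the feasible set, ensuring progress.

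The core of the argument is to show the loop terminates with $|S| \leq 2k-3$ and $|T| \leq 2k-3$. The plan is to argue by contradiction: assume $|S|$ reaches $2k-2$; the $T$-side is symmetric. Let $A_1, A_2, \ldots, A_{2k-2}$ be the source-minimal bad cuts corresponding to the $S$-additions. Using posimodular uncrossing of the hypergraph cut function, one shows these cuts organize into a strict chain $A_1 \subsetneq A_2 \subsetneq \cdots \subsetneq A_{2k-2}$ of source sets with $d(A_i) \leq d(U)$ for every $i$. The key sublemma here is that when a vertex $v_i \in U \setminus A_i$ is added to $S$, the next source-minimal bad cut $A_{i+1}$ strictly contains $A_i$: uncrossing gives $d(A_i \cap A_{i+1}) + d(A_i \cup A_{i+1}) \leq d(A_i) + d(A_{i+1})$, and the source-minimality of $A_i$ (at the $i$-th iteration) together with that of $A_{i+1}$ (at the next iteration) forces $A_i \cap A_{i+1} = A_i$.

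The main obstacle is the final step: converting the chain $A_1 \subsetneq \cdots \subsetneq A_{2k-2}$ (together with $(U, \complement{U})$) into a $k$-partition of $V$ of cost strictly less than $\opt$, which would contradict the optimality of $\opt$. The construction we have in mind takes the consecutive ``rings'' $A_{j+1} \setminus A_j$ of the chain, plus a piece coming from $\complement{U}$, to produce $k$ nonempty parts of $V$; each hyperedge crossing the resulting partition can be charged to some $d(A_j)$, and using $d(A_j) \leq d(U)$ uniformly the total cross-cost is bounded by $d(U) < \opt$. Matching the coefficient $2k-3$ precisely -- ensuring that exactly $2k-2$ additions on one side produce $k$ nonempty pieces with a charging that avoids double-counting when a hyperedge spans several rings of the chain -- is the most delicate accounting step in the proof and is where care is needed to get the tight constant.
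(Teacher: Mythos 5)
Your proposal takes a genuinely different route from the one behind this theorem (which is imported from [BCW22] and whose technique is on display in this paper's own proof of Theorem~\ref{thm:cut-set-recovery-part-1}): the paper's argument picks $S$ and $T$ as minimal transversals of suitable collections of ``bad'' cuts, obtains a family of source-minimal $(S\setminus\{u_i\},\overline{U})$-terminal cuts whose source sets have a \emph{sunflower-like} pairwise-distinguished structure ($u_i\in A_i\setminus\cup_{j\neq i}A_j$), and feeds that structure into the partition-uncrossing Theorem~\ref{theorem:hypergraph-uncrossing}, whose conclusion $|\delta(\mathcal{P})|\le\tfrac12\min\{d(A_i)+d(A_j)\}$ is exactly what yields a $k$-partition of cost $< d(U)<\opt$. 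Your chain-based greedy construction is a sensible-looking alternative, but as written it has two real gaps.

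First, the claimed chain $A_1\subsetneq A_2\subsetneq\cdots$ among the cuts at the $S$-addition iterations is not established once $S$- and $T$-additions interleave. The uncrossing step needs $A_i\cup A_{i+1}$ to be a feasible $(S_{i+1}\cup\{s\},\,T_{i+1}\cup\{t\})$-terminal cut, but the vertices added to $T$ strictly between these two iterations were drawn from $A^{(j)}\setminus U\subseteq A_i$, so $A_i\cup A_{i+1}$ contains them and is \emph{not} a valid terminal cut for the grown sink set. (This could be patched by performing all $S$-additions before any $T$-additions, using that the source-minimal source set is non-decreasing in $S$ and non-increasing in $T$; but you would then have to redo the termination argument.)

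Second, and more fundamentally, the final conversion from a chain of length $2k-2$ (each $d(A_j)\le d(U)$) to a $k$-partition of cost $<\opt$ does not go through as described. Charging each crossing hyperedge to ``some $d(A_j)$'' only bounds $|\delta(\mathcal{P})|$ by $\sum_j d(A_j)\le (2k-2)d(U)$, not by $d(U)$; a single hyperedge in a hypergraph can span many consecutive rings and cross many $A_j$'s, but it need cross only one, so no averaging over the chain pushes the cost below $d(U)$. This is precisely the step that the paper's partition-uncrossing theorem (Theorem~\ref{theorem:hypergraph-uncrossing}) handles, and it crucially uses the sunflower structure (pairwise-distinguished $u_i$'s), not a chain: the bound $\tfrac12\min\{d(A_i)+d(A_j)\}$ comes from a careful partition of $V$ into the $\overline{A_i}$ pieces together with the region $W=\cup_{i<j}(A_i\cap A_j)$, and there is no analogous accounting available for a chain. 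Unless you can supply such an argument, the bound $|S|\le 2k-3$ does not follow from your construction.
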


\mypara{\enumhkcut.} We first focus on \enumhkcut. We note that Theorem \ref{theorem: structure thm 1} will allow us to recover those parts $V_i$ of an optimum $k$-partition $(V_1, \ldots, V_k)$ for which $d(V_i)<\optk$. However, recall that our goal is \emph{not} to recover all optimum $k$-partitions for \hkcut, but rather to recover all \minkcutset{s} (i.e., not to recover the parts of every optimum $k$-partition, but rather only to recover the $k$-cut-set of every optimum $k$-partition). The previous work \cite{BCW22} that designed an $n^{O(k^2)}p$-time deterministic enumeration algorithm achieved this by proving the following structural result: suppose $(V_1, \ldots, V_k)$ is an optimum $k$-partition for \hkcut for which $d(V_1)=\optk$. Then, they showed that for every subset $T\subseteq \complement{V_1}$ satisfying $T\cap V_j\neq \emptyset$ for all $j\in \{2, \ldots, k\}$, there exists a subset $S\subseteq V_1$ with $|S|\le 2k$ such that the source minimal minimum $(S, T)$-terminal cut $(A, \complement{A})$ satisfies $\delta(A) = \delta(V_1)$. This structural theorem in conjunction with Theorem \ref{theorem: structure thm 1} allows one to enumerate a candidate family $\mathcal{F}$ of $n^{O(k^2)}$ subsets of hyperedges such that every \minkcutset is present in the family. The drawback of their structural theorem is that it is driven towards recovering the cut-set $\delta(V_i)$ of every part $V_i$ of every optimum $k$-partition $(V_1, \ldots, V_k)$. Hence, their algorithmic approach ends up with a run-time of $n^{O(k^2)}p$. 
In order to improve the run-time, we prove a stronger result: we show that for an 
\emph{arbitrary} cut $(U, \complement{U})$ with cut value $\optk$ (as opposed to only those sets $V_i$ of an optimum $k$-partition $(V_1, \ldots, V_k)$), its cut-set $\delta(U)$ can be recovered as the cut-set of \emph{any} minimum $(S, T)$-terminal cut for some $S$ and $T$ of small size. 
The following is the main structural theorem of this work. 


\begin{restatable}{theorem}{thmStrongerStructure}
\label{thm:cut-set-recovery}
Let $G=(V,E)$ be a hypergraph and let $\opt$ be the optimum value of \hkcut in $G$ for some integer $k\ge 2$. Suppose $(U,\complement{U})$ is a $2$-partition of $V$ with $d(U) = \opt$.
Then, there exist sets $S \subseteq U$, $T \subseteq \overline{U}$ with $|S| \leq 2k-1$ and $|T| \leq 2k-1$ such that every minimum $(S,T)$-terminal cut $(A, \overline{A})$ satisfies $\delta(A) = \delta(U)$.
\end{restatable}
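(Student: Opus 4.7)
\medskip
\noindent\textbf{Proof plan.} The plan is to mimic the strategy behind Theorem \ref{theorem: structure thm 1}: construct seed sets $S$ and $T$ guided by an optimum $k$-partition, and rule out every ``bad'' minimum $(S,T)$-terminal cut by an uncrossing argument against both this optimum $k$-partition and the cut $(U,\overline{U})$ itself. The difference from Theorem \ref{theorem: structure thm 1} is that we now have the equality $d(U) = \opt$ in place of the strict inequality $d(U) < \opt$, so the slack that drove the earlier contradiction is absent. Compensating for this missing slack is what forces the seed-set bound to grow from $2k-3$ to $2k-1$.

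\medskip\noindent\emph{Step 1 (Seed selection).} I would first fix an optimum $k$-partition $(V_1, \ldots, V_k)$ of $G$ satisfying $|\delta(V_1,\ldots,V_k)| = \opt$, and for each $i \in [k]$ set $U_i := V_i \cap U$ and $W_i := V_i \cap \overline{U}$. I would then place one representative vertex from each non-empty $U_i$ into a preliminary set $S_0 \subseteq U$ and one representative vertex from each non-empty $W_i$ into $T_0 \subseteq \overline{U}$. A part $V_i$ that straddles $(U, \overline{U})$ contributes one vertex to each of $S_0$ and $T_0$; a part lying entirely on one side contributes only to the corresponding seed set. Since there are exactly $k$ parts, $|S_0|, |T_0| \leq k$. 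Up to $k-1$ additional vertices per side will be appended greedily in Step 2 to eliminate residual bad cuts, producing final seed sets with $|S|, |T| \leq 2k - 1$.

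\medskip\noindent\emph{Step 2 (Ruling out bad minimum $(S,T)$-terminal cuts via uncrossing).} Because $(U,\overline{U})$ is itself an $(S,T)$-terminal cut of value $\opt$, any minimum $(S,T)$-terminal cut $(A,\overline{A})$ satisfies $d(A) \leq \opt$. Suppose for contradiction that such an $(A, \overline{A})$ has $\delta(A) \neq \delta(U)$. Submodularity of the hypergraph cut function gives
\[
d(A \cap V_i) + d(A \cup V_i) \;\leq\; d(A) + d(V_i) \qquad \text{and} \qquad d(A \cap U) + d(A \cup U) \;\leq\; d(A) + d(U).
\]
Using these inequalities I would splice $A$ with the optimum $k$-partition to assemble a new $k$-partition of $V$, e.g.\ of the form $(A \cup V_{i^{*}}, \overline{A} \cap V_j : j \neq i^{*})$ for an appropriate choice of $i^{*}$ (with a symmetric splice using $\overline{A}$ if needed). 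The seed vertices of $S$ and $T$ certify that every part of the spliced $k$-partition is non-empty. Its cut cost, controlled by the right-hand sides above, is at most $\opt$, and chasing equality throughout forces the constructed cut-set to equal $\delta(V_1, \ldots, V_k)$, which in turn forces $\delta(A) = \delta(U)$.

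\medskip\noindent\emph{Main obstacle.} The hardest case is the boundary situation $d(A) = \opt$ with $\delta(A) \neq \delta(U)$: both cuts are already optimum, so the splice in Step 2 yields a candidate $k$-partition whose cut cost is merely at most $\opt$ with no strict drop to exploit. Pushing through requires either pinpointing a hyperedge that straddles both $(A, \overline{A})$ and $(U, \overline{U})$ and must be double-counted (supplying the missing strict inequality), or identifying a vertex of disagreement between $A$ and $U$ that can be added as an extra seed; the latter is what justifies the greedy augmentation in Step 1 and the seed-set bound $2k-1$. This equality-case bookkeeping is the technical core of the argument and is precisely what separates this structural result from its $d(U) < \opt$ predecessor in Theorem \ref{theorem: structure thm 1}.
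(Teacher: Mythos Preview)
Your approach differs substantially from the paper's, and the plan as written has a genuine gap at exactly the point you flag as the ``main obstacle.''

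The paper does not anchor the seeds to a fixed optimum $k$-partition at all. Instead, it first proves a one-sided statement (Theorem~\ref{thm:cut-set-recovery-part-1}): there exists $S\subseteq U$ with $|S|\le 2k-1$ such that every minimum $(S,\overline{U})$-terminal cut has cut-set $\delta(U)$. The set $S$ is chosen as an inclusionwise minimal transversal of the collection $\mathcal{C}=\{Q:\overline{U}\subsetneq Q,\ d(Q)\le d(U),\ \delta(Q)\neq\delta(U)\}$; that choice immediately forces every minimum $(S,\overline{U})$-terminal cut to have the right cut-set. The size bound $|S|\le 2k-1$ is then obtained by contradiction, applying the partition uncrossing result (Theorem~\ref{theorem:hypergraph-uncrossing}) to the family of source-minimal minimum $(S\setminus\{u_i\},\overline{U})$-terminal cuts, and in the equality regime invoking the \emph{strict} second conclusion of that theorem by exhibiting an appropriate hyperedge. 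A symmetric application yields $T\subseteq\overline{U}$, and a short submodularity sandwich then shows that with these $S,T$ every minimum $(S,T)$-terminal cut is trapped between the source-minimal and sink-minimal ones, both of which have cut-set $\delta(U)$.

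Your plan, by contrast, fixes an optimum $k$-partition and tries to splice a bad minimum $(S,T)$-terminal cut $A$ into it. The two submodular inequalities you display control $2$-cut values like $d(A\cap V_i)$ and $d(A\cup V_i)$, but they do not directly bound the $k$-cut cost $|\delta(A\cup V_{i^*},\ \overline{A}\cap V_j:j\ne i^*)|$; you do not indicate how to bridge that gap. More importantly, in the equality case $d(A)=\opt$ with $\delta(A)\neq\delta(U)$, your two proposed fixes are both underspecified: (i) you do not say which hyperedge is double-counted or why one must exist, and (ii) ``add a vertex of disagreement as an extra seed'' is exactly the step that, without a global termination argument, could require unboundedly many additions---showing that $k-1$ extra seeds per side suffice is the whole difficulty. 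The paper's minimal-transversal device is precisely what makes the augmentation terminate with a provable size bound, and the strict conclusion of Theorem~\ref{theorem:hypergraph-uncrossing} is what manufactures the missing strict inequality in the equality regime. As written, your plan correctly identifies the obstacle but does not contain the idea that overcomes it.
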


We encourage the reader to compare and contrast 
Theorems \ref{theorem: structure thm 1} and \ref{thm:cut-set-recovery}. The former helps to recover cuts whose cut value is strictly smaller than $\optk$ while the latter helps to recover \emph{cut-sets} whose size is equal to $\optk$. So, the latter theorem is weaker since it only recovers cut-sets, but we emphasize that this is the best possible that one can hope to do (as seen from the spanning-hyperedge-example). However, proving the latter theorem requires us to work with cut-sets (as opposed to cuts) which is a technical barrier to overcome. Indeed, our proof of Theorem \ref{thm:cut-set-recovery} deviates significantly from the proof of Theorem \ref{theorem: structure thm 1} since we have to work with cut-sets. Our proof also deviates from the structural result in \cite{BCW22} that was mentioned in the paragraph above Theorem \ref{thm:cut-set-recovery} since our result is stronger than their result---our result helps to recover the cut-set $\delta(U)$ of an arbitrary cut $(U, \complement{U})$ whose cut value is $d(U)=\optk$ while their result helps only to recover the cut-set $\delta(V_i)$ of a part $V_i$ of an optimum $k$-partition $(V_1, \ldots, V_k)$ for \hkcut whose cut value is $d(V_i)=\optk$; moreover, their proof technique crucially relies on a containment property with respect to the part $V_i$, 
whereas under the hypothesis of our structural theorem, the containment property fails with respect to the set $U$ and consequently, our proof technique differs from theirs.


Theorems \ref{theorem: structure thm 1} and \ref{thm:cut-set-recovery} lead to a deterministic $n^{O(k)}$-time algorithm to enumerate all \minkcutset{s} via a divide-and-conquer approach. We describe this algorithm now: 
For each pair $(S, T)$ of disjoint subsets of vertices $S$ and $T$ with $|S|$, $|T|\le 2k-1$, compute the source minimal minimum $(S, T)$-terminal cut $(A, \complement{A})$; (i) if $G-\delta(A)$ has at least $k$ connected components, then add $\delta(A)$ to the candidate family $\family$; (ii) otherwise, add the set $A$ to a collection $\collection$. We note that the sizes of the family $\family$ and the collection $\collection$ are $O(n^{4k-2})$. Next, for each subset $A$ in the collection $\collection$, recursively enumerate all \textsc{min-$k/2$-cut-set}s 
in the subhypergraphs induced by $A$ and $\complement{A}$ respectively\footnote{Subhypergraph $G[A]$ has vertex set $A$ and contains all hyperedges of $G$ which are entirely contained within $A$.}---denoted $G[A]$ and $G[\complement{A}]$ respectively---and add $\delta(A)\cup F_1\cup F_2$ to the family $\family$ for each $F_1$ and $F_2$ being \textsc{min-$k/2$-cut-set} in $G[A]$ and $G[\complement{A}]$ respectively. Finally, return the subfamily of $k$-cut-sets from the family $\family$ that are of smallest size. 

We sketch the correctness analysis of the above approach: let $F=\delta(V_1, \ldots, V_k)$ be a \minkcutset with $(V_1, \ldots, V_k)$ being an optimum $k$-partition for \hkcut. We will show that the family $\family$ contains $F$. Let $U:=\cup_{i=1}^{k/2}V_i$. We note that $\delta(U)\subseteq F$. 
We have two possibilities: (1) Say $d(U)=|F|$. Then, $d(U)=\optk$. Consequently, by Theorem \ref{thm:cut-set-recovery}, 
the \minkcutset $F$ will be added to the family $\family$ by step (i). (2) Say $d(U)<|F|$. Then, by Theorem \ref{theorem: structure thm 1}, the set $U=\cup_{i=1}^{k/2}V_i$ will be added to the collection $\collection$ by step (ii); moreover, $F_1:=F\cap E(G[U])$ and $F_2:=F\cap E(G[\complement{U}])$ are \textsc{min-$k/2$-cut-sets} in $G[U]$ and $G[\complement{U}]$ respectively and they would have been enumerated by recursion, and hence, the set $\delta(U)\cup F_1 \cup F_2=F$ will be added to the family $\family$. The size of the family $\family$ can be  shown to be $n^{O(k\log{k})}$ and the run-time  is $n^{O(k\log{k})}p$ (see Theorem \ref{theorem: DC enum algorithm}). Using the known fact that the number of \minkcutset{s} in a $n$-vertex hypergraph is $O(n^{2k-2})$, we can 
improve the run-time analysis of this approach to $n^{O(k)}p$ (see Lemma \ref{lemma:DC-better-run-time}). 

\mypara{\enumMMh.} Next, we focus on \enumMMh. There is a fundamental technical issue in enumerating \minmaxkcutset{s} as opposed to \minkcutset{s}. We highlight this technical issue now. Suppose we find an optimum $k$-partition $(V_1, \ldots, V_k)$ for \mmh (say via Chandrasekaran and Chekuri's algorithm \cite{CC21}) and store only the \minmaxkcutset $F=\delta(V_1, \ldots, V_k)$ but forget to store the partition $(V_1, \ldots, V_k)$; now, by knowing a \minmaxkcutset $F$, can we recover \emph{some} optimum $k$-partition for \mmh (not necessarily $(V_1, \ldots, V_k)$)? Or by knowing a \minmaxkcutset $F$, is it even possible to find 
the value $\optmm$ 
without solving \mmh from scratch again---i.e., is there an advantage to knowing a \minmaxkcutset in order to solve \mmh? We are not aware of such an advantage. This is in stark contrast to \hkcut where knowing a \minkcutset enables a linear-time solution to \hkcut\footnote{Suppose we know a \minkcutset $F$. Then consider the connected components $Q_1, \ldots, Q_t$ in $G-F$ and create a partition $(P_1, \ldots, P_k)$ by taking $P_i=Q_i$ for every $i\in [k-1]$ and $P_k=\cup_{j=k}^t Q_j$; such a $k$-partition $(P_1, \ldots, P_k)$ will be an optimum $k$-partition for \hkcut.}. 

Why is this issue significant while solving \enumMMh? 
We recall that in our approach for \enumhkcut, the algorithm computed a polynomial-sized family $\family$ containing all \minkcutsets and returned the ones with smallest size---the smallest size ones will exactly be \minkcutsets. It is unclear if a similar approach could work for enumerating \minmaxkcutsets: suppose we do have an algorithm to enumerate a polynomial-sized family $\family$ containing all \minmaxkcutset{s}; now, in order to return all \minmaxkcutset{s} (which is a subfamily of $\family$), note that we need to identify them among the ones in the family $\family$---i.e., 
we need to verify if a given subset $F\in \family$ of hyperedges is a \minmaxkcutset; this verification problem is closely related to the question mentioned in the previous paragraph. We do not know how to address this verification problem directly. So, our algorithmic approach for \enumMMh has to overcome this technical issue. 

Our ingredient to overcome this technical issue is to enumerate \emph{representatives} for \minmaxkcutsets. 
For a $k$-partition $(V_1, \ldots, V_k)$ and disjoint subsets $U_1, \ldots, U_k\subseteq V$, we will call the $k$-tuple $(U_1, \ldots, U_k)$ to be a \emph{$k$-cut-set representative} of $(V_1, \ldots, V_k)$ if $U_i\subseteq V_i$ and $\delta(U_i)=\delta(V_i)$ for all $i\in [k]$. 
We note that a fixed $k$-partition $(V_1, \ldots, V_k)$ could have several $k$-cut-set representatives and a fixed $k$-tuple $(U_1, \ldots, U_k)$ could be the $k$-cut-set representative of several $k$-partitions. 
Yet, it is possible to efficiently verify 
if a given $k$-tuple $(U_1, \ldots, U_k)$ is a $k$-cut-set representative (Theorem \ref{Thm: hyp minmax-1}). 
Moreover, knowing a $k$-cut-set representative $(U_1, \ldots, U_k)$ of a $k$-partition $(V_1, V_2, \ldots, V_k)$ allows one to recover the $k$-cut-set $F:=\delta(V_1, \ldots, V_k)$ since $F=\cup_{i=1}^k \delta(U_i)$. 
Thus, in order to enumerate all \minmaxkcutsets, it suffices to enumerate $k$-cut-set representatives of \emph{all} optimum $k$-partitions for \mmh. 
At this point, the astute reader may wonder 
if there exists a polynomial-sized family of $k$-cut-set representatives of all optimum $k$-partitions for \mmh given that the number of optimum $k$-partitions for \mmh could be exponential. 
For example, is there a polynomial-sized family of $k$-cut-set representatives of all optimum $k$-partitions for \mmh in the spanning-hyperedge-example? 
Indeed, in the spanning-hyperedge-example, even though the number of optimum $k$-partitions for \mmh is exponential, there exists a $(k!\binom{n}{k})$-sized family of $k$-cut-set representatives of all optimum $k$-partitions: 
consider the family $\{(\{v_1\}, \ldots, \{v_k\}): v_1, \ldots, v_k\in V, v_i\neq v_j\ \forall\ \text{distinct } i, j\in [k]\}$. 

It turns out that Theorems \ref{theorem: structure thm 1} and \ref{thm:cut-set-recovery} are strong enough to enable efficient enumeration of 
$k$-cut-set representatives of all optimum $k$-partitions for \mmh. We describe the algorithm to achieve this: 
For each pair $(S, T)$ of disjoint subsets of vertices with $|S|$, $|T|\le 2k-1$, compute the source minimal minimum $(S, T)$-terminal cut $(U, \complement{U})$ and add $U$ to a candidate collection $\collection$. We note that the size of the collection $\collection$ is $O(n^{4k-2})$. Next, for each $k$-tuple $(U_1, \ldots, U_k)\in \collection^k$, verify if $(U_1, \ldots, U_k)$ is a $k$-cut-set representative (using Theorem \ref{Thm: hyp minmax-1}) and if so, then add the $k$-tuple to the candidate family $\mathcal{D}$. Finally, return $\arg\min\{\max_{i=1}^k d(U_i): (U_1, \ldots, U_k)\in \mathcal{D}\}$, i.e., prune and return the subfamily of $k$-cut-set representatives $(U_1, \ldots, U_k)$ from the family $\mathcal{D}$ that have minimum $\max_{i=1}^k d(U_i)$. 

We note that the size of the family $\mathcal{D}$ is $n^{O(k^2)}$ and consequently, the run-time is $n^{O(k^2)}p$. We sketch the correctness analysis of the above approach: let $(V_1, \ldots, V_k)$ be an optimum $k$-partition for \mmh. We will show that the family $\mathcal{D}$ contains a $k$-cut-set representative of $(V_1, \ldots, V_k)$. By noting that $\optmm\le \optk$ and by Theorems \ref{theorem: structure thm 1} and \ref{thm:cut-set-recovery}, for every $i\in [k]$, we have a set $U_i$ in the collection $\collection$ with $U_i\subseteq V_i$ and $\delta(U_i)=\delta(V_i)$. Hence, the $k$-tuple $(U_1, \ldots, U_k)\in \collection^k$ is a $k$-cut-set representative and it will be added to the family $\mathcal{D}$. The final pruning step will not remove $(U_1, \ldots, U_k)$ from the family $\mathcal{D}$ and hence, it will be in the subfamily returned by the algorithm. 

\mypara{Significance of our technique.}
As mentioned earlier, our techniques build on the structural theorems that appeared in previous works \cite{CC20, CC21, BCW22}. The main technical novelty of our contribution lies in Theorem \ref{thm:cut-set-recovery} which can be viewed as the culmination of structural theorems developed in those previous works. 
We also emphasize that using minimum $(s,t)$-terminal cuts to solve global partitioning problems is not a new technique per se (e.g., minimum $(s, t)$-terminal cut is the first and most natural approach to solve global minimum cut). This technique of using minimum $(s,t)$-terminal cuts to solve global partitioning problems has a rich variety of applications 
in combinatorial optimization: 
e.g., 
(1) it was used to design the first efficient algorithm for \gkcut for fixed $k$ \cite{GH94}, 
(2) it was used to design efficient algorithms for certain constrained submodular minimization  problems \cite{GR95, NSZ19},
and (3) more recently, it was used to design fast algorithms for global minimum cut in graphs as well as to obtain fast Gomory-Hu trees in unweighted graphs \cite{LP20, AKT21}. 
The applicability of this technique relies on identifying and proving appropriate structural results. Our Theorem \ref{thm:cut-set-recovery} is such a structural result. 
The merit of the structural result lies in its ability to solve two different enumeration problems in hypergraph $k$-partitioning which was not possible via structural theorems that were developed before.  
Moreover, it leads to the first polynomial bound on the number of \minmaxkcutsets in hypergraphs for every fixed $k$. 



\mypara{Organization.} 
We discuss related work in Section \ref{sec:related-work}. 
In Section \ref{sec:prelims}, we recall properties of the hypergraph cut function. 
In Section \ref{sec:cut-set-recovery-part-1}, we prove a special case of Theorem \ref{thm:cut-set-recovery}. In Section \ref{sec:cut-set-recovery}, we use this special case to prove  Theorem \ref{thm:cut-set-recovery}. 
In Section \ref{sec:enumhkcut-algo}, we design an efficient algorithm for \enumhkcut and prove Theorem \ref{thm:k-cut-set-enumeration}. In Section \ref{sec:enummmh-algo}, we design an efficient algorithm for \enumMMh and prove Theorem \ref{thm:minmax-enumeration}. We present a lower bound example in Section \ref{sec:lower-bound}. We conclude with an open question in Section \ref{sec:conclusion}. 

\subsection{Related work}
\label{sec:related-work}
We briefly discuss the status of the enumeration problems for $k=2$ followed by the status for $k\ge 2$ in graphs and hypergraphs. 


\mypara{Enumeration problems for $k=2$.} For $k=2$, both \enumMMh and \enumhkcut are equivalent to enumerating optimum solutions for global minimum cut in hypergraphs. 
For graphs that are connected, it is well-known that the number of minimum cuts and the number of $\alpha$-approximate minimum cuts are at most $\binom{n}{2}$ and $O(n^{2\alpha})$ respectively and they can all be enumerated in polynomial time for constant $\alpha$ \cite{DKL76, Kar93, GR95, NNI97, HW96, CQX20}. For connected hypergraphs, the number of minimum cuts can be exponential as seen from the spanning-hyperedge-example. However, recent results have shown that the number of minimum cut-sets in a hypergraph is at most $\binom{n}{2}$ via several different techniques and they can all be enumerated in polynomial time \cite{ChekuriX18, GKP17, CXY19, FPZ19, BCW22}. On the other hand, there exist hypergraphs with exponential number of $2$-approximate minimum cut-sets.\footnote{Consider the $n$-vertex hypergraph $G=(V, E)$ obtained from the complete bipartite graph $K_{n/2, n/2}=(L\cup R, E')$ by adding $n^2/4$ copies of two hyperedges $e_1$ and $e_2$, where $e_1:=L$  and $e_2:=R$.}

\mypara{\gkcut and \enumgkcut.} 
When $k$ is part of input, \gkcut is NP-hard \cite{GH94} and admits a $2(1-1/k)$-approximation \cite{SV95, RS08, ZNI05, GLL18-FOCS, Q19}. Manurangsi \cite{Ma18} showed
that there is no polynomial-time $(2-\epsilon)$-approximation for any
constant $\epsilon>0$ assuming the \emph{Small Set Expansion
  Hypothesis} \cite{RS10}. 
  We note that \gkcut is $W[1]$-hard when parameterized by $k$ \cite{DEFPR03} and admits a fixed-parameter approximation scheme when parameterized by $k$ \cite{GLL18-SODA, GLL18-FOCS, KL20, LSS20}, and 
  is fixed-parameter tractable when parameterized by $k$ and the solution size \cite{KT11}.

\gkcut for fixed $k$ was shown to be polynomial-time solvable by Goldschmidt and Hochbaum \cite{GH94}. Subsequently, Karger and Stein \cite{KS96} gave a randomized polynomial-time algorithm whose analysis also showed that the number of optimum $k$-partitions in a connected graph is $O(n^{2k-2})$ and they can all be enumerated in polynomial time for every fixed $k$ (also see \cite{KYN07, Xi08, Th08, CQX20}). The number of optimum $k$-partitions has recently been improved to $O(n^k)$ for fixed $k$ thereby leading to a faster algorithm for \gkcut for fixed $k$ \cite{GLL19-STOC, GLL20-STOC, GHLL20}.

\mypara{\hkcut and \enumhkcut.} 
When $k$ is part of input, \hkcut is at least as hard as the \emph{densest $k$-subgraph} problem \cite{CL20}. Combined with results in \cite{Ma17dks}, this implies that \hkcut is unlikely to have a sub-polynomial factor approximation ratio. 
Moreover, \hkcut is $W[1]$-hard even when parameterized by $k$
\emph{and} the solution size (see \cite{CC20}). These two hardness results illustrate that \hkcut differs significantly from \gkcut in complexity.

\hkcut for fixed $k$ was recently shown to be polynomial-time solvable via a randomized algorithm \cite{CXY19, FPZ19}. The analysis of the randomized algorithm also showed that the number of \minkcutsets is $O(n^{2k-2})$ and they can all be enumerated in randomized polynomial time. A deterministic polynomial-time algorithm was given by Chandrasekaran and Chekuri \cite{CC20}. Subsequently, a deterministic polynomial-time algorithm for \enumhkcut for fixed $k$ was given by Beideman, Chandrasekaran, and Wang \cite{BCW22}.

\mypara{\mmg and \enumMMg.} When $k$ is part of input, \mmg is NP-hard \cite{CW21} while its approximability is open---we do not yet know if it admits a constant factor approximation. When parameterized by $k$, it is $W[1]$-hard and admits a fixed-parameter approximation scheme \cite{CW21}. 

\mmg for fixed $k$ is polynomial-time solvable via the following observation (see \cite{CW21, CC21}): in connected graphs, an optimum $k$-partition for \mmg is a $k$-approximate solution to \gkcut. The randomized algorithm of Karger and Stein implies that the number of $k$-approximate solutions to \gkcut is $n^{O(k^2)}$ and they can all be enumerated in polynomial time \cite{KS96, CQX20, GLL20-STOC, GHLL20}. These two facts together imply that \mmg can be solved in time $n^{O(k^2)}$ and moreover, the number of optimum $k$-partitions for \mmg in a connected graph is $n^{O(k^2)}$ and they can all be enumerated in polynomial time for constant $k$. 

\mypara{\mmh and \enumMMh.} \mmh was discussed as early as 1973 by Lawler \cite{La73}. When $k$ is part of input, \mmh is at least as hard as the densest $k$-subgraph problem (this follows from the reduction in \cite{CL20} and was observed by \cite{CZ-private}). 
For fixed $k$, the approach for \mmg described above does not extend to \mmh. This is because, the number of $k$-approximate solutions to \hkcut can be exponential and hence, they cannot be enumerated efficiently (e.g., we have already seen that the number of $2$-approximate minimum cut-sets in a hypergraph can be exponential). Chandrasekaran and Chekuri \cite{CC21} gave a deterministic polynomial-time algorithm for the more general problem of \mmsymsubmodkpart for fixed $k$ which in turn, implies that \mmh is also solvable efficiently for fixed $k$. Their algorithm finds an optimum $k$-partition for \mmh and is not conducive to enumerate all \minmaxkcutsets. We emphasize that no polynomial bound on the number of \minmaxkcutsets for fixed $k$ was known prior to our work. 

\medskip
For a detailed discussion on other hypergraph $k$-partitioning problems that are special cases of \submodkpart, we  refer the reader to \cite{ZNI05, Zhthesis, OFN12}.

\subsection{Preliminaries} \label{sec:prelims}
Let $G=(V,E)$ be a hypergraph. Throughout, we will follow the notation mentioned in the second paragraph of Section \ref{sec:techniques}. For disjoint $A,B\subseteq V$, we define $E(A,B):=\{e\in E:e\subseteq A\cup B,e\cap A\neq\emptyset, e\cap B\neq\emptyset\}$, and $E[A]:=\{e\in E:e\subseteq A\}$.
We will repeatedly rely on the fact that the hypergraph cut function $d:2^V\rightarrow \R_+$ is symmetric and submodular. We recall that a set function $f:2^V\rightarrow \R$ is \emph{symmetric} if $f(U)=f(\complement{U})$ for all subsets $U\subseteq V$ and is \emph{submodular} if $f(A) + f(B) \ge f(A\cap B)+f(A\cup B)$ for all subsets $A, B\subseteq V$. 

We will need the following partition uncrossing theorem that was proved in previous works on \hkcut and \enumhkcut (see Figure \ref{figure:uncrossing} for an illustration of the sets that appear in the statement of Theorem \ref{theorem:hypergraph-uncrossing}): 

\begin{restatable}{theorem}{thmHypergraphUncrossing}
\label{theorem:hypergraph-uncrossing} \cite{CC20, BCW22}
Let $G=(V,E)$ be a hypergraph, $k\ge 2$ be an integer and
$\emptyset\neq R\subsetneq U\subsetneq V$. Let
$S=\{u_1,\ldots, u_p\}\subseteq U\setminus R$ for $p\ge 2k-2$. Let
$(\complement{A_i}, A_i)$ be a minimum
$((S\cup R)\setminus \set{u_i}, \complement{U})$-terminal cut. Suppose
that $u_i\in A_i\setminus (\cup_{j\in [p]\setminus \set{i}}A_j)$ for
every $i\in [p]$.  Then, the following two hold: 
\begin{enumerate}
    \item There exists a $k$-partition
$(P_1, \ldots, P_k)$ of $V$ with $\complement{U}\subsetneq P_k$ such
that
\[
|\delta(P_1, \ldots, P_k)| \le \frac{1}{2}\min\{\deltacard(A_i) + \deltacard(A_j): i, j\in [p], i\neq j\}.
\]
\item 
Moreover, if there exists a hyperedge 
$e\in E$ such that 
$e$ intersects $W:=\cup_{1\le i<j\le p}(A_i\cap A_j)$, $e$ intersects $Z:=\cap_{i\in [p]} \complement{A_i}$, and $e$ is contained in $W\cup Z$, 
then the inequality 
in the previous conclusion 
is strict.
\end{enumerate}
\end{restatable}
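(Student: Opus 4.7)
My plan is to prove Theorem \ref{theorem:hypergraph-uncrossing} via an uncrossing argument over the $p$ minimum terminal cuts $(\complement{A_i},A_i)$, leveraging submodularity and symmetry of the hypergraph cut function together with the minimality of each such cut. First I would record structural preliminaries: setting $B_i:=A_i\setminus\bigcup_{j\ne i}A_j$, the hypothesis says $u_i\in B_i$, so the $B_i$'s are non-empty and pairwise disjoint; since $p\ge 2$, $\complement U\subseteq\bigcap_iA_i\subseteq W$; and since $R\subseteq(S\cup R)\setminus\{u_i\}\subseteq\complement{A_i}$ for every $i$, also $R\subseteq Z$. In particular, any part of the eventual $k$-partition that contains $W\cup Z$ will strictly contain $\complement U$ (because $R\subseteq Z$ is disjoint from $\complement U$).

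\textbf{Pairwise uncrossing.} The central inequality I would prove is
\[
d(A_i\cup A_j)\le\tfrac12\bigl(d(A_i)+d(A_j)\bigr)\qquad\text{for every }i\ne j.
\]
The argument: $(\complement{A_i\cap A_j},A_i\cap A_j)$ is a valid $((S\cup R)\setminus\{u_i\},\complement U)$-terminal cut (since $\complement{A_i}\subseteq\complement{A_i\cap A_j}$ on the source side and $\complement U\subseteq A_i\cap A_j$ on the sink side) and, by symmetry, also a valid $((S\cup R)\setminus\{u_j\},\complement U)$-terminal cut. Minimality of $(\complement{A_i},A_i)$ and $(\complement{A_j},A_j)$ then gives $d(A_i\cap A_j)\ge\max\{d(A_i),d(A_j)\}\ge\tfrac12(d(A_i)+d(A_j))$, and combining with submodularity $d(A_i)+d(A_j)\ge d(A_i\cap A_j)+d(A_i\cup A_j)$ yields the bound.

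\textbf{Constructing the $k$-partition.} Let $(i^*,j^*)\in\arg\min_{i\ne j}(d(A_i)+d(A_j))$. I would construct $(P_1,\ldots,P_k)$ so that $W\cup Z\subseteq P_k$ and the remaining $k-1$ parts are carved out of $V\setminus(W\cup Z)=\bigcup_iB_i$ using the seed vertices $u_i$ as anchors; the hypothesis $p\ge 2k-2$ ensures enough seeds are available and they lie in pairwise disjoint $B_i$'s. The technical core is to argue that a suitable coarsening of the refinement induced by all $p$ cuts yields $k$ parts whose total crossing-hyperedge count is bounded by $\tfrac12(d(A_{i^*})+d(A_{j^*}))$; I would do this by iterated invocations of the pairwise uncrossing bound above, charging ``extra'' cut hyperedges against the slack in non-tight submodular inequalities and using the minimality of the remaining $A_i$'s to isolate each seed into its own part.

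\textbf{Strictness and main obstacle.} For the strictness clause, a hyperedge $e\subseteq W\cup Z$ with $e\cap W,e\cap Z\ne\emptyset$ is cut by at least two $A_i$'s---a vertex in $e\cap W$ has signature $S(v)\subseteq[p]$ of size $\ge 2$ while a vertex in $e\cap Z$ has empty signature, so $e\in\delta(A_i)$ for every $i\in S(v)$, contributing $|S(v)|\ge 2$ to $\sum_id(A_i)$---while being \emph{not} cut by the $k$-partition (since $e\subseteq W\cup Z\subseteq P_k$). Tracking this extra slack through the submodular inequality yields the strict version. The main obstacle is the third step: transforming the pairwise uncrossing bound into a bound on the crossing count of an actual $k$-partition requires a delicate iterative argument over the $p\ge 2k-2$ cuts, carefully tracking how each remaining $A_i$ is used to separate a new seed into its own part without inflating the crossing count, and is the place where the hypothesis $p\ge 2k-2$ (rather than just $p\ge 2$) is genuinely used.
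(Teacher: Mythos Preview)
The paper does not prove this theorem; it is quoted from \cite{CC20, BCW22} as a known tool, so there is no in-paper proof to compare against. I will therefore evaluate your proposal on its own.

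Your preliminaries are correct, and your pairwise uncrossing inequality
\[
d(A_i\cup A_j)\le \tfrac12\bigl(d(A_i)+d(A_j)\bigr)
\]
is both correct and indeed a key ingredient of the published proof. The problem is that this is essentially all you actually prove. The heart of the theorem is the construction of the $k$-partition together with the bound on $|\delta(P_1,\ldots,P_k)|$, and here your proposal is only a hope: ``iterated invocations of the pairwise uncrossing bound, charging extra cut hyperedges against the slack in non-tight submodular inequalities'' is not an argument. Iterating the pairwise inequality yields bounds on $d(\bigcup_{i\in I}A_i)$, i.e., on values of $2$-cuts; it does not by itself bound the number of hyperedges crossing a genuine $k$-partition. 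For instance, if one takes $P_k=A_{i^*}\cup A_{j^*}$ and tries to split $\complement{P_k}$ into $k-1$ nonempty parts, the internal crossings inside $\complement{P_k}$ are completely uncontrolled by your inequality, so the bound does not follow. You recognize this as ``the main obstacle,'' but you do not resolve it, and this is exactly where the hypothesis $p\ge 2k-2$ (rather than $p\ge 2$) does real work.

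There is also an internal mismatch in your outline. Your strictness argument is a counting argument: the special hyperedge $e\subseteq W\cup Z$ contributes at least $2$ to $\sum_i d(A_i)$ but $0$ to $|\delta(\mathcal P)|$, which only yields strictness if the main inequality is proved by a global comparison between $\sum_i d(A_i)$ (or a related sum over all $p$ cuts) and $|\delta(\mathcal P)|$. But your step 3 never sets up such a global comparison; it only speaks of iterated pairwise uncrossing. The published proofs in \cite{CC20, BCW22} do proceed via an explicit construction of $(P_1,\ldots,P_k)$ together with a hyperedge-by-hyperedge accounting over all $p$ sets $A_i$ simultaneously, and the strict inequality falls out of that accounting exactly as you describe. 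So your strictness paragraph is pointing at the right mechanism, but it presupposes the very argument that is missing from your step 3.
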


\begin{figure}[htb]
\centering
\includegraphics[width=0.6\textwidth]{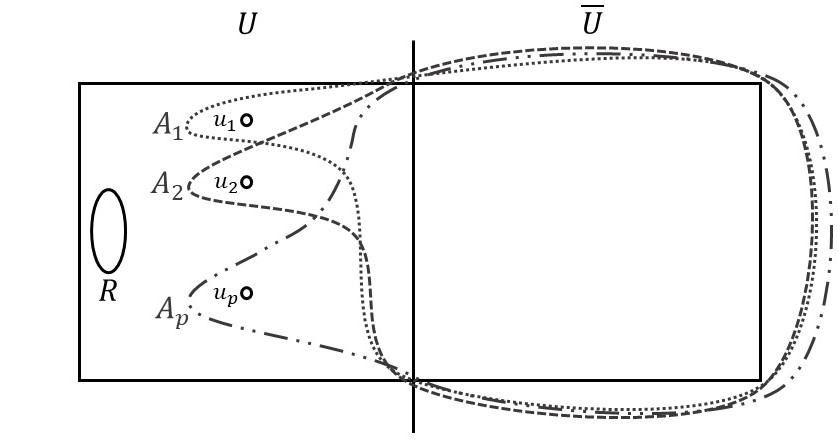}
\caption{Illustration of the sets that appear in the statement of Theorem \ref{theorem:hypergraph-uncrossing}.}
\label{figure:uncrossing}
\end{figure}

\section{A special case of Theorem \ref{thm:cut-set-recovery}}
\label{sec:cut-set-recovery-part-1}

The following is the main theorem of this section. 
Theorem \ref{thm:cut-set-recovery-part-1} implies Theorem \ref{thm:cut-set-recovery} in the special case where the $2$-partition $(U, \complement{U})$ of interest to Theorem \ref{thm:cut-set-recovery} is such that $|\complement{U}|\le 2k-1$. 

\begin{theorem}\label{thm:cut-set-recovery-part-1}
Let $G=(V,E)$ be a hypergraph and let $\opt$ be the optimum value of \hkcut in $G$ for some integer $k\ge 2$. Suppose $(U,\complement{U})$ is a $2$-partition of $V$ with $d(U) = \opt$.
Then, there exists a set $S \subseteq U$ with $|S| \leq 2k-1$ such that every minimum $(S,\overline{U})$-terminal cut $(A, \overline{A})$ satisfies $\delta(A) = \delta(U)$.
\end{theorem}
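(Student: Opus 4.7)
The plan is to prove the statement by contradiction, combining a greedy construction with Theorem~\ref{theorem:hypergraph-uncrossing}. First I will suppose that for every $S \subseteq U$ of size at most $2k-1$ there is a minimum $(S, \complement{U})$-terminal cut $(A, \complement{A})$ with $\delta(A) \neq \delta(U)$. Because $(U, \complement{U})$ is itself a feasible $(S, \complement{U})$-terminal cut with cut-set $\delta(U)$, any such bad minimum cut satisfies $A \subsetneq U$ and $d(A) \leq d(U) = \opt$.

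Next I will build a sequence of vertices $r, u_1, \ldots, u_{2k-2} \in U$ designed to feed into Theorem~\ref{theorem:hypergraph-uncrossing} with $R := \{r\}$ and $S := \{u_1, \ldots, u_{2k-2}\}$. Starting from an arbitrary $r \in U$, I will iteratively apply the contradiction hypothesis to $\{r, u_1, \ldots, u_i\}$ (whose size $i+1 \leq 2k-1$) to obtain a source-minimal minimum terminal cut whose source side $B_i$ is a strict subset of $U$, and pick $u_{i+1} \in U \setminus B_i$. The ``cross-like'' hypothesis required by the uncrossing theorem---namely, $u_i \in A_i \setminus \bigcup_{j \neq i} A_j$ for $A_i$ the sink side of the minimum $((R \cup S) \setminus \{u_i\}, \complement{U})$-terminal cut---is partially automatic: for $j \neq i$ the vertex $u_i$ lies in the source set of $A_j$'s cut and is therefore on its source side, so $u_i \notin A_j$. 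The remaining property $u_i \in A_i$ (equivalently, $u_i$ ends up on the sink side when removed from the source) is the nontrivial invariant I plan to maintain during the construction, using source-minimality of the chosen cuts together with the fact that each $u_i$ was chosen outside the source side $B_{i-1}$.

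Once the hypothesis of Theorem~\ref{theorem:hypergraph-uncrossing} is in place, its first conclusion yields a $k$-partition $(P_1, \ldots, P_k)$ with $\complement{U} \subsetneq P_k$ and $|\delta(P_1, \ldots, P_k)| \leq \tfrac{1}{2}\min_{i \neq j}(d(A_i) + d(A_j)) \leq d(U) = \opt$. To contradict the optimality of $\opt$ as the minimum $k$-cut value I need strict inequality, so I will split into two cases. If $d(A_i) < \opt$ for some $i$, strict inequality follows by pairing this $A_i$ with any other $A_j$. Otherwise every $d(A_i) = \opt$, and in this tight case I will invoke the second conclusion of Theorem~\ref{theorem:hypergraph-uncrossing} by producing a hyperedge $e$ with $e \cap W \neq \emptyset$, $e \cap Z \neq \emptyset$, and $e \subseteq W \cup Z$; the existence of such an $e$ will be extracted from the guaranteed discrepancy $\delta(\complement{A_i}) \neq \delta(U)$ by a case analysis on hyperedges in the symmetric difference $\delta(\complement{A_i}) \triangle \delta(U)$, together with the containments $\complement{A_i} \subseteq U$ and $\complement{U} \subseteq A_i$.

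The main obstacle I anticipate is ensuring that when $u_i$ is removed from the source of the terminal cut computation, it actually lands on the sink side. High-cost hyperedges linking $u_i$ to other $u_j$'s could pull $u_i$ back into the source side even after its removal, so the greedy ordering and the choice of source-minimal cuts must be designed carefully to avoid this. This is the technical heart of the construction and the point at which the argument most substantially diverges from the proof of Theorem~\ref{theorem: structure thm 1}, since here we must control \emph{cut-sets} $\delta(A)$ rather than cuts $A$ themselves, and the tight equality $d(U) = \opt$ forces us to rely on the more subtle second conclusion of the partition-uncrossing theorem rather than just its first conclusion.
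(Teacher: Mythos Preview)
Your high-level plan matches the paper's: both apply Theorem~\ref{theorem:hypergraph-uncrossing} to a carefully chosen $R \cup S \subseteq U$ of size $2k-1$, and both recognize that in the tight case $d(A_i) = \opt$ one must invoke the second conclusion of that theorem (via a witness hyperedge $e$) to force strict inequality. Where the approaches diverge is in how the set $S$ is produced, and this is where your proposal has a genuine gap.

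The paper does \emph{not} build $S$ greedily. It defines the collection
\[
\mathcal{C} := \{\, Q \subseteq V : \overline{U} \subsetneq Q,\ d(Q) \leq d(U),\ \delta(Q) \neq \delta(U) \,\}
\]
and takes $S$ to be an inclusion-wise minimal transversal of $\mathcal{C}$ inside $U$. This choice immediately gives the conclusion of the theorem (any minimum $(S,\overline{U})$-terminal cut $(A,\overline{A})$ with $\delta(A) \neq \delta(U)$ would put $\overline{A}$ in $\mathcal{C}$, yet $S \cap \overline{A} = \emptyset$), and the work goes into bounding $|S| \leq 2k-1$. The minimal-transversal property is precisely what delivers the crucial hypothesis $u_i \in A_i$: minimality of $S$ yields, for each $i$, a witness $B_i \in \mathcal{C}$ with $B_i \cap S = \{u_i\}$, and a submodularity argument forces $B_i \subseteq \overline{H_{S-u_i}}$, so $\overline{H_{S-u_i}} \in \mathcal{C}$ and the transversal $S$ must hit it at $u_i$.

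Your greedy construction does not secure $u_i \in A_i$. You choose $u_i$ outside $B_{i-1}$, the source side of the source-minimal minimum $(\{r,u_1,\ldots,u_{i-1}\},\overline{U})$-terminal cut. But the cut $(\overline{A_i},A_i)$ required by Theorem~\ref{theorem:hypergraph-uncrossing} has source set $\{r,u_1,\ldots,u_{i-1},u_{i+1},\ldots,u_{2k-2}\}$, which includes the vertices $u_{i+1},\ldots,u_{2k-2}$ chosen \emph{after} $u_i$. Adding these to the source can enlarge the source-minimal source side and absorb $u_i$; you acknowledge this (``High-cost hyperedges linking $u_i$ to other $u_j$'s could pull $u_i$ back into the source side'') but offer no mechanism to prevent it. Source-minimality together with $u_i \notin B_{i-1}$ gives no control over the position of $u_i$ relative to a cut whose source set contains vertices not yet chosen when $u_i$ was picked. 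This is the step where the minimal-transversal definition does work that a forward greedy does not replicate.

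A secondary gap concerns the witness hyperedge $e$. The paper needs more than just $\delta(\overline{A_i}) \neq \delta(U)$: it proves that $H_{S-u_\ell} \subseteq H_{S-u_i} \cup H_{S-u_j}$ for all $i\neq j$ and all $\ell$, which forces $W \subseteq \overline{H_{S-u_p}}$ and lets one locate $e \in \delta(W) \setminus \delta(\overline{H_{S-u_p}})$ with the required containment $e \subseteq W \cup Z$. Your ``case analysis on the symmetric difference'' is not specific enough; the containment $e \subseteq W \cup Z$ is the delicate part, and the paper's argument for it leans on the equalities $d(H_{S-u_i}) = d(U)$ and the resulting lattice-like structure of the cuts $H_{S-u_i}$, all of which flow from the minimal-transversal setup rather than from a sequential greedy choice.
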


\begin{proof}

Consider the collection
\[\mathcal{C} := \{ Q \subseteq V \colon \ \overline{U} \subsetneq Q,\ d(Q) \leq d(U), \text{ and } \delta(Q) \neq \delta(U) \}.\]
Let $S$ be an inclusion-wise minimal subset of $U$ such that $S\cap Q\neq \emptyset$ for all $Q\in \mathcal{C}$,  i.e., the set $S$ is completely contained in $U$ and is a minimal transversal of the collection $\mathcal{C}$. Proposition \ref{prop:div-and-conq-dA-equal-dU} and Lemma  \ref{lem:cut-set-recovery-part-one-S-is-small} complete the proof of Theorem  \ref{thm:cut-set-recovery-part-1} for this choice of $S$. 
\end{proof}


\begin{proposition}\label{prop:div-and-conq-dA-equal-dU}
Every minimum $(S, \overline{U})$-terminal cut $(A, \overline{A})$ has $\delta(A) = \delta(U)$.
\end{proposition}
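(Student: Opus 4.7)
The plan is to use the defining property of $S$ as a transversal of $\mathcal{C}$ to rule out any minimum $(S,\overline{U})$-terminal cut whose cut-set differs from $\delta(U)$.

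First I would observe that $(U,\overline{U})$ is itself a valid $(S,\overline{U})$-terminal cut, since $S \subseteq U$ and hence $S \subseteq U$ is the source side while $\overline{U}$ is the sink side. Its value is $d(U)$, so the minimum $(S,\overline{U})$-terminal cut value is at most $d(U)$. Now let $(A,\overline{A})$ be any minimum $(S,\overline{U})$-terminal cut, so $S \subseteq A$, $\overline{U} \subseteq \overline{A}$ (equivalently $A \subseteq U$), and $d(A) \le d(U)$.

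Next I would split into two cases on $\overline{A}$. If $\overline{A} = \overline{U}$, then $A = U$ and $\delta(A) = \delta(U)$, so we are done. Otherwise $\overline{U} \subsetneq \overline{A}$. In this case, using symmetry of the cut function we have $d(\overline{A}) = d(A) \le d(U)$. The key step is now to check whether $\overline{A}$ lies in $\mathcal{C}$: it contains $\overline{U}$ strictly, and its cut value is at most $d(U)$, so the only condition that could fail is $\delta(\overline{A}) \neq \delta(U)$. If that condition held, then $\overline{A} \in \mathcal{C}$, and since $S$ is a transversal of $\mathcal{C}$ we would have $S \cap \overline{A} \neq \emptyset$; but this contradicts $S \subseteq A$. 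Hence $\delta(\overline{A}) = \delta(U)$, and by symmetry $\delta(A) = \delta(\overline{A}) = \delta(U)$.

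I do not expect any serious obstacle here: the proof is essentially a case analysis that invokes only the transversal property of $S$ (not yet its minimality, which is reserved for bounding $|S|$ in the subsequent lemma) together with the basic fact that $S \subseteq U$ and $\overline{U} \subseteq \overline{A}$ force $S$ and $\overline{A}$ to be disjoint. The only thing to be mildly careful about is that the claim is stated for \emph{every} minimum $(S,\overline{U})$-terminal cut rather than just the source-minimal one, but the argument above does not use any minimality of $A$ beyond $d(A) \le d(U)$, so it covers all such cuts uniformly.
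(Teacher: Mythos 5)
Your proof is correct and follows essentially the same route as the paper's: observe that $(U,\overline{U})$ witnesses $d(A)\le d(U)$, then show $\overline{A}\notin\mathcal{C}$ via the transversal property of $S$ (since $S\subseteq A$ forces $S\cap\overline{A}=\emptyset$), and conclude $\delta(\overline{A})=\delta(U)$ by elimination of the only condition of membership in $\mathcal{C}$ that could still hold.
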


\begin{proof}
Let $(A, \complement{A})$ be a minimum $(S, \complement{U})$-terminal cut. If $A=U$, then we are done, so we may assume that $A\neq U$. This implies that $S\subseteq A$ and $\complement{U}\subsetneq \complement{A}$. 
Since $(U, \overline{U})$ is a $(S, \overline{U})$-terminal cut, we have that $d(\complement{A})=d(A) \leq d(U)$. 
Since $S$ intersects every set in the collection $\collection$, we have that $\complement{A}\not\in \collection$. Hence, $\delta(\complement{A})=\delta(U)$, and by symmetry of cut-sets, $\delta(A)=\delta(U)$. 

\end{proof}


\begin{lemma}\label{lem:cut-set-recovery-part-one-S-is-small}
The size of the subset $S$ is at most $2k-1$.
\end{lemma}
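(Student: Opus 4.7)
The plan is to assume for contradiction that $|S| = p \geq 2k$ and to use the partition-uncrossing theorem (Theorem~\ref{theorem:hypergraph-uncrossing}) to produce a $k$-partition of $V$ with cut size strictly smaller than $\opt$, contradicting the definition of $\opt$.

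First, I extract witnesses from the collection $\mathcal{C}$. By the inclusion-wise minimality of $S$, for every $i \in [p]$ there exists $Q_i \in \mathcal{C}$ with $Q_i \cap (S \setminus \{u_i\}) = \emptyset$; since $S$ itself is a transversal of $\mathcal{C}$, in fact $Q_i \cap S = \{u_i\}$. Each witness $Q_i$ then satisfies $\overline{U} \subsetneq Q_i$, $u_i \in Q_i$, $u_j \notin Q_i$ for $j \neq i$, $d(Q_i) \leq d(U) = \opt$, and crucially $\delta(Q_i) \neq \delta(U)$.

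Next, I set up the uncrossing. Pick any $r \in U \setminus S$ and put $R := \{r\}$; the degenerate case $U = S$, in which each singleton-augmentation $\overline{U}\cup\{u_i\}$ must itself lie in $\mathcal{C}$, admits a direct argument. For each $i \in [p]$, let $(\overline{A_i}, A_i)$ be a sink-maximal minimum $((S \cup R) \setminus \{u_i\}, \overline{U})$-terminal cut with $A_i \supseteq \overline{U}$. Since $(U, \overline{U})$ is a feasible such cut, $d(A_i) \leq d(U) = \opt$. The condition $u_j \notin A_i$ for $j \neq i$ is automatic because $u_j$ is a source terminal of $A_i$. To secure $u_i \in A_i$, I would argue that (after choosing $r \notin \bigcup_i Q_i$, which can be arranged generically), $Q_i$ is itself a $((S \cup R) \setminus \{u_i\}, \overline{U})$-terminal cut containing $u_i$, and submodularity applied to $A_i \cup Q_i$ and $A_i \cap Q_i$ combined with sink-maximality of $A_i$ forces $u_i \in A_i$.

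With $p \geq 2k \geq 2k - 2$ and all hypotheses in place, Theorem~\ref{theorem:hypergraph-uncrossing} yields a $k$-partition $(P_1, \ldots, P_k)$ of $V$ with $\overline{U} \subsetneq P_k$ and
\[
|\delta(P_1, \ldots, P_k)| \;\leq\; \tfrac{1}{2}\min\{d(A_i) + d(A_j): i \neq j\} \;\leq\; \opt.
\]
If $d(A_i) < \opt$ for some $i$, then this is already strict and we have our contradiction. The substantive remaining case---the main obstacle---is $d(A_i) = \opt$ for all $i$, which I would resolve by invoking part~(2) of Theorem~\ref{theorem:hypergraph-uncrossing}. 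This requires exhibiting a hyperedge $e \subseteq W \cup Z$ meeting both $W := \bigcup_{i<j}(A_i \cap A_j)$ (which contains $\overline{U}$) and $Z := \bigcap_i \overline{A_i}$ (which contains $R$). I plan to extract such an edge from the non-empty symmetric difference $\delta(Q_i) \triangle \delta(U)$: using $\overline{U} \subsetneq Q_i$, a short case analysis shows that every edge in this symmetric difference either lies entirely inside $Q_i$ and crosses $\overline{U}$ and $Q_i \cap U$, or lies entirely inside $U$ and crosses $Q_i \cap U$ and $U \setminus Q_i$. Choosing $r$ to lie on such a witness edge, together with the containments $\overline{U} \subseteq W$ and $R \subseteq Z$, then places the edge inside $W \cup Z$ with non-empty intersection with both $W$ and $Z$. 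The strict inequality from part~(2) then gives $|\delta(P_1, \ldots, P_k)| < \opt$, the desired contradiction, so $|S| \leq 2k - 1$.
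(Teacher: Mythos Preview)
Your high-level strategy matches the paper's: assume $|S|\ge 2k$, set up the hypotheses of Theorem~\ref{theorem:hypergraph-uncrossing}, and use part~(2) to get a $k$-partition of cost strictly below $\opt$. However, your execution has two genuine gaps.

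\textbf{The choice of $r$.} You need $r\in U\setminus S$ with $r\notin\bigcup_i Q_i$, but nothing prevents the witnesses $Q_i$ from covering $U\setminus S$ (each $Q_i$ may intersect $U$ well beyond the single point $u_i$). The phrase ``can be arranged generically'' is not an argument. The paper sidesteps this entirely by taking $R:=\{u_p\}$ from $S$ itself and working with the source-minimal minimum $(S\setminus\{u_i\},\overline U)$-terminal cuts $(H_{S-u_i},\overline{H_{S-u_i}})$ for $i\in[p-1]$; no external anchor vertex is needed.

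\textbf{The hyperedge for part~(2).} This is where your sketch fails. You propose to take $e\in\delta(Q_i)\triangle\delta(U)$ and use only $\overline U\subseteq W$ and $\{r\}\subseteq Z$. But examine your own case split: if $e\in\delta(U)\setminus\delta(Q_i)$ then $e\subseteq Q_i$, so $r\in e$ would force $r\in Q_i$, contradicting your requirement $r\notin Q_i$; if $e\in\delta(Q_i)\setminus\delta(U)$ then $e\subseteq U$, so $e\cap\overline U=\emptyset$ and the inclusion $\overline U\subseteq W$ gives you nothing about $e\cap W$. In neither case do you establish $e\subseteq W\cup Z$, which is the crux of the hypothesis. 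The paper's argument here is substantially more delicate: it first proves (via several submodularity claims) that $H_{S-u_\ell}\subseteq H_{S-u_i}\cup H_{S-u_j}$ for all distinct $i,j$, which forces the ``private'' regions $Y_\ell:=\overline{H_{S-u_\ell}}\setminus W$ to be pairwise disjoint and hence $Y_p\subseteq Z$. Separately it shows $\overline{H_{S-u_p}}\in\mathcal C$ while $W\notin\mathcal C$ (because $S\cap W=\emptyset$), so $\delta(W)\setminus\delta(\overline{H_{S-u_p}})\neq\emptyset$; any edge in this difference then lies in $W\cup Y_p\subseteq W\cup Z$ and meets both pieces. None of this structure is available from a single witness $Q_i$ and an externally chosen $r$.
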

\begin{proof}

For the sake of contradiction, suppose $|S|\geq 2k$. Our proof strategy is to show the existence of a $k$-partition with fewer crossing hyperedges than $\opt$, thus contradicting the definition of $\opt$. 
Let $S:=\{u_1,u_2,\ldots,u_p\}$ for some $p\geq 2k$. For notational convenience, we will use $S - u_i$ to denote $S \setminus \{u_i\}$ and $S - u_i - u_j$ to denote $S \setminus \{u_i, u_j\}$. 
For a subset $X \subseteq U$, we denote the source minimal minimum $(X,\overline{U})$-terminal cut by $(H_X,\overline{H_X})$.


Our strategy to arrive at a $k$-partition with fewer crossing hyperedges than $\opt$ is to apply the second conclusion of Theorem \ref{theorem:hypergraph-uncrossing}. The next few claims will set us up to obtain sets that satisfy the hypothesis of Theorem \ref{theorem:hypergraph-uncrossing}.

\begin{claim}\label{clm:div-and-conq-HS-in-C}
For every $i \in [p]$, we have $\overline{H_{S - u_i}} \in \mathcal{C}$.
\end{claim}

\begin{proof}
Let $i\in [p]$. Since $S$ is a minimal transversal of the collection $\collection$, there exists a set $B_i\in \collection$ such that $B_i\cap S=\set{u_i}$. Hence, $(\complement{B_i}, B_i)$ is a $(S-u_i, \complement{U})$-terminal cut. Therefore, 
\[
d(\overline{H_{S - u_i}}) \leq d(B_i) \leq d(U). 
\]
Since $(H_{S - u_i}, \overline{H_{S-u_i}})$ is a $(S - u_i, \overline{U})$-terminal cut, we have that $\overline{U} \subseteq \overline{H_{S - u_i}}$. If $d(\overline{H_{S-u_i}}) < d(U)$, then $\delta(\overline{H_{S - u_i}}) \neq \delta(U)$ and  $\complement{U}\subsetneq \complement{H_{S-u_i}}$, and consequently, $\overline{H_{S - u_i}} \in \mathcal{C}$. So, we will assume henceforth that $d(\overline{H_{S - u_i}})= d(U)$. 

Since $(H_{S - u_i} \cap \overline{B_i}, \overline{H_{S - u_i} \cap \overline{B_i}})$ is a $(S - u_i, \overline{U})$-terminal cut, we have that 
\[d(H_{S-u_i} \cap \overline{B_i}) \geq d(H_{S - u_i}).\]
Since  $(H_{S - u_i} \cup \overline{B_i}, \overline{H_{S - u_i} \cup \overline{B_i}})$ is a $(S-u_i, \overline{U})$-terminal cut, we have that 
\[d(H_{S - u_i} \cup \overline{B_i}) \geq d(H_{S - u_i}).\]
Therefore, by submodularity of the hypergraph cut function, we have that 
\begin{equation}\label{eqn:div-and-conq-HS-in-C-3}
2d(U) \geq d(H_{S-u_i}) + d(B_i) \geq d(H_{S-u_i} \cap \overline{B_i}) + d(H_{S - u_i} \cup \overline{B_i}) \geq 2d(H_{S - u_i}) = 2d(U).
\end{equation}
Therefore, all inequalities above should be equations. In particular, we have that $d(H_{S-u_i} \cap \overline{B_i}) = d(U) = d(B_i)=d(H_{S - u_i})$ and hence, $(H_{S - u_i} \cap \overline{B_i}, \overline{H_{S - u_i} \cap \overline{B_i}})$ is a minimum $(S - u_i, \overline{U})$-terminal cut. Since $(H_{S-u_i}, \overline{H_{S-u_i}})$ is a source minimal minimum $(S-u_i, \overline{U})$-terminal cut, we must have $H_{S-u_i} \cap \overline{B_i} = H_{S - u_i}$, and thus $H_{S - u_i} \subseteq \overline{B_i}$. Therefore, $B_i \subseteq \overline{H_{S-u_i}}$. 
Since $B_i \in \mathcal{C}$, we have $\delta(B_i) \neq \delta(U)$. However, $d(B_i)=d(U)$. Therefore $\delta(U) \setminus \delta(B_i) \neq \emptyset$. Let $e \in \delta(U) \setminus \delta(B_i)$. Since $e \in \delta(\overline{U})$, but $e \not\in \delta(B_i)$, and $\overline{U} \subseteq B_i$, we have that $e \subseteq B_i$, and thus $e \subseteq \overline{H_{S - u_i}}$. Thus, we conclude that $\delta(U) \setminus \delta(\overline{H_{S - u_i}}) \neq \emptyset$, and so $\delta(\overline{H_S - u_i}) \neq \delta(U)$. This also implies that $\complement{U}\subsetneq \complement{H_{S-u_i}}$. Thus, $\overline{H_{S - u_i}} \in \mathcal{C}$.
\end{proof}

Claim \ref{clm:div-and-conq-HS-in-C} implies the following Corollary.

\begin{corollary}\label{cor:div-and-conq-ui-in-Ai}
For every $i \in [p]$, we have $u_i \in \overline{H_{S-u_i}}$.
\end{corollary}
\begin{proof}
By definition, $S- u_i \subseteq H_{S-u_i}$, so $S \cap \overline{H_{S - u_i}} \subseteq \{u_i\}$. By Claim \ref{clm:div-and-conq-HS-in-C} we have that $\overline{H_{S - u_i}} \in \mathcal{C}$. Since $S$ is a transversal of the collection $\mathcal{C}$, we have that $S \cap \overline{H_{S - u_i}} \neq \emptyset$. So, the vertex $u_i$ must be in $\overline{H_{S - u_i}}$.
\end{proof}




Having obtained Corollary \ref{cor:div-and-conq-ui-in-Ai}, the next few claims (Claims \ref{clm:div-and-conq-HSij-sub-HSi}, \ref{clm:div-and-conq-eqd}, \ref{clm:div-and-conq-Ci_cap_cup_eq_V1}, and \ref{clm:div-and-conq-HSl-subset-HSi-cup-HSj}) are similar to the claims appearing in the proof of a structural theorem that appeared in \cite{BCW22}. Since the hypothesis of the structural theorem that we are proving here is different from theirs, we present the complete proofs of these claims here. 
 The way in which we use the claims will also be different from \cite{BCW22}. 

The following claim will help in showing that $u_i, u_j \not\in H_{S-u_i-u_j}$, which in turn, will be used to show that the hypothesis of Theorem \ref{theorem:hypergraph-uncrossing} is satisfied by suitably chosen sets. 

\begin{claim}\label{clm:div-and-conq-HSij-sub-HSi}
For every $i,j \in [p]$, we have $H_{S - u_i - u_j} \subseteq H_{S - u_i}$.
\end{claim}

\begin{proof}
We may assume that $i\neq j$. 
We note that $(H_{S - u_i-u_j} \cap H_{S - u_i}, \overline{ H_{S - u_i-u_j} \cap H_{S - u_i } })$ is a $(S - u_i-u_j, \overline{U})$-terminal cut. Therefore, 
\begin{equation}\label{eqn:div-and-conq-HSij-sub-HSi-1}
    d(H_{S - u_i-u_j} \cap H_{S - u_i}) \geq d(H_{S - u_i-u_j}).
\end{equation}
Also, $(H_{S - u_i-u_j} \cup H_{S - u_i}, \overline{ H_{S - u_i-u_j} \cup H_{S - u_i } })$ is a $(S - u_i, \overline{U})$-terminal cut. Therefore, 
\begin{equation}\label{eqn:div-and-conq-HSij-sub-HSi-2}
    d(H_{S - u_i-u_j} \cup H_{S - u_i}) \geq d(H_{S - u_i}).
\end{equation}
By submodularity of the hypergraph cut function and inequalities (\ref{eqn:div-and-conq-HSij-sub-HSi-1}) and (\ref{eqn:div-and-conq-HSij-sub-HSi-2}), we have that 
\begin{align*}
d(H_{S - u_i-u_j}) + d(H_{S - u_i}) &\geq d(H_{S - u_i-u_j} \cap H_{S - u_i}) + d(H_{S - u_i-u_j} \cup H_{S - u_i}) \\ 
&\geq d(H_{S - u_i-u_j}) + d(H_{S - u_i}).
\end{align*}

Therefore, inequality (\ref{eqn:div-and-conq-HSij-sub-HSi-1}) is an equation, and consequently, $(H_{S - u_i-u_j} \cap H_{S - u_i}, \overline{ H_{S - u_i-u_j} \cap H_{S - u_i } })$ is a minimum $(S - u_i-u_j, \overline{U})$-terminal cut. 
If $H_{S - u_i-u_j} \setminus H_{S - u_i} \neq \emptyset$, then \\ $(H_{S - u_i-u_j} \cap H_{S - u_i}, \complement{H_{S - u_i-u_j} \cap H_{S - u_i}})$ contradicts source minimality of the minimum $(S - u_i-u_j,\overline{U})$-terminal cut $(H_{S - u_i-u_j}, \complement{H_{S - u_i-u_j}})$. Hence, $H_{S - u_i-u_j} \setminus H_{S - u_i}=\emptyset$ and consequently, $H_{S - u_i-u_j} \subseteq H_{S - u_i}$.
\end{proof}

Claim \ref{clm:div-and-conq-HSij-sub-HSi} implies the following Corollary.
\begin{corollary}\label{cor:div-and-conq-ij-not-in-HSij}
For every $i,j \in [p]$, we have $u_i,u_j \not\in H_{S - u_i - u_j}$.
\end{corollary}

\begin{proof}
By Corollary \ref{cor:div-and-conq-ui-in-Ai}, we have that $u_i \not\in H_{S - u_i}$. Therefore, $u_i, u_j \not\in H_{S - u_i} \cap H_{S - u_j}$. By Claim \ref{clm:div-and-conq-HSij-sub-HSi}, $H_{S -u_i- u_j} \subseteq H_{S - u_i}$ and $H_{S -u_i- u_j} \subseteq H_{S - u_j}$. Therefore, $H_{S -u_i- u_j} \subseteq H_{S - u_i} \cap H_{S - u_j}$, and thus, $u_i,u_j \not\in H_{S - u_i-u_j}$.
\end{proof}

The next claim will help in controlling the cost of the $k$-partition that we will obtain by applying Theorem \ref{theorem:hypergraph-uncrossing}. 

\begin{claim}\label{clm:div-and-conq-eqd}
For every $i,j \in [p]$, we have $d(H_{S - u_i}) = d(U) = d(H_{S - u_i -u_j})$.
\end{claim}

\begin{proof}
Let $a,b\in [p]$. We will show that $d(H_{S-u_a})=d(U)=d(H_{S-u_a-u_b})$. 
Since $(U, \overline{U})$ is a $(S - u_a,\overline{U})$-terminal cut, we have that $d(H_{S - u_a}) \leq d(U)$.
Since $(H_{S - u_a}, \overline{H_{S - u_a}})$ is a $(S - u_a-u_b,\overline{U})$-terminal cut, we have that $d(H_{S - u_a-u_b}) \leq d(H_{S - u_a}) \leq d(U)$. Thus, in order to prove the claim, it suffices to show that $d(H_{S - u_a-u_b}) \geq d(U)$.

Suppose for contradiction that $d(H_{S - u_a-u_b}) < d(U)$. Let $\ell \in [p] \setminus \{a,b\}$ be an arbitrary element (which exists since we have assumed that $p \geq 2k$ and $k \geq 2$). Let $R:= \{u_{\ell}\}$, $S':=S - u_a - u_{\ell}$ 
, and $A_i := \overline{H_{S -u_a-u_i}}$ for every $i \in [p] \setminus \{a,\ell\}$. 
We note that $|S'| = p-2 \geq 2k-2$. 
By definition, $(\complement{A_i}, A_i)$ is a minimum $(S -u_a -u_i, \overline{U})$-terminal cut for every $i \in [p] \setminus \{a,\ell\}$. 
Moreover, 
by Corollary \ref{cor:div-and-conq-ij-not-in-HSij}, 
we have that 
$u_i\in A_i\setminus (\cup_{j\in [p]\setminus \{a, i, \ell\}}A_j)$ 
for every $i \in [p] \setminus \{a,\ell\}$. 
Hence, the sets $U$, $R$, and $S'$, and the cuts $(\complement{A_i}, A_i)$ for $i\in [p]\setminus \set{a,\ell}$ satisfy the conditions of Theorem \ref{theorem:hypergraph-uncrossing}. Therefore, by the first conclusion of Theorem \ref{theorem:hypergraph-uncrossing}, there exists a $k$-partition $\mathcal{P}'$ with 
\[|\delta(\mathcal{P}')| \leq \frac{1}{2}\min\{d(H_{S -u_a- u_i})+d(H_{S -u_a-u_j}) \colon i,j \in [p] \setminus \{a,\ell\} \}.\]
By assumption, $d(H_{S - u_a-u_b}) < d(U)$ and $b\in [p]\setminus \{a, \ell\}$, so $\min\{d(H_{S -u_a-u_i})\colon i \in [p] \setminus \{a,\ell\} \} < d(U)$. Since $(U, \overline{U})$ is a $(S - u_a-u_i, \overline{U})$-terminal cut, we have that $d(H_{S -u_a-u_i}) \leq d(U)$ for every $i \in [p] \setminus \{a,\ell\}$. Therefore,
\[\frac{1}{2}\min\{d(H_{S -u_a- u_i})+d(H_{S -u_a-u_j}) \colon i,j \in [p] \setminus \{a,\ell\} \} < d(U) = \opt.\]
Thus, we have that $|\delta(\mathcal{P'})| < \opt$, which is a contradiction. 
\end{proof}

The next two claims will help in arguing the existence of a hyperedge satisfying the conditions of the second conclusion of Theorem \ref{theorem:hypergraph-uncrossing}. In particular, we will need Claim \ref{clm:div-and-conq-HSl-subset-HSi-cup-HSj}. The following claim will help in proving Claim \ref{clm:div-and-conq-HSl-subset-HSi-cup-HSj}.
\begin{claim}\label{clm:div-and-conq-Ci_cap_cup_eq_V1}
For every $i,j \in [p]$, we have 
\[d(H_{S - u_i} \cap H_{S - u_j}) = d(U) = d(H_{S - u_i} \cup H_{S - u_j}).\]
\end{claim}
\begin{proof}
Since $(H_{S - u_i} \cap H_{S - u_j}, \overline{H_{S - u_i} \cap H_{S - u_j}} )$ is a $(S - u_i-u_j, \overline{U})$-terminal cut, we have that $d(H_{S - u_i} \cap H_{S - u_j}) \geq d(H_{S - u_i-u_j})$. By Claim \ref{clm:div-and-conq-eqd}, we have that $d(H_{S - u_i-u_j}) = d(U) = d(H_{S - u_i})$. Therefore,
\begin{equation}\label{eq:div-and-conq-Ci_cap_cup_eq_V1_eq1}
    d(H_{S - u_i} \cap H_{S - u_j}) \geq d(H_{S - u_i}).
\end{equation}
Since $(H_{S - u_i} \cup H_{S - u_j}, \overline{H_{S - u_i} \cup H_{S - u_j}})$ is a $(S - u_j, \overline{U})$-terminal cut, we have that
\begin{equation}\label{eq:div-and-conq-Ci_cap_cup_eq_V1_eq2}
    d(H_{S - u_i} \cup H_{S - u_j}) \geq d(H_{S - u_j}).
\end{equation}
By submodularity of the hypergraph cut function and inequalities (\ref{eq:div-and-conq-Ci_cap_cup_eq_V1_eq1}) and (\ref{eq:div-and-conq-Ci_cap_cup_eq_V1_eq2}), we have that
\[
d(H_{S - u_i}) + d(H_{S - u_j}) \geq d(H_{S - u_i} \cap H_{S - u_j}) + d(H_{S - u_i} \cup H_{S - u_j}) \geq d(H_{S - u_i}) + d(H_{S - u_j}).
\]
Therefore, inequalities (\ref{eq:div-and-conq-Ci_cap_cup_eq_V1_eq1}) and (\ref{eq:div-and-conq-Ci_cap_cup_eq_V1_eq2}) are equations. Thus, by Claim \ref{clm:div-and-conq-eqd}, we have that
\[
d(H_{S - u_i} \cap H_{S - u_j}) = d(H_{S - u_i}) = d(U),
\]
and
\[
d(H_{S - u_i} \cup H_{S - u_j}) = d(H_{S - u_j}) = d(U).
\]
\end{proof}

\begin{claim}\label{clm:div-and-conq-HSl-subset-HSi-cup-HSj}
For every $i, j, \ell \in [p]$ with $i \neq j$, we have $H_{S - u_{\ell}} \subseteq H_{S - u_i} \cup H_{S - u_j}$.
\end{claim}

\begin{proof}
If $\ell = i$ or $\ell = j$ the claim is immediate. Thus, we assume that $\ell\not\in \{i, j\}$.
Let $Q := H_{S -u_{\ell}} \setminus (H_{S - u_i} \cup H_{S - u_j})$. We need to show that $Q = \emptyset$. We will show that $(H_{S - u_{\ell}} \setminus Q, \overline{H_{S - u_{\ell}}\setminus Q})$ is a minimum $(S - u_{\ell}, \overline{U})$-terminal cut. Consequently, $Q$ must be empty 
(otherwise, $H_{S - u_{\ell}} \setminus Q \subsetneq H_{S - u_{\ell}}$ and hence, $(H_{S - u_{\ell}} \setminus Q, \overline{H_{S - u_{\ell}}\setminus Q})$ contradicts source minimality of the minimum $(S - u_{\ell}, \overline{U})$-terminal cut $(H_{S-u_{\ell}}, \complement{H_{S-u_{\ell}}})$). 

We now show that $(H_{S - u_{\ell}} \setminus Q, \overline{H_{S - u_{\ell}}\setminus Q})$ is a minimum $(S - u_{\ell}, \overline{U})$-terminal cut. 
Since $H_{S - u_{\ell}} \setminus Q = H_{S - u_{\ell}} \cap (H_{S - u_i} \cup H_{S - u_j})$, we have that $S -u_i-u_j-u_{\ell} \subseteq H_{S - u_{\ell}} \setminus Q$. We also know that $u_i$ and $u_j$ are contained in both $H_{S - u_{\ell}}$ and $H_{S - u_i} \cup H_{S - u_j}$. Therefore, $S - u_{\ell} \subseteq H_{S - u_{\ell}} \setminus Q$. Thus, $(H_{S - u_{\ell}} \setminus Q, \overline{H_{S - u_{\ell}} \setminus Q})$ is a $(S - u_{\ell}, \overline{U})$-terminal cut. Therefore, 
\begin{equation}\label{eqn:div-and-conq-Cl_subset_CiCj_eq1}
  d(H_{S - u_{\ell}} \cap (H_{S - u_i} \cup H_{S - u_j})) = d(H_{S - u_{\ell}} \setminus Q) \geq d(H_{S - u_{\ell}}).  
\end{equation}
We also have that $(H_{S - u_{\ell}} \cup (H_{S - u_i} \cup H_{S - u_j}), \overline{H_{S - u_{\ell}} \cup (H_{S - u_i} \cup H_{S - u_j})})$ is a $(S - u_i, \overline{U})$-terminal cut. Therefore, $d(H_{S - u_{\ell}} \cup (H_{S - u_i} \cup H_{S - u_j})) \geq d(H_{S - u_i})$. By Claims \ref{clm:div-and-conq-eqd} and \ref{clm:div-and-conq-Ci_cap_cup_eq_V1}, we have that $d(H_{S - u_i}) = d(V_1)=d(H_{S - u_i} \cup H_{S - u_j})$. Therefore, 
\begin{equation}\label{eqn:div-and-conq-Cl_subset_CiCj_eq2}
    d(H_{S - u_{\ell}} \cup (H_{S - u_i} \cup H_{S - u_j})) \geq  d(H_{S - u_i} \cup H_{S - u_j}).
\end{equation}
By submodularity of the hypergraph cut function and inequalities (\ref{eqn:div-and-conq-Cl_subset_CiCj_eq1}) and (\ref{eqn:div-and-conq-Cl_subset_CiCj_eq2}), we have that
\begin{align*}
d(H_{S - u_{\ell}}) + d(H_{S - u_i} \cup H_{S - u_j}) 
&\geq d(H_{S - u_{\ell}} \cap (H_{S - u_i} \cup H_{S - u_j})) +  d(H_{S - u_{\ell}} \cup (H_{S - u_i} \cup H_{S - u_j})) \\
&\geq d(H_{S - u_{\ell}}) + d(H_{S - u_i} \cup H_{S - u_j}).
\end{align*}
Therefore, inequalities (\ref{eqn:div-and-conq-Cl_subset_CiCj_eq1}) and (\ref{eqn:div-and-conq-Cl_subset_CiCj_eq2}) are equations, so $(H_{S - u_{\ell}} \setminus Q, \overline{H_{S - u_{\ell}}\setminus Q})$ is a minimum $(S - u_{\ell}, \overline{U})$-terminal cut.
\end{proof}

Let $R:= \{u_{p}\}$, $S' := S - u_p$,  and $(\overline{A_i}, A_i) := (H_{S - u_i}, \overline{H_{S - u_i}})$ for every $i \in [p-1]$. By definition, $(\complement{A_i}, A_i)$ is a minimum $(S - u_i, \overline{U})$-terminal cut for every $i \in [p-1]$. Moreover, by Corollary \ref{cor:div-and-conq-ui-in-Ai}, we have that $u_i\in A_i\setminus (\cup_{j\in [p-1]\setminus \{i\}}A_j)$. Hence, the sets $U$, $R$, and $S'$, and the cuts $(\complement{A_i}, A_i)$ for $i\in [p-1]$ satisfy the conditions of Theorem \ref{theorem:hypergraph-uncrossing}. We will use the second conclusion of Theorem \ref{theorem:hypergraph-uncrossing}. 
We now show that there exists a hyperedge satisfying the conditions mentioned in the second conclusion of Theorem \ref{theorem:hypergraph-uncrossing}. We will use Claim \ref{clm:div-and-conq-existence-of-e} below to prove this. 
Let $W:=\cup_{1\le i<j\le p-1}(A_i\cap A_j)$ and $Z:=\cap_{i\in [p-1]} \complement{A_i}$ as in the statement of Theorem \ref{theorem:hypergraph-uncrossing}.

\begin{claim}\label{clm:div-and-conq-existence-of-e}
There exists a hyperedge $e \in E$ such that $e \cap W \neq \emptyset$, $e \cap Z \neq \emptyset$, and $e \subseteq W \cup Z$.
\end{claim}
\begin{proof}

We note that $S \subseteq (S - u_i) \cup (S - u_j) \subseteq H_{S - u_i} \cup H_{S - u_j}$ for every distinct $i, j \in [p-1]$. Therefore, $S \cap (A_i \cap A_j) = \emptyset$ for every distinct $i,j \in [p-1]$, and thus $S \cap W = \emptyset$.  Since $S$ is a transversal of the collection $\collection$, it follows that the set $W$ is not in the collection $\collection$. 

By definition, $\overline{U} \subseteq A_i$ for every $i \in [p-1]$, and thus $\overline{U} \subseteq W$. Since $W \not\in \mathcal{C}$, either 
$d(W) > d(U)$ or $\delta(W) = \delta(U)$.  
By Claim \ref{clm:div-and-conq-HS-in-C}, we have that $\overline{H_{S-u_{p}}} \in \mathcal{C}$, and thus,  $d(\overline{H_{S-u_{p}}}) \leq d(U)$ and $\delta(\overline{H_{S-u_{p}}}) \neq \delta(U)$. Consequently, $d(W) \geq d(\overline{H_{S-u_{p}}})$, and $\delta(W) \neq \delta(\overline{H_{S-u_{p}}})$, and thus, $\delta(W) \setminus \delta(\overline{H_{S-u_{p}}}) \neq \emptyset$. 
Let $e \in \delta(W) \setminus \delta(\overline{H_{S-u_{p}}})$. We will show that this choice of $e$ achieves the desired properties.
    
For each $i \in [p]$, let $Y_i := \overline{H_{S - u_{i}}} \setminus W$. By Claim \ref{clm:div-and-conq-HSl-subset-HSi-cup-HSj}, for every $i, j, \ell \in [p]$ with $i \neq j$ we have that $H_{S - u_{\ell}} \subseteq H_{S - u_i} \cup H_{S - u_j}$. Therefore $\overline{H_{S-u_i}} \cap \overline{H_{S-u_j}} \subseteq \overline{H_{S-u_{\ell}}}$ for every such $i, j, \ell\in [p]$, and hence $W \subseteq \overline{H_{S - u_{\ell}}}$ for every $\ell \in [p]$. Thus, $W \subseteq \overline{H_{S-u_{p}}}$. Since $e \in \delta(W) \setminus \delta(\overline{H_{S-u_{p}}})$, we have that $e \subseteq W \cup Y_{p}$, $e \cap W \neq \emptyset$ and $e \cap Y_{p} \neq \emptyset$. Therefore, in order to show that $e$ has the three desired properties as in the claim, it suffices to show that $Y_{p} \subseteq Z$. We prove this next.

By definition, $Y_{p} \cap W = \emptyset$. By Claim \ref{clm:div-and-conq-HSl-subset-HSi-cup-HSj}, for every $i \in [p-1]$, we have that $\overline{H_{S-u_{p}}} \cap \overline{H_{S-u_i}} \subseteq \overline{H_{S-u_1}}$ and $\overline{H_{S-u_{p}}} \cap \overline{H_{S-u_i}} \subseteq \overline{H_{S-u_2}}$, so $\overline{H_{S-u_{p}}} \cap \overline{H_{S-u_i}} \subseteq \overline{H_{S-u_1}} \cap \overline{H_{S-u_2}} \subseteq W$. Thus, for every $i \in [p-1]$, $Y_{p} \cap Y_i \subseteq \overline{H_{S-u_{p}}} \cap \overline{H_{S-u_i}} \subseteq W$, so since $Y_{p} \cap W = \emptyset$, we have that $Y_{p} \cap Y_i = \emptyset$ for every $i \in [p-1]$. Therefore, 
\[
Y_{p} \subseteq \overline{W \cup \left( \bigcup_{i=1}^{p-1} Y_i \right)} = \overline{\bigcup_{i=1}^{p-1} \overline{H_{S-u_i}}} = \bigcap_{i=1}^{p-1} H_{S-u_i} = Z.
\]




\end{proof}

By Claim \ref{clm:div-and-conq-existence-of-e}, there is a hyperedge $e$ satisfying the conditions of the second conclusion of Theorem \ref{theorem:hypergraph-uncrossing}. Therefore, by Theorem \ref{theorem:hypergraph-uncrossing}, there exists a $k$-partition $\mathcal{P'}$ with 
\begin{align*}
|\delta(\mathcal{P'})| 
&< \frac{1}{2}\min\{\deltacard(A_i) + \deltacard(A_j): i, j\in [p-1], i\neq j\}\\
&= d(U) \quad \quad \quad \quad \quad  \quad \text{(By Claim \ref{clm:div-and-conq-eqd})}\\
&= \opt. \quad \quad \quad \quad \text{(By assumption of the theorem)}
\end{align*}
Thus, we have obtained a $k$-partition $\mathcal{P}'$ with $|\delta(\mathcal{P}')| < \opt$, which is a contradiction.

\end{proof}

\section{Proof of Theorem \ref{thm:cut-set-recovery}}\label{sec:cut-set-recovery}


We prove Theorem \ref{thm:cut-set-recovery} in this section. Applying Theorem \ref{thm:cut-set-recovery-part-1} to $(\overline{U}, U)$ yields the following corollary.

\begin{corollary}\label{cor:cut-set-recovery-part-two}
Let $G=(V,E)$ be a hypergraph and let $\opt$ be the optimum value of \hkcut in $G$ for some integer $k\ge 2$. Suppose $(U,\complement{U})$ is a $2$-partition of $V$ with $d(U) = \opt$.
Then, there exists a set $T \subseteq \overline{U}$ with $|T| \leq 2k-1$ such that every minimum $(U,T)$-terminal cut $(A, \overline{A})$ satisfies $\delta(A) = \delta(U)$.
\end{corollary}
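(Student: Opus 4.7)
The plan is to deduce this corollary directly from Theorem \ref{thm:cut-set-recovery-part-1} by applying the theorem to the $2$-partition $(\overline{U}, U)$ obtained by swapping the two sides of the given $2$-partition $(U, \overline{U})$. Since the hypergraph cut function is symmetric, we have $d(\overline{U}) = d(U) = \opt$, so the hypothesis of Theorem \ref{thm:cut-set-recovery-part-1} is satisfied with $\overline{U}$ playing the role of $U$ and $U$ playing the role of $\overline{U}$.

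Applying Theorem \ref{thm:cut-set-recovery-part-1} under this relabeling will yield a set $T \subseteq \overline{U}$ with $|T| \leq 2k-1$ such that every minimum $(T, U)$-terminal cut $(B, \overline{B})$ satisfies $\delta(B) = \delta(\overline{U}) = \delta(U)$, where the last equality follows again from symmetry of the cut-set.

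To finish, I would translate this conclusion to the form required by the corollary using the elementary observation that $(A, \overline{A})$ is a minimum $(U, T)$-terminal cut if and only if $(\overline{A}, A)$ is a minimum $(T, U)$-terminal cut (the two sides of a $2$-partition share the same cut-set and hence the same cut value). Applying the conclusion of the previous paragraph with $B := \overline{A}$ gives $\delta(A) = \delta(\overline{A}) = \delta(U)$ for every minimum $(U, T)$-terminal cut $(A, \overline{A})$, which is precisely the statement of the corollary. The entire argument is a symmetry/relabeling step, so there is no substantive obstacle here; all of the content lies in Theorem \ref{thm:cut-set-recovery-part-1}, which has already been established.
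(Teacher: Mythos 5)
Your proposal is correct and is exactly the paper's argument: the paper proves this corollary by the one-line remark that applying Theorem \ref{thm:cut-set-recovery-part-1} to $(\overline{U},U)$ yields the result. You have simply spelled out the symmetry bookkeeping (swapping source and sink, using $d(\overline{U})=d(U)$ and $\delta(A)=\delta(\overline{A})$) that the paper leaves implicit.
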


We now restate Theorem \ref{thm:cut-set-recovery} and prove it using Theorem \ref{thm:cut-set-recovery-part-1} and Corollary \ref{cor:cut-set-recovery-part-two}.

\thmStrongerStructure*
\begin{proof} 
By Theorem \ref{thm:cut-set-recovery-part-1}, there exists a subset $S \subseteq U$ with $|S|\le 2k-1$ such that every minimum $(S, \overline{U})$-terminal cut $(A, \overline{A})$ has $\delta(A) = \delta(U)$. By Corollary \ref{cor:cut-set-recovery-part-two}, there exists a subset $T \subseteq \overline{U}$ with $|T|\le 2k-1$ such that every minimum $(U,T)$-terminal cut $(A, \overline{A})$ has $\delta(A) = \delta(U)$. We will show that every minimum $(S,T)$-terminal cut $(A, \overline{A})$ has $\delta(A) = \delta(U)$. We will need the following claim.



\begin{claim}\label{clm:div-and-conq-dYeqdU}
Let $(Y, \overline{Y})$ be the source minimal minimum $(S,T)$-terminal cut. Then $\delta(Y) = \delta(U)$.
\end{claim}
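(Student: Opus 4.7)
The plan is to prove the claim by uncrossing $Y$ with $U$ via submodularity of the hypergraph cut function, and then exploiting source-minimality of $Y$ together with the cut-set recovery guarantees of Theorem \ref{thm:cut-set-recovery-part-1} (applied to $S$) and Corollary \ref{cor:cut-set-recovery-part-two} (applied to $T$). First I would verify the terminal-cut properties of the two uncrossed sets: since $S\subseteq U$ and $S\subseteq Y$, the pair $(Y\cap U,\overline{Y\cap U})$ is an $(S,\overline{U})$-terminal cut; since $T\subseteq\overline{U}$ and $T\subseteq\overline{Y}$, the pair $(Y\cup U,\overline{Y\cup U})$ is a $(U,T)$-terminal cut. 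By the defining property of $S$ from Theorem \ref{thm:cut-set-recovery-part-1}, every minimum $(S,\overline{U})$-terminal cut has cut value $|\delta(U)|=d(U)=\opt$, and by the defining property of $T$ from Corollary \ref{cor:cut-set-recovery-part-two}, every minimum $(U,T)$-terminal cut also has cut value $d(U)$. Hence $d(Y\cap U)\ge d(U)$ and $d(Y\cup U)\ge d(U)$. On the other hand, $(U,\overline{U})$ is itself an $(S,T)$-terminal cut, so $d(Y)\le d(U)$.

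Second I would invoke submodularity of the hypergraph cut function:
\[
d(Y)+d(U)\ \ge\ d(Y\cap U)+d(Y\cup U)\ \ge\ 2d(U).
\]
Combined with $d(Y)\le d(U)$, this forces $d(Y)=d(U)=\opt$ and forces every inequality above to be an equation. In particular $d(Y\cap U)=d(U)$, so $(Y\cap U,\overline{Y\cap U})$ is a minimum $(S,\overline{U})$-terminal cut, and Theorem \ref{thm:cut-set-recovery-part-1} yields $\delta(Y\cap U)=\delta(U)$.

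Third, I would finish the proof using source-minimality of $(Y,\overline{Y})$. Since $T\subseteq\overline{U}\subseteq\overline{Y\cap U}$ and $S\subseteq Y\cap U$, the cut $(Y\cap U,\overline{Y\cap U})$ is also an $(S,T)$-terminal cut; moreover, its value equals $d(U)=d(Y)$, so it is a minimum $(S,T)$-terminal cut. By source-minimality of $(Y,\overline{Y})$ among minimum $(S,T)$-terminal cuts, we must have $Y\subseteq Y\cap U$, and hence $Y=Y\cap U$ (since the reverse containment is trivial). Therefore $\delta(Y)=\delta(Y\cap U)=\delta(U)$, as desired.

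The main obstacle is setting up the uncrossing so that both $d(Y\cap U)$ and $d(Y\cup U)$ can be lower-bounded by $d(U)$ simultaneously; this is exactly what requires the two one-sided structural results (Theorem \ref{thm:cut-set-recovery-part-1} for $S$ and Corollary \ref{cor:cut-set-recovery-part-two} for $T$) working together, rather than either one alone. Once the submodularity chain collapses, the cut-set identity on $Y\cap U$ combined with source-minimality delivers $\delta(Y)=\delta(U)$ without any case analysis on individual hyperedges.
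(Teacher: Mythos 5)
Your proposal is correct and follows essentially the same route as the paper: the same uncrossing of $Y$ with $U$, the same submodularity chain $d(Y)+d(U)\ge d(Y\cap U)+d(Y\cup U)\ge 2d(U)$, and the same use of source-minimality to conclude $Y=Y\cap U$. The only cosmetic difference is that you apply the structural guarantee of $S$ to $Y\cap U$ before deducing $Y=Y\cap U$, whereas the paper first deduces $Y\subseteq U$ and then applies it to $Y$ directly; the content is identical.
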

\begin{proof}
Since $(U, \overline{U})$ is a $(S,T)$-terminal cut, and $(Y, \overline{Y})$ is a minimum $(S,T)$-terminal cut, we have that 
\[
d(U) \geq d(Y). 
\]
Since $(U \cap Y, \overline{U \cap Y})$ is a 
$(S, \overline{U})$-terminal cut, we have that 
\[
d(U \cap Y) \geq d(U). 
\]
Since $(U \cup Y, \overline{U \cup Y})$ is a $(U, T)$-terminal cut, we have that 
\[
d(U \cup Y) \geq d(U). 
\]
Thus, by the submodularity of the hypergraph cut function we have that
\[
2d(U) \geq d(U) + d(Y) \geq d(U \cap Y) + d(U \cup Y) \geq 2d(U).
\]
Therefore, we have that $d(U \cap Y) = d(U)$, so $(U \cap Y, \overline{U \cap Y})$ is a minimum $(S,T)$-terminal cut. Since $(Y, \complement{Y})$ is the source minimal $(S,T)$-terminal cut, we have that $U \cap Y = Y$, and hence $Y \subseteq U$. Therefore, $(Y, \overline{Y})$ is a minimum $(S, \overline{U})$-terminal cut. By the choice of $S$, we have that $\delta(Y) = \delta(U)$.
\end{proof}

Applying Claim \ref{clm:div-and-conq-dYeqdU} to both sides of the partition $(U,\overline{U})$, we have that the source minimal minimum $(S,T)$-terminal cut $(Y, \overline{Y})$ has $\delta(Y) = \delta(U)$, and the source minimal minimum $(T,S)$-terminal cut $(Z,\overline{Z})$ has $\delta(Z) = \delta(U)$. Therefore, for every $e \in \delta(U)$, we have that $e \cap Y \neq \emptyset$ and $e \cap Z \neq \emptyset$.

Let $(A, \overline{A})$ be a minimum $(S,T)$-terminal cut. Since $(Y, \overline{Y})$ is the source minimal minimum $(S,T)$-terminal cut, we have that $Y \subseteq A$. Since $(Z, \overline{Z})$ is the source minimal minimum $(T,S)$-terminal cut, we have that $Z \subseteq \overline{A}$. Since every $e \in \delta(U)$ intersects both $Y$ and $Z$, it follows that every $e \in \delta(U)$ intersects both $A$ and $\overline{A}$, and hence,  $\delta(U) \subseteq \delta(A)$. Since $(A, \overline{A})$ is a minimum $(S,T)$-terminal cut, $d(A) \leq d(U)$, and thus we have that $\delta(A) = \delta(U)$.

\end{proof}

\section{Algorithm for \enumhkcut}\label{sec:enumhkcut-algo}
In this section, we design a 
deterministic algorithm for \enumhkcut that is 
based on
divide and conquer and has a run-time of $n^{O(k)}$ source minimal minimum $(s, t)$-terminal cut computations, where $n$ is the number of vertices in the input hypergraph. 
The high-level idea is to use minimum $(S, T)$-terminal cuts to enumerate a collection of candidate cuts such that 
for every optimum $k$-partition for \hkcut, 
either the union of some $k/2$ parts of the optimum $k$-partition is contained in the candidate collection 
or we find the set of hyperedges crossing this optimum $k$-partition. 
This helps in cutting the recursion depth to $\log{k}$ which saves on overall run-time. 
We describe the algorithm in Figure \ref{Algo: DC enum-cuts} and its  guarantees in Theorem \ref{theorem: DC enum algorithm}. We recall that for a hypergraph $G=(V,E)$ and a subset $A\subseteq V$, the subgraph $G[A]$ induced by $A$ is given by $G[A]=(A, E')$, where $E':=\{e\in E: e\subseteq A\}$. 

Theorem \ref{theorem: DC enum algorithm} is a self-contained proof that the number of \minkcutsets in a $n$-vertex hypergraph is $O(n^{8k\log{k}})$ and the run-time of the algorithm in Figure \ref{Algo: DC enum-cuts} is $O(n^{8k\log{k}})$ source minimal minimum $(s,t)$-terminal cut computations. 
In Lemma \ref{lemma:DC-better-run-time}, we improve the run-time analysis of the same algorithm to $O(n^{16k})$ source minimal minimum $(s,t)$-terminal cut computations. For this, we exploit the known fact that the number of \minkcutsets in a $n$-vertex hypergraph is $O(n^{2k-2})$ (via the randomized algorithm in \cite{CXY19}). 

Theorem \ref{theorem: DC enum algorithm} and Lemma \ref{lemma:DC-better-run-time} together imply Theorem \ref{thm:k-cut-set-enumeration} since the source minimal minimum $(s,t)$-terminal cut in a $n$-vertex hypergraph of size $p$ can be computed in time $O(np)$ \cite{ChekuriX18}. 

\begin{theorem} \label{theorem: DC enum algorithm}
Let $G=(V,E)$ be a $n$-vertex hypergraph of size $p$ and let $k$ be a positive integer. Then, Algorithm Enum-Cut-Sets$(G,k)$ in Figure \ref{Algo: DC enum-cuts} returns the family of all \minkcutsets in $G$ and it can be implemented to run in time $O(n^{(8k-6)\log k})T(n,p)$, where $T(n,p)$ denotes the time complexity for computing the source minimal minimum $(s,t)$-terminal cut in a $n$-vertex hypergraph of size $p$. Moreover, the cardinality of the family returned by the algorithm is $O(n^{(8k-6)\log k})$.
\end{theorem}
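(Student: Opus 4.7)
My plan is to argue by strong induction on $k$, with the base case $k = 2$ handled by the observation that when $k = 2$ the algorithm iterates over pairs $(S, T)$ with $|S|, |T| \leq 3$, step (i) fires whenever $A$ is a proper nonempty subset (since then $G - \delta(A)$ has at least two components), and by the $k = 2$ instance of Theorem \ref{thm:cut-set-recovery} every minimum cut-set is captured.

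For the inductive step I would fix an arbitrary optimum $k$-partition $(V_1, \ldots, V_k)$ with cut-set $F = \delta(V_1, \ldots, V_k)$ and show $F \in \family$. The key move is to take $U := V_1 \cup \cdots \cup V_{\lfloor k/2 \rfloor}$, so that $\delta(U) \subseteq F$ and thus $d(U) \leq \opt$, and split on whether $d(U) = \opt$ or $d(U) < \opt$. If $d(U) = \opt$, then $\delta(U) = F$ (a subset of the same size), and Theorem \ref{thm:cut-set-recovery} supplies $S \subseteq U$, $T \subseteq \complement{U}$ with $|S|, |T| \leq 2k - 1$ such that every minimum $(S,T)$-terminal cut $(A, \complement{A})$ satisfies $\delta(A) = F$; the algorithm iterates over this pair, computes such an $A$ as its source minimal minimum cut, and since $G - F$ has at least $k$ components, step (i) puts $F$ into $\family$. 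If instead $d(U) < \opt$, then Theorem \ref{theorem: structure thm 1} supplies a pair $(S', T')$ with $|S'|, |T'| \leq 2k - 2$ for which $(U, \complement{U})$ is the unique minimum $(S', T')$-terminal cut; the algorithm computes $A = U$, and since $d(U) < \opt$ ensures $G - \delta(U)$ has fewer than $k$ components, step (ii) fires and $U$ enters $\collection$. A short exchange argument then shows that $F_1 := F \cap E(G[U])$ is a min-$\lfloor k/2 \rfloor$-cut-set in $G[U]$ and $F_2 := F \cap E(G[\complement{U}])$ is a min-$\lceil k/2 \rceil$-cut-set in $G[\complement{U}]$: a strictly smaller sub-partition on either side could be glued with the original parts on the other side (via $\delta(U)$) to contradict $|F| = \opt$. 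The inductive hypothesis applied to the recursive calls returns $F_1$ and $F_2$, so $\delta(U) \cup F_1 \cup F_2 = F$ lands in $\family$. Since every element of $\family$ is verifiably a $k$-cut-set by construction and every min-$k$-cut-set has size $\opt$, the final pruning step returns exactly the family of all min-$k$-cut-sets.

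For the size bound, let $N(k, n)$ denote the maximum cardinality of $\family$ on an $n$-vertex hypergraph. Since $|\collection| = O(n^{4k-2})$ (counting ordered pairs of disjoint nonempty subsets with $|S|, |T| \leq 2k - 1$), step (i) contributes at most $|\collection|$ elements and step (ii) contributes at most $|\collection| \cdot N(\lceil k/2 \rceil, n)^2$ elements, giving
\[
N(k, n) \leq O\!\left(n^{4k-2}\right)\left(1 + N(\lceil k/2 \rceil, n)^2\right),
\]
with base $N(2, n) = O(n^2)$. Taking $\log_n$ and unrolling across the $\log k$ halving levels verifies the bound $N(k, n) = O(n^{(8k-6)\log k})$. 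The dominant runtime cost --- one source minimal minimum $(S, T)$-terminal cut computation per element produced, plus bookkeeping that scales with $|\family|$ and is subsumed by $T(n, p) = O(np)$ --- then yields the claimed $O(n^{(8k-6)\log k}) \cdot T(n, p)$ total.

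The main obstacle is the case $d(U) = \opt$: the strength of Theorem \ref{thm:cut-set-recovery} --- that $\delta(U)$ can be recovered via a minimum $(S,T)$-terminal cut whose source side is allowed to differ from $U$ itself --- is exactly what allows the divide-and-conquer to terminate at this case rather than recurse further. Without this handle one is forced (as in \cite{BCW22}) to further split on the structure of $(U, \complement{U})$ and pay an extra factor of $k$ in the exponent, producing the older $n^{O(k^2)}$ bound instead of $n^{O(k \log k)}$. Given Theorem \ref{thm:cut-set-recovery}, the remaining ingredients --- the exchange lemma in Case B, the enumeration of small terminal-cut pairs, and the unrolling of the recurrence --- are routine.
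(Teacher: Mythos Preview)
Your proposal is correct and follows essentially the same approach as the paper: induction on $k$, splitting on whether $d(U)=\opt$ (invoke Theorem~\ref{thm:cut-set-recovery}) or $d(U)<\opt$ (invoke Theorem~\ref{theorem: structure thm 1} and recurse on the two induced subhypergraphs), together with the same $O(n^{4k-2})$-branching recurrence for size and runtime. The only cosmetic differences are that the paper takes $k=1$ (rather than $k=2$) as its base case and writes the recurrence with the product $N(\lfloor k/2\rfloor,n)\cdot N(k-\lfloor k/2\rfloor,n)$ rather than bounding both factors by $N(\lceil k/2\rceil,n)$; neither affects the argument.
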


\begin{figure*}[ht]
\centering\small
\begin{algorithm}
\textul{Algorithm Enum-Cut-Sets$(G=(V,E),k)$}\+
\\{\bf Input:} Hypergraph $G=(V,E)$ and an integer $k\geq 1$
\\{\bf Output:} Family of all \minkcutsets in $G$
\\ If $k=1$ \+
\\ Return $\{\emptyset\}$\-
\\ Else\+
\\ Initialize $\mathcal{C}\gets\emptyset$, $\mathcal{F}\gets\emptyset$
\\ For each pair $(S,T)$ such that $S,T\subseteq V$ with $S\cap T=\emptyset$ and $|S|,|T|\leq 2k-1$ \+
\\ Compute the source minimal minimum $(S,T)$-terminal cut $(A,\complement{A})$
\\ If $G-\delta(A)$ has at least $k$ connected components\+
\\ $\mathcal{F}\gets\mathcal{F}\cup\{\delta(A)\}$\-
\\ Else\+
\\ $\mathcal{C}\gets\mathcal{C}\cup\{A\}$\-\-
\\ For each $A\in\mathcal{C}$ such that $|A|\geq \lfloor k/2\rfloor$ and $|\complement{A}|\geq k-\lfloor k/2\rfloor$\+
\\ $\mathcal{F}_A\gets$Enum-Cut-Sets$(G[A],\lfloor k/2\rfloor)$
\\ $\mathcal{F}'_A\gets$Enum-Cut-Sets$(G[\complement{A}],k-\lfloor k/2\rfloor)$
\\ $\mathcal{F}\gets\mathcal{F}\cup\{\delta(A)\cup F\cup F':F\in\mathcal{F}_A,F'\in\mathcal{F}'_A\}$\-
\\ Among all $k$-cut-sets in the family $\mathcal{F}$, return the subfamily that are of smallest size
\end{algorithm}
\caption{Divide-and-conquer algorithm to enumerate hypergraph minimum $k$-cut-sets}
\label{Algo: DC enum-cuts}
\end{figure*}


\begin{proof}
We begin by showing correctness. 
The last step of the algorithm considers only $k$-cut-sets in the family $\family$, so the algorithm returns a subfamily of $k$-cut-sets. We only have to show that every \minkcutset is in the family $\family$; 
this will also guarantee that every $k$-cut-set in the returned subfamily is indeed a \minkcutset. We show this by induction on $k$.

For the base case of $k=1$, the only \minkcutset is the empty set which is contained in the returned family. We now show the induction step. Assume that $k\ge 2$. Let $F\subseteq E$ be a \minkcutset in $G$ and let $(V_1,\ldots,V_k)$ be an optimum $k$-partition for \hkcut such that $F=\delta(V_1,\ldots,V_k)$. We will show that $F$ is in the family returned by the algorithm. Let $U:=\cup_{i=1}^{\lfloor k/2\rfloor}V_i$. We distinguish between the following two cases:

\begin{enumerate}
    \item Suppose $d(U)<\opt$.
    
    By Theorem \ref{theorem: structure thm 1}, there exist disjoint subsets $S,T\subseteq V$ with $|S|,|T|\leq 2k-2$ such that $(U,\complement{U})$ is the unique minimum $(S,T)$-terminal cut. 
    Hence, the set $U$ is in the collection $\mathcal{C}$ 
    Moreover, $U$ contains $\lfloor k/2\rfloor$ non-empty sets $V_1,V_2,\ldots,V_{\lfloor k/2\rfloor}$, so we have $|U|\geq \lfloor k/2\rfloor$. Similarly, we have $|\complement{U}|\geq k-\lfloor k/2\rfloor$. Since $(V_1,\ldots,V_k)$ is an optimum $k$-partition for \hkcut, the set $\{e\in F\backslash \delta(U):e\subseteq U\}$ is a \textsc{min-$\lfloor k/2\rfloor$-cut-set} 
    in $G[U]$. Similarly, the set $\{e\in F\backslash \delta(U):e\subseteq \complement{U}\}$ is a 
    \textsc{min-$(k-\lfloor k/2\rfloor)$-cut-set} 
    in $G[\complement{U}]$. Since $d(U)<\opt$, we know that $G-\delta(U)$ has less than $k$ connected components. Therefore, the set $U$ is in the collection $\mathcal{C}$. By induction hypothesis, we know that the set $\{e\in F\backslash \delta(U):e\subseteq U\}$ is contained in the family $\mathcal{F}_{U}$ and the set $\{e\in F\backslash \delta(U):e\subseteq \complement{U}\}$ is contained in the family $\mathcal{F}'_U$. Therefore, the set $F$ is added to the family $\mathcal{F}$ in the second for-loop.
    
    \item Suppose $d(U)=\opt$.
    
    By Theorem \ref{thm:cut-set-recovery}, there exist sets $S\subseteq U$ and $T\subseteq \complement{U}$ with $|S|,|T|\leq 2k-1$ such that the source minimal minimum $(S,T)$-terminal cut $(A,\complement{A})$ satisfies $\delta(A)=\delta(U)=F$. 
    Therefore, the set $A$ is in the collection $\collection$. 
    Since $F=\delta(A)$, the hypergraph $G-\delta(A)$ contains at least $k$ connected components. Therefore, the set $F=\delta(A)$ is added to the family $\mathcal{F}$ in the first for-loop.
\end{enumerate}
Thus, in both cases, we have shown that the set $F$ is contained in the family $\mathcal{F}$. Since the algorithm returns the subfamily of hyperedge sets in $\mathcal{F}$ that are \minkcutsets, the set $F$ is in the family returned by the algorithm.

Next, we bound the run time of the algorithm. Let $N(k,n)$ denote the run time of the algorithm for a $n$-vertex hypergraph. Then, we have $N(1,n)=O(1)$. 
For $k\geq 2$, there are $O(n^{4k-2})$ pairs of subsets $S,T\subseteq V$ with $|S|,|T|\leq 2k-1$ and $S\cap T=\emptyset$. Hence, the first for-loop performs $O(n^{4k-2})$ source minimal minimum $(S,T)$-terminal cut computations. The collection $\mathcal{C}$ and the family $\mathcal{F}$ at the end of the first for-loop each have $O(n^{4k-2})$ sets. This implies that the first for-loop can be implemented to run in $O(n^{4k-2})T(n,p)$ time. For each $A\in\mathcal{C}$, the computation of Enum-Cut-Sets$(G[A],\lfloor k/2\rfloor)$ in the second for-loop runs in $N(\lfloor k/2\rfloor,n)$ time. The computation of Enum-Cut-Sets$(G[\complement{A}],k-\lfloor k/2\rfloor)$ in the second for-loop runs in $N(k-\lfloor k/2\rfloor,n)$ time. Hence, the second for-loop can be implemented to run in $O(n^{4k-2})N(\lfloor k/2\rfloor,n)N(k-\lfloor k/2\rfloor, n)$ time. The last step to prune the family $\mathcal{F}$ can be implemented to run in time that is linear in the time to implement the first and second for-loops: this is because for each member of $\mathcal{F}$, we can decide whether it is a minimum $k$-cut-set in time linear in the time to write this member in $\mathcal{F}$. 
Therefore, we have
\[N(k,n)=O\left(n^{4k-2}\right)T(n,p)+O\left(n^{4k-2}\right)N\left(\left\lfloor \frac{k}{2}\right\rfloor,n\right)N\left(k-\left\lfloor \frac{k}{2}\right\rfloor, n\right).\]
Since $N(1,n)=O(1)$, we have that $N(k,n)=O(n^{(8k-6)\log k})T(n,p)$. 

Finally, we bound the cardinality of the family returned by the algorithm. 
Let $f(k,n)$ be the cardinality of the family returned by the algorithm for a $n$-vertex hypergraph. We note that $f(k,n)$ is at most the cardinality of the family $\family$ computed by the algorithm. 
There are $O(n^{4k-2})$ pairs of subsets $S,T\subseteq V$ with $|S|,|T|\leq 2k-1$ and $S\cap T=\emptyset$. Hence, the total cardinality of the collection $\collection$ and the family $\mathcal{F}$ at the end of the first for-loop is $O(n^{4k-2})$. Consequently, for $k\ge 2$, by the recursion, we have that 
\[
f(k,n)=O(n^{4k-2})f\left(\left\lfloor \frac{k}{2}\right\rfloor,n\right)f\left(k-\left\lfloor \frac{k}{2}\right\rfloor, n\right)
\]
and $f(1,n)=1$. So, $f(k,n)=O(n^{(8k-6)\log k})$.

\end{proof}

We recall that the number of \minkcutsets in a $n$-vertex hypergraph is $O(n^{2k-2})$ \cite{CXY19}. Assuming this bound improves the run-time of Algorithm Enum-Cut-Sets$(G,k)$ in Figure \ref{Algo: DC enum-cuts}.

\begin{lemma}\label{lemma:DC-better-run-time}
Algorithm Enum-Cut-Sets$(G,k)$ in Figure \ref{Algo: DC enum-cuts} can be implemented to run in time $O(n^{16k-26})T(n,p)$, where $n$ is the number of vertices, $p$ is the size of the input hypergraph $G$, and $T(n,p)$ denotes the time complexity for computing the source minimal minimum $(s,t)$-terminal cut in a $n$-vertex hypergraph of size $p$.
\end{lemma}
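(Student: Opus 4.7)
The correctness of Algorithm DC-Enum-Cuts has already been established in the proof of Theorem \ref{theorem: DC enum algorithm}; what changes in the new analysis is how tightly we can bound the size of the family returned by each recursive call. Rather than rely on the self-derived bound $O(n^{(8k-6)\log k})$, I would use the known structural bound from \cite{CXY19} that an $n$-vertex hypergraph has at most $O(n^{2k-2})$ \minkcutsets. Since the algorithm returns precisely the family of \minkcutsets, for any $A\in\mathcal{C}$ and any $j\in\{\lfloor k/2\rfloor,\,k-\lfloor k/2\rfloor\}$, the family $\mathcal{F}_A$ returned by the recursive call \textsc{DC-Enum-Cuts}$(G[A],j)$ has size $O(n^{2j-2})$.

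Next I would re-account the cost of each stage to obtain a sharper recurrence. The first for-loop still costs $O(n^{4k-2})T(n,p)$. In the second for-loop, for each of the $O(n^{4k-2})$ sets $A\in\mathcal{C}$, the two recursive calls cost $N(\lfloor k/2\rfloor,n)+N(k-\lfloor k/2\rfloor,n)$, and then the combining step writes $|\mathcal{F}_A|\cdot|\mathcal{F}'_A|=O(n^{2\lfloor k/2\rfloor-2})\cdot O(n^{2(k-\lfloor k/2\rfloor)-2})=O(n^{2k-4})$ unions, each in $O(p)$ time. Summed over $\mathcal{C}$ this is $O(n^{6k-6}p)$, which is $O(n^{6k-6})T(n,p)$ since $T(n,p)=\Omega(p)$. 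The same bound covers the final pruning step, as $|\mathcal{F}|=O(n^{6k-6})$ and each candidate can be tested for being a $k$-cut-set in $O(p)$ time. Altogether,
\[
N(k,n)\ \le\ O(n^{6k-6})\,T(n,p)\ +\ O(n^{4k-2})\bigl[N(\lfloor k/2\rfloor,n)+N(k-\lfloor k/2\rfloor,n)\bigr],
\]
with $N(1,n)=O(1)$.

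Finally I would solve this recurrence by strong induction on $k$ to conclude $N(k,n)\le Cn^{16k-26}T(n,p)$ for all $k\ge 2$ and a suitably large absolute constant $C$. The base case $k=2$ is immediate: the recursive calls are on $k=1$ and terminate in $O(1)$ time, so $N(2,n)=O(n^{6})T(n,p)=O(n^{16\cdot 2-26})T(n,p)$. For the inductive step with $k\ge 3$, I would apply the inductive hypothesis to each non-trivial recursive call (plugging in $N(1,n)=O(1)$ directly in the sole edge case where $\lfloor k/2\rfloor=1$, namely $k=3$) and use $\lceil k/2\rceil\le(k+1)/2$ to obtain
\[
N(k,n)\ \le\ c_1 n^{6k-6}T(n,p)+2c_2 C\,n^{4k-2+16\lceil k/2\rceil-26}T(n,p)\ \le\ O(n^{\max(6k-6,\,12k-20)})T(n,p),
\]
and both exponents $6k-6$ and $12k-20$ are at most $16k-26$ for $k\ge 2$, so the induction closes by choosing $C$ large enough relative to the implicit constants in the recurrence. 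The main obstacle is really just careful bookkeeping between $p$ and $T(n,p)$ and verifying that $C$ can be made to absorb the $c_1$ and $c_2$ constants; this is routine because the exponent slack $(16k-26)-(12k-20)=4k-6$ is strictly positive for $k\ge 2$, giving plenty of room for the induction.
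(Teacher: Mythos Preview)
Your proposal is correct and follows essentially the same approach as the paper: you invoke the $O(n^{2k-2})$ bound on \minkcutsets from \cite{CXY19} to bound $|\mathcal{F}_A|$ and $|\mathcal{F}'_A|$, derive the same recurrence (up to harmless repackaging of the $O(n^{4k-2})T(n,p)$ and $O(n^{6k-6})p$ terms into a single $O(n^{6k-6})T(n,p)$ term), and solve it. The paper simply asserts that ``solving the recursive relation gives $N(k,n)=O(n^{16k-26})T(n,p)$,'' whereas you spell out the strong induction and check the exponent arithmetic $\max(6k-6,12k-20)\le 16k-26$, which is a welcome addition of detail but not a different idea.
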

\begin{proof}
Let $N(k,n)$ denote the run time of the algorithm for a $n$-vertex hypergraph. Then, we have $N(1,n)=O(1)$. For $k\geq 2$, there are $O(n^{4k-2})$ pairs of subsets $S,T\subseteq V$ with $|S|,|T|\leq 2k-1$ and $S\cap T=\emptyset$. Hence, the first for-loop performs $O(n^{4k-2})$ source minimal minimum $(S,T)$-terminal cut computations. 
The collection $\mathcal{C}$ and the family $\mathcal{F}$ at the end of the first for-loop each have $O(n^{4k-2})$ sets. This implies that the first for-loop can be implemented to run in $O(n^{4k-2})T(n,p)$ time. For each $A\in\mathcal{C}$, the computation of Enum-Cut-Sets$(G[A],\lfloor k/2\rfloor)$ in the second for-loop runs in $N(\lfloor k/2\rfloor,n)$ time. The computation of Enum-Cut-Sets$(G[\complement{A}],k-\lfloor k/2\rfloor)$ in the second for-loop runs in $N(k-\lfloor k/2\rfloor,n)$ time. We recall that $\family_A$ consists of all minimum $\lfloor k/2\rfloor$-cut-sets in a $n$-vertex graph and hence, has size $O(n^{2\lfloor k/2\rfloor-2})$. Similarly, $\family'_A$ has size $O(n^{2(k-\lfloor k/2\rfloor)-2})$. 
Hence, the second for-loop can be implemented to run in time
\[
O\left(n^{4k-2}\right)\left(N\left(\left\lfloor \frac{k}{2}\right\rfloor,n\right)+N\left(k-\left\lfloor \frac{k}{2}\right\rfloor, n\right)+O(n^{2\lfloor k/2\rfloor-2})O(n^{2(k-\lfloor k/2\rfloor)-2})\right)
\]
\[
=O\left(n^{4k-2}\right)\left(N\left(\left\lfloor \frac{k}{2}\right\rfloor,n\right)+N\left(k-\left\lfloor \frac{k}{2}\right\rfloor, n\right)+O(n^{2k-4})\right).
\]
Moreover, the size of the family $\mathcal{F}$ at the end of the second for-loop is $O(n^{4k-2})+O(n^{4k-2})|\mathcal{F}_A|\cdot|\mathcal{F}'_A|= O(n^{-2})O(n^{2k-4})=O(n^{6k-6})$. Hence, the last step to prune the family $\mathcal{F}$ can be implemented to run in time $O(n^{6k-6})$. 
Hence, the second for-loop and the last step can together be implemented to run in time 
\[
O\left(n^{4k-2}\right)\left(N\left(\left\lfloor \frac{k}{2}\right\rfloor,n\right)+N\left(k-\left\lfloor \frac{k}{2}\right\rfloor, n\right)+O(n^{2k-4})\right).
\]
Therefore, we have
\[
N(k,n)
=O\left(n^{4k-2}\right)\left(T(n, p) + N\left(\left\lfloor \frac{k}{2}\right\rfloor,n\right)+N\left(k-\left\lfloor \frac{k}{2}\right\rfloor, n\right)+O(n^{2k-4})\right).
\]
Solving the recursive relation gives $N(k,n)=O(n^{16k-26})T(n,p)$. 
\end{proof}

\section{Algorithm for \enumMMh}\label{sec:enummmh-algo}

In this section, we design a 
deterministic algorithm for \enumMMh that runs in time $n^{O(k^2)}p$, where $n$ is the number of vertices 
and $p$ is the size of the input hypergraph. For this, we rely on the notion of $k$-cut-set representatives. 

We recall that for a $k$-partition $(V_1, \ldots, V_k)$ and disjoint subsets $U_1, \ldots, U_k\subseteq V$, the $k$-tuple $(U_1, \ldots, U_k)$ is defined to be a \emph{$k$-cut-set representative} of $(V_1, \ldots, V_k)$ if $U_i\subseteq V_i$ and $\delta(U_i)=\delta(V_i)$ for all $i\in [k]$. 
We first show that there exists a polynomial-time algorithm to verify whether a given $k$-tuple $(U_1, \ldots, U_k)$ is a $k$-cut-set representative. 


\begin{theorem}\label{Thm: hyp minmax-1}
Let $G=(V,E)$ be a $n$-vertex hypergraph of size $p$ and let $k$ be a positive integer. 
Then, there exists an algorithm that takes as input the hypergraph $G$ and disjoint subsets $U_1, \ldots, U_k\subseteq V$ and runs in time $O(knp)$ to decide if 
$(U_1, \ldots, U_k)$ is a $k$-cut-set representative of some $k$-partition $(V_1, \ldots, V_k)$
and if so, then return such a $k$-partition. 
\end{theorem}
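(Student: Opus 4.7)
The plan is to characterize when $(U_1, \ldots, U_k)$ is a $k$-cut-set representative by two conditions and then construct the partition greedily. Let $R := V \setminus \bigcup_{i \in [k]} U_i$, and for each hyperedge $e$ let $I_e := \{i : e \cap U_i \neq \emptyset\}$. The first necessary condition I would prove is that no hyperedge has $|I_e|=1$ together with $e \cap R \neq \emptyset$: indeed, if $I_e = \{i\}$ and $e$ has an $R$-vertex, then $e \in \delta(U_i)$ forces some $R$-vertex of $e$ to lie in $V_j$ for $j \neq i$, giving $e \in \delta(V_j) \setminus \delta(U_j)$, a contradiction. The second necessary condition is that for each $v \in R$, the set $P(v) := \{j \in [k] : C^j(v) \cap U_i = \emptyset \text{ for all } i \neq j\}$ is non-empty, where $C^j(v)$ is the connected component of $v$ in $G - \delta(U_j)$. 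Necessity follows because $\delta(V_j) = \delta(U_j)$ forces $V_j$ to be a union of connected components of $G - \delta(U_j)$, so $v \in V_j$ implies $C^j(v) \subseteq V_j$, and $V_j \cap U_i = \emptyset$ for $i \neq j$ gives $j \in P(v)$.

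I would then show sufficiency via an explicit construction. Define the rigid blocks as the connected components of the sub-hypergraph on $R$ consisting of hyperedges $e \subseteq R$. A key observation is that $P(v) = P(v')$ whenever $v, v'$ lie in the same rigid block $B$, since the paths witnessing their connectivity use hyperedges present in $G - \delta(U_j)$ for every $j$, so $C^j(v) = C^j(v')$. Hence $A(B) := \bigcap_{v \in B} P(v)$ equals $P(v)$ for any $v \in B$ and is non-empty by the second condition. The construction picks any $j(B) \in A(B)$ for each rigid block $B$ and sets $V_j := U_j \cup \bigcup \{B : j(B) = j\}$.

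The main technical obstacle is verifying $\delta(V_j) = \delta(U_j)$ for this construction, which I would prove by a hyperedge case analysis. For $|I_e| \geq 2$, we have $e \in \delta(U_j)$ iff $j \in I_e$; the same holds for $\delta(V_j)$ because any $R$-vertex $v \in e$ lying in a block $B$ assigned to $V_{j'}$ must satisfy $j' \in I_e$ (otherwise $C^{j'}(v)$ would contain a $U_i$-vertex of $e$ for some $i \in I_e \setminus \{j'\}$, violating $j' \in P(v)$). For $|I_e| = 1$, the first condition forces $e \subseteq U_i$, so $e$ lies in no $\delta(U_j)$ or $\delta(V_j)$. For $|I_e| = 0$, $e \subseteq R$ lies entirely within a single rigid block and hence a single $V_j$. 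A subtle but crucial sub-claim is that when $B$ is assigned to $V_j$ with $j \in A(B)$, the component $C^j(v)$ for $v \in B$ is exactly $B$ (a path from $v$ in $G - \delta(U_j)$ cannot leave $R$ without producing a forbidden $U_i$-vertex, and cannot leave $B$ via a hyperedge outside $G[R]$ for the same reason), ensuring that $V_j$ is indeed a union of components of $G - \delta(U_j)$. For the runtime: computing all $I_e$ and checking the first condition takes $O(p)$; for each of the $k$ indices, computing components of $G - \delta(U_j)$ is $O(p)$ and marking those touching some $U_i$ with $i \neq j$ is $O(n)$; computing the rigid blocks and constructing the final partition is $O(p + kn)$. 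The total runtime is $O(k(n+p))$, comfortably within the claimed $O(knp)$ bound.
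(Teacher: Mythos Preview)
Your approach is correct and genuinely different from the paper's, though both rest on the same block decomposition: the components $C_1,\ldots,C_t$ of $G-\bigcup_i\delta(U_i)$ that are disjoint from $\bigcup_i U_i$ coincide exactly with your rigid blocks (the components of $G[R]$). Where you diverge is in the assignment rule. The paper processes blocks greedily, assigning each $C_j$ to any index $i$ with $\delta(P_i\cup C_j)=\delta(P_i)$, and proves that if a witnessing partition $(V_1,\ldots,V_k)$ exists then the index $\ell$ with $C_j\subseteq V_\ell$ always works at the moment $C_j$ is processed. You instead give an explicit static characterization via two necessary conditions and show that, once they hold, \emph{any} choice $j(B)\in A(B)$ produces a valid partition---a stronger structural statement that also yields a cleaner $O(k(n+p))$ runtime versus the paper's $O(knp)$. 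Two minor remarks: your sub-claim that $C^j(v)=B$ is true but not actually needed, since your hyperedge case analysis already establishes $\delta(V_j)=\delta(U_j)$ directly; and your construction tacitly assumes each $U_i$ is non-empty (otherwise ``pick any $j(B)\in A(B)$'' can leave some $V_j$ empty even when a valid partition exists), which matches the paper's convention that $\delta$ is defined only for non-empty proper subsets.
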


\begin{proof}
We will use Algorithm Recover-Partition$(G, U_1, \ldots, U_k)$ in Figure \ref{Algo: hyp minmax-1}.

\begin{figure*}[ht]
\centering\small
\begin{algorithm}
\textul{Algorithm Recover-Partition$(G=(V,E),U_1,\ldots,U_k)$}\+
\\{\bf Input:} Hypergraph $G=(V,E)$ and disjoint subsets $U_1, \ldots, U_k\subseteq V$
\\{\bf Output:} Decide if there exists a $k$-partition $(V_1,\ldots,V_k)$ of $V$ \+\+\\
$\,\;$with $U_i\subseteq V_i$ and $\delta(U_i)=\delta(V_i)$ $\forall i\in[k]$, 
and return one if it exists\-\-
\\Initialize $P_i\leftarrow U_i$ for all $i\in [k]$ 
\\Let $C_1,\ldots,C_t\subseteq V$ be the components of $G-\cup_{i=1}^k\delta(U_i)$ that are disjoint from $\cup_{i=1}^k U_i$
\\For $j=1,\ldots,t$\+
\\If $\exists i\in [k]$ such that $\delta(P_i\cup C_j)=\delta(P_i)$\+
\\$P_i\gets P_i\cup C_j$\-\- 
\\If $(P_1,\ldots,P_k)$ is a $k$-partition of $V$\+
\\Return $(P_1,\ldots,P_k)$\-
\\Else\+
\\Return NO
\end{algorithm}
\caption{Algorithm in Theorem \ref{Thm: hyp minmax-1}}
\label{Algo: hyp minmax-1}
\end{figure*}

We begin by showing correctness. 
Since Algorithm \ref{Algo: hyp minmax-1} maintains $U_i\subseteq P_i$ and $\delta(U_i)=\delta(P_i)$ for all $i\in[k]$, if it returns a $k$-partition, then the $k$-partition necessarily satisfies the required conditions. Next, we show that if $(U_1, \ldots, U_k)$ is a $k$-cut-set representative of a $k$-partition $(V_1, \ldots, V_k)$, then the algorithm will indeed return a $k$-partition $(P_1, \ldots, P_k)$ with $U_i\subseteq P_i$ and $\delta(U_i)=\delta(P_i)$ for all $i\in [k]$ 
(however, $(P_1,\ldots, P_k)$ may not necessarily be the same as $(V_1, \ldots, V_k)$). 

Let $(V_1, \ldots, V_k)$ be a $k$-partition such that $U_i\subseteq V_i$ and $\delta(U_i)=\delta(V_i)$ for all $i\in [k]$. Let $(P_1, \ldots, P_k)$ be the sequence of subsets at the end of the for-loop. Moreover, for each $j\in[t]$, let $(P^j_1, \ldots, P^j_k)$ be the sequence of subsets at the end of the $j$th iteration of the for-loop. For notational convenience, for $i\in [k]$, we will define $P_i^0:=U_i$. 
We note that $(P_1, \ldots, P_k)=(P^t_1, \ldots, P^t_k)$ and that $P_i^0\subseteq P_i^1\subseteq\ldots\subseteq P_i^t$ for every $i\in[k]$. 
We observe that $U_i\subseteq P_i^j$ and $\delta(U_i)=\delta(P_i^j)$ for all $j\in \{0, 1, 2, \ldots, t\}$. Moreover, the subsets $P_1^j, \ldots, P_k^j$ are pairwise disjoint for each $j\in \{0, 1,2, \ldots, t\}$. Therefore, it suffices to show that $\cup_{i=1}^k P_i = V$. 

We claim that $C_1\cup\ldots\cup C_{j}\subseteq\cup_{i=1}^k P_i^{j}$ for each $j\in\{0,1,\ldots,t\}$. Applying this claim for $j=t$ gives that $\cup_{i=1}^k P_i=V$ as desired. We now show the claim by induction on $j$. The base case of $j=0$ holds by definition. 
We now prove the induction step. By induction hypothesis, we have that  $C_1\cup\ldots\cup C_{j-1}\subseteq\cup_{i=1}^k P_i^{j-1}$. 
We will show that there exists $i\in [k]$ such that $\delta(P_i^{j-1}\cup C_j)=\delta(P_i^{j-1})$. 
We know that $C_j$ is contained in one of the sets in $\{V_1, \ldots, V_k\}$, say $C_j\subseteq V_\ell$ for some $\ell\in [k]$. 
We will prove that $\delta(P_{\ell}^{j-1}\cup C_j)=\delta(P_{\ell}^j)$ to complete the proof of the claim. 
Since $C_j$ is a component of $G-\cup_{i=1}^k\delta(U_i)=G-\cup_{i=1}^k\delta(V_i)$, we know that each hyperedge in $\delta(C_j)$ crosses the $k$-partition $(V_1, \ldots, V_k)$. Moreover, each hyperedge in $\delta(C_j)$ intersects $C_j\subseteq V_\ell$, and hence,  $\delta(C_j)\subseteq \delta(V_\ell)$. Therefore,
\[
\delta(P_\ell^{j-1}\cup C_j)-\delta(P_\ell^{j-1})\subseteq \delta(C_j)\subseteq \delta(V_\ell)=\delta(U_\ell)=\delta(P_\ell^{j-1}).
\]
This is possible only if $\delta(P_\ell^{j-1}\cup C_j)-\delta(P_\ell^{j-1})=\emptyset$, i.e., $\delta(P_\ell^{j-1}\cup C_j)\subseteq\delta(P_\ell^{j-1})$. We now show show the reverse inclusion. We have that
\begin{align*}
    \delta(P_\ell^{j-1})-\delta(P_\ell^{j-1}\cup C_j)&=\delta(U_\ell)-\delta(P_\ell^{j-1}\cup C_j)
    \\&=\delta(V_\ell)-\delta(P_\ell^{j-1}\cup C_j)
    \\&=E(V_\ell-P_\ell^{j-1}-C_j,V-V_\ell-P_\ell^{j-1})\cup E(V_\ell\cap P_\ell^{j-1},P_\ell^{j-1}-V_\ell)
    \\&\subseteq E[V-P_\ell^{j-1}]\cup E[P_\ell^{j-1}].
\end{align*}
We note that the LHS is a subset of $\delta(P_\ell^{j-1})$ while the RHS is disjoint from $\delta(P_\ell^{j-1})$ since $E[V-P_\ell^{j-1}]\cap \delta(P_\ell^{j-1})=\emptyset$ and $E[P_\ell^{j-1}]\cap \delta(P_\ell^{j-1})=\emptyset$. Hence, the above containment is possible only if $\delta(P_\ell^{j-1})-\delta(P_\ell^{j-1}\cup C_j)=\emptyset$ and hence, $\delta(P_\ell^{j-1})\subseteq \delta(P_\ell^{j-1}\cup C_j)$. Consequently,  $\delta(P_\ell^{j-1}\cup C_j)=\delta(P_\ell^{j-1})$.




We now bound the run-time. We can verify if there exists $i\in [k]$ such that $\delta(P_i\cup C_j)=\delta(P_i)$ in time $O(kp)$. The number of iterations of the for-loop is $t\le n$. Hence, the total run-time is $O(knp)$. 
\end{proof}

Next, we address the problem of enumerating all \minmaxkcutsets. For this, we define a sub-problem---namely \enumkcutsetreps. The input here is a hypergraph $G=(V,E)$ and a fixed positive integer $k$ (e.g., $k=2, 3, 4, \ldots$). The goal is to 
enumerate a family $\family$ of $k$-cut-set representatives satisfying the following two properties: 
\begin{enumerate}[(1)]
    \item every $k$-tuple $(U_1, \ldots, U_k)$ in the family $\family$ is a $k$-cut-set representative of some optimum $k$-partition $(V_1, \ldots, V_k)$ for \mmh and 
    \item for every optimum $k$-partition $(V_1, \ldots, V_k)$ for \mmh, the family $\family$ contains a $k$-cut-set representative $(U_1, \ldots, U_k)$ of $(V_1, \ldots, V_k)$. 
\end{enumerate}
We note that if a family $\family$ is a solution to \enumkcutsetreps, then returning $\{\cup_{i=1}^k \delta(U_i): (U_1, \ldots, U_k)\in \family \}$ solves \enumMMh. 
Hence, it suffices to solve \enumkcutsetreps in order to solve \enumMMh. 
We describe our algorithm for \enumkcutsetreps in Figure \ref{Algo: hyp minmax-2} and its guarantees in Theorem \ref{Thm: hyp minmax-2}. Theorem \ref{thm:minmax-enumeration} follows from Theorem \ref{Thm: hyp minmax-2}. 

\begin{theorem}\label{Thm: hyp minmax-2}
Let $G=(V,E)$ be a $n$-vertex hypergraph of size $p$ and let $k$ be a positive integer. Then, Algorithm Enum-MinMax-Reps$(G, k)$ in Figure \ref{Algo: hyp minmax-2} 
solves \enumkcutsetreps 
and  
it can be implemented to run in time $O(kn^{4k^2-2k+1}p)$. Moreover, the cardinality of the family returned by the algorithm is $O(n^{4k^2-2k})$.
\end{theorem}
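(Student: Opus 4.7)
The plan is to analyze the two-phase Algorithm Enum-MinMax-Reps described informally in Section \ref{sec:techniques}. In Phase~1, the algorithm builds a collection $\collection$ by running, for every pair of disjoint subsets $(S,T)$ with $|S|,|T|\le 2k-1$, a source minimal minimum $(S,T)$-terminal cut computation and inserting the source side into $\collection$. In Phase~2, it enumerates every $k$-tuple of pairwise disjoint members of $\collection$, invokes the verification algorithm of Theorem \ref{Thm: hyp minmax-1} on each, records the certified cut-set representatives in a family $\mathcal{D}$, and finally returns the sub-family $\mathcal{D}^{*}\subseteq\mathcal{D}$ minimizing $\max_{i}d(U_{i})$.

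The substantive content is Property~(2) of \enumkcutsetreps. I would fix an optimum $k$-partition $(V_1,\ldots,V_k)$ for \mmh and construct a representative $(U_1,\ldots,U_k)\in\collection^k$ with $U_i\subseteq V_i$ and $\delta(U_i)=\delta(V_i)$. The key observation is $\optmm\le\optk$ (since every hyperedge in $\delta(V_1,\ldots,V_k)$ contributes to some $d(V_j)$, hence $\max_j d(V_j)\le|\delta(V_1,\ldots,V_k)|$), which gives $d(V_i)\le\optk$ for every $i$. I would split into two cases: (a) if $d(V_i)<\optk$, apply Theorem \ref{theorem: structure thm 1} to $(V_i,\complement{V_i})$ with an arbitrary pair $s\in V_i$, $t\in\complement{V_i}$; this yields disjoint $S,T$ of size $\le 2k-2$ for which $V_i$ is the unique minimum $(S,T)$-terminal cut, so $U_i:=V_i$ lies in $\collection$; (b) if $d(V_i)=\optk$, apply Theorem \ref{thm:cut-set-recovery} to $(V_i,\complement{V_i})$ to obtain $S\subseteq V_i$ and $T\subseteq\complement{V_i}$ of size $\le 2k-1$ such that every minimum $(S,T)$-terminal cut has cut-set equal to $\delta(V_i)$, and take $U_i$ to be the source minimal such cut. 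In either case, the tuple $(U_1,\ldots,U_k)$ is a $k$-cut-set representative of $(V_1,\ldots,V_k)$ consisting of pairwise disjoint members of $\collection$, so it is added to $\mathcal{D}$ in Phase~2; since $\max_i d(U_i)=\max_i d(V_i)=\optmm$, it survives the final pruning.

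Property~(1) will then follow quickly: any tuple returned by the algorithm is in $\mathcal{D}$, hence Algorithm \ref{Algo: hyp minmax-1} certifies it as a $k$-cut-set representative of some $k$-partition $(V'_1,\ldots,V'_k)$ with $\max_i d(V'_i)=\max_i d(U_i)$ equal to the minimum over $\mathcal{D}$, which by Property~(2) is at most $\optmm$; hence $(V'_1,\ldots,V'_k)$ is an optimum $k$-partition for \mmh. For the run-time, Phase~1 examines $O(n^{4k-2})$ pairs and performs one $O(np)$-time source minimal terminal cut computation per pair, costing $O(n^{4k-1}p)$. Phase~2 iterates over $|\collection|^{k}=O(n^{(4k-2)k})=O(n^{4k^{2}-2k})$ tuples, and each invocation of Theorem \ref{Thm: hyp minmax-1} costs $O(knp)$, giving a dominant $O(kn^{4k^{2}-2k+1}p)$. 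The cardinality of the returned family is bounded by $|\mathcal{D}|\le|\collection|^{k}=O(n^{4k^{2}-2k})$.

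The main obstacle is the containment $U_i\subseteq V_i$ in case~(b) above: Theorem \ref{thm:cut-set-recovery} as stated only promises $\delta(A)=\delta(V_i)$ for every minimum $(S,T)$-terminal cut $(A,\overline{A})$, but the definition of $k$-cut-set representative also demands $U_i\subseteq V_i$. I would close this gap by reaching into the proof of Theorem \ref{thm:cut-set-recovery}, specifically Claim \ref{clm:div-and-conq-dYeqdU}, which via submodular uncrossing of the source minimal minimum $(S,T)$-terminal cut $(Y,\overline{Y})$ with $(V_i,\overline{V_i})$ establishes precisely $Y\subseteq V_i$. This containment, together with the cut-set identity, makes $Y$ an admissible choice for $U_i$ and ensures that $(U_1,\ldots,U_k)$ has pairwise disjoint components (as the $V_i$'s are disjoint).
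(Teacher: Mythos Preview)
Your proposal is correct and follows essentially the same route as the paper: build $\collection$ from source minimal minimum $(S,T)$-terminal cuts over all small pairs, use $\optmm\le\optk$ together with Theorems~\ref{theorem: structure thm 1} and~\ref{thm:cut-set-recovery} to place a representative $U_i$ of each part $V_i$ of an optimum \minmax $k$-partition into $\collection$, verify tuples via Theorem~\ref{Thm: hyp minmax-1}, and prune. The one place you flag as an obstacle---the containment $U_i\subseteq V_i$ in the equality case---is handled in the paper by the single sentence ``source minimality of the cut $(U_i,\complement{U_i})$ also guarantees that $U_i\subseteq V_i$,'' which is exactly the submodular-uncrossing argument of Claim~\ref{clm:div-and-conq-dYeqdU} you invoke (since $(V_i,\complement{V_i})$ is itself a minimum $(S,T)$-terminal cut once $\delta(U_i)=\delta(V_i)$); so your resolution matches the paper's, just made explicit.
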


\begin{figure*}[ht]
\centering\small
\begin{algorithm}
\textul{Algorithm Enum-MinMax-Reps$(G=(V,E),k)$}\+
\\{\bf Input:} Hypergraph $G=(V,E)$ and an integer $k\geq 2$
\\{\bf Output:} Family $\mathcal{F}$ of $k$-cut-set representatives of all optimum $k$-partitions\+\+
\\$\,\;$for \mmh\-\- 
\\Initialize $\mathcal{C}\gets\emptyset$, $\mathcal{D}\gets\emptyset$, and  $\mathcal{F}\gets\emptyset$
\\For each pair $(S,T)$ such that $S,T\subseteq V$ with $S\cap T=\emptyset$ and $|S|$, $|T|\leq 2k-1$ \+
\\ Compute the source minimal minimum $(S,T)$-terminal cut $(U,\complement{U})$
\\$\mathcal{C}\gets\mathcal{C}\cup\{U\}$\-
\\For all $(U_1,\ldots, U_k)\in\mathcal{C}^k$ such that $U_1,\ldots, U_k$ are pairwise disjoint\+
\\If Recover-Partition$(G,U_1,\ldots, U_k)$ returns a $k$-partition\+ 
\\$\mathcal{D}\gets\mathcal{D}\cup\{(U_1,\ldots,U_k)\}$\-\-
\\$\lambda \leftarrow \min\{\max_{i\in [k]}d(U_i): (U_1, \ldots, U_k)\in \mathcal{D}\} $
\\For all $(U_1, \ldots, U_k)\in \mathcal{D}$ such that $\max_{i\in [k]}d(U_i)=\lambda$:\+
\\$\mathcal{F}\leftarrow \mathcal{F}\cup\{(U_1, \ldots, U_k)\}$\-
\\ Return $\mathcal{F}$
\end{algorithm}
\caption{Algorithm in Theorem \ref{Thm: hyp minmax-2}}
\label{Algo: hyp minmax-2}
\end{figure*}

\begin{proof}

We begin by showing correctness---i.e., the family $\family$ returned by the algorithm satisfies properties (1) and (2) mentioned in the definition of \enumkcutsetreps. 
By the second for-loop, each $k$-tuple added to the collection $\mathcal{D}$ is a $k$-cut-set representative of some $k$-partition (it need not necessarily be a $k$-cut-set representative of an optimum  $k$-partition for \mmh). The algorithm returns a subfamily of $\mathcal{D}$ and hence, it returns a subfamily of $k$-cut-set representatives. 
We only have to show that a $k$-cut-set representative of an arbitrary optimum $k$-partition for \mmh is present in the family $\mathcal{D}$; this will guarantee that 
the value $\lambda$ computed by the algorithm will exactly be $\optmm$ and owing to the way in which the algorithm constructs the family $\family$ from the family $\mathcal{D}$, it follows that the family $\family$ satisfies properties (1) and (2). 


Let $\optmm$ denote the optimum value of a \minmax $k$-partition in $G$ and let $\optk$ denote the optimum value of a minimum $k$-cut in $G$. We note that $\optmm\le \optk$. This is because, if $(P_1, \ldots, P_k)$ is a $k$-partition with minimum $|\delta(P_1, \ldots, P_k)|$ (i.e., an optimum $k$-partition for \hkcut), then 
\[
\optmm\le \max_{i\in [k]}|\delta(P_i)|\le |\delta(P_1, \ldots, P_k)|=\optk. 
\]

Let $(V_1, \ldots, V_k)$ be an arbitrary optimum $k$-partition for \mmh. We will show that the family $\mathcal{F}$ returned by the algorithm contains a $k$-cut-set representative of $(V_1, \ldots, V_k)$. 
We have that $d(V_i)\leq \optmm\le \optk$ for all $i\in [k]$. Hence, by Theorems \ref{theorem: structure thm 1} and \ref{thm:cut-set-recovery}, there exist subsets $S_i\subseteq V_i$, $T_i\subseteq V-V_i$ with $|S_i|,|T_i|\leq 2k-1$ such that the source minimal minimum $(S_i,T_i)$-terminal cut $(U_i,\complement{U_i})$ satisfies $\delta(U_i)=\delta(V_i)$ for all $i\in [k]$. 
Source minimality of the cut $(U_i,\complement{U_i})$ also guarantees that $U_i\subseteq V_i$ for all $i\in[k]$. Hence, the $k$-tuple $(U_1, \ldots, U_k)$ is a $k$-cut-set representative of $(V_1, \ldots, V_k)$. It remains to show that this $k$-tuple is indeed present in the families $\mathcal{D}$ and $\mathcal{F}$. 
We note that the sets $U_1, \ldots, U_k$ are added to the collection $\mathcal{C}$ in the first for-loop. Since the $k$-tuple $(U_1, \ldots, U_k)$ is a $k$-cut-set representative of the $k$-partition $(V_1, \ldots, V_k)$, the $k$-tuple $(U_1, \ldots, U_k)$ will be added to the family $\mathcal{D}$ in the second for-loop. Since the family $\mathcal{D}$ contains only $k$-cut-set representatives of $k$-partitions, it follows that $\lambda = \optmm$ and $(U_1, \ldots, U_k)$ will be added to the family $\mathcal{F}$ in the third for-loop. Hence, the $k$-cut-set representative $(U_1, \ldots, U_k)$ of the optimum $k$-partition $(V_1, \ldots, V_k)$ for \mmh is present in the family $\mathcal{F}$ returned by the algorithm. 

The bound on the size of the family $\mathcal{F}$ returned by the algorithm is
\[
|\mathcal{F}|\le |\mathcal{D}| \le |\mathcal{C}|^k = O(n^{k(4k-2)}). 
\]

Next, we bound the run time of the algorithm. 
The first for-loop can be implemented to run in time $O(n^{4k-2})T(n, p)$. 
The second for-loop executes the algorithm from Theorem \ref{Thm: hyp minmax-1}  $O(n^{4k^2-2k})$ times and hence, the second for-loop can be implemented to run in time $O(kn^{4k^2-2k+1}p)$. The computation of $\lambda$ and the third for-loop can be implemented to run in time $O(|\mathcal{D}|)=O(n^{4k^2-2k})$. 
Hence, the total run-time is $O(n^{4k-2})T(n,p)+O(kn^{4k^2-2k+1}p)$. We recall that $T(n, p)=O(n p)$ and hence, the total run-time is $O(kn^{4k^2-2k+1}p)$.


\end{proof}


\section{A lower bound on the number of \minmaxkcutsets}
\label{sec:lower-bound}

In this section, we show that there exist $n$-vertex connected graphs for which the number of \minmaxkcutset{s} is $n^{\Omega(k^2)}$. In particular, we show the following result. 

\begin{lemma}
For every positive integer $k\ge 2$, there exists a positive integer $n$ such that the number of optimum $k$-partitions for \mmg in the $n$-vertex complete graph is $n^{\Omega(k^2)}$. 
\end{lemma}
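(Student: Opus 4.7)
The plan is to exhibit, for each $k \geq 3$, a specific value of $n$ for which the quadratic cut-value function of the complete graph forces every optimum $k$-partition for \mmg to be balanced, and then lower-bound the number of such balanced ordered $k$-partitions by a multinomial coefficient. (The case $k=2$ is handled trivially by taking $n=2$.) Specifically, I would set $n := k(k-2)$. In $K_n$ the cut value of a subset $U$ is $d(U) = |U|(n-|U|)$, so the minmax objective of a $k$-partition with part sizes $s_1,\ldots,s_k$ equals $\max_i s_i(n-s_i)$.

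The first key step is to show that the unique optimum size profile is $(k-2,k-2,\ldots,k-2)$. Evaluating the balanced profile gives a max cut value of $(k-2)(n-(k-2)) = (k-1)(k-2)^2$. Solving $s(n-s) \leq (k-1)(k-2)^2$ (which reduces to $(s-n/2)^2 \geq (k-2)^4/4$) shows this is equivalent to $s \leq k-2$ or $s \geq (k-1)(k-2) = n-(k-2)$. The ``large'' regime is unreachable because in any $k$-partition with all parts non-empty, the largest part has size at most $n-(k-1) = (k-1)(k-2)-1$, strictly below $(k-1)(k-2)$. Hence every part of an optimum partition must lie in the ``small'' regime $s_i \leq k-2$. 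Combined with $\sum_i s_i = n = k(k-2)$ and $s_i \geq 1$, this forces $s_i = k-2$ for every $i$.

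Once the optimum profile is pinned down, the number of ordered optimum $k$-partitions is exactly the multinomial coefficient
\[
N \;:=\; \binom{n}{k-2,\,k-2,\,\ldots,\,k-2} \;=\; \frac{(k(k-2))!}{((k-2)!)^k}.
\]
A Stirling estimate gives $\log N = k(k-2)\log k - O(k\log k)$, and since $\log n \leq 2\log k$, this yields $\log N \geq (k^2/2 - o(k^2))\log n$, i.e.\ $N = n^{\Omega(k^2)}$, which is the desired bound.

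The main obstacle is the uniqueness argument in the second step: it requires choosing $n$ delicately so that the ``large'' regime $s \geq n-(k-2)$ falls just outside what any $k$-partition can realize, while the ``small'' regime $s \leq k-2$ forces the part sizes to saturate simultaneously. Picking $n = k(k-2)$ makes this slack equal to exactly one vertex, and that is what makes the uniqueness argument tight enough to rule out every non-balanced profile. Other natural choices such as $n = k^2$ or $n = 2k^2$ fail because the singleton-isolation partition $(n-k+1,1,\ldots,1)$ then beats the balanced one, and balanced partitions at smaller $n$ do not produce a multinomial coefficient of size $n^{\Omega(k^2)}$.
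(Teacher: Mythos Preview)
Your proof is correct and takes essentially the same approach as the paper: show that balanced $k$-partitions of a suitably sized complete graph are optimal for \mmg, then lower-bound the number of optima by the multinomial coefficient counting balanced ordered partitions. The paper uses $n=k(k-1)$ (parts of size $k-1$) rather than your $n=k(k-2)$, and only shows that balanced partitions \emph{are} optimal---your uniqueness argument, which you flag as the main obstacle, is in fact unnecessary for the lower bound (indeed, with the paper's choice the partition $(n-k+1,1,\ldots,1)$ is also optimal, yet the count still goes through).
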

\begin{proof}
Let $k\geq 2$ be fixed, and let $G=(V,E)$ be the complete graph on $n=k(k-1)$ vertices (with all edge weights being uniformly $1$). We will show that $\optmm=(k-1)^3$ and every partition of $V$ into $k$ parts of equal size is an optimum $k$-partition for \mmg. Since the number of partitions of $V$ into $k$ parts of equal size is $\Omega(k^n)=\Omega(n^{k^2/2})$, the lemma follows. 

First, we show that $\optmm\ge (k-1)^3$. For every partition of $V$ into $k$ non-empty parts, the largest part has at least $k-1$ vertices by pigeonhole principle, and at most $k(k-1)-(k-1)=(k-1)^2$ vertices since each of the remaining $k-1$ parts contain at least one vertex. Therefore, the cut value of the largest part is at least
\[\min_{x\in\{k-1,\ \ldots,\ (k-1)^2\}}x(n-x)=(k-1)^3.\]
The equality follows since $n=k(k-1)$ and the function $f(x)=x(n-x)$ is convex and is minimized at the boundaries. This implies that $\optmm\ge (k-1)^3$.

Next, we show that $\optmm \le (k-1)^3$ and every partition of $V$ into $k$ parts of equal size is an optimum $k$-partition for \mmg. 
Let $(V_1,\ldots,V_k)$ be an arbitrary $k$-partition of $V$ such that $|V_i|=k-1$ for all $i\in[k]$. The \mmg objective value of this $k$-partition is $(k-1)(n-(k-1))=(k-1)^3$. Thus, $(V_1,\ldots,V_k)$ is an optimum $k$-partition for \mmg in $G$. 

\end{proof}

We note that our example exhibiting $n^{\Omega(k^2)}$ optimum $k$-partitions for \mmg has the number of vertices $n$ upper bounded by a function of $k$. 
We are not aware of examples that exhibit $n^{\Omega(k^2)}$ optimum $k$-partitions for \mmg for fixed $k$ but arbitrary $n$ (e.g., $k=2,3,4,...$ but $n$ is arbitrary). 

\section{Conclusion}
\label{sec:conclusion}
We showed the first polynomial bound on the number of \minmaxkcutsets in hypergraphs for every fixed $k$ and gave a polynomial-time algorithm to enumerate all \minmaxkcutsets as well as all \minkcutsets in hypergraphs for every fixed $k$. 
Our main contribution is a structural theorem that 
is the backbone of the correctness analysis of our enumeration algorithms. 
In order to enumerate \minmaxkcutsets in hypergraphs, we introduced the notion of $k$-cut-set representatives and enumerated $k$-cut-set representatives of all optimum $k$-partitions for \mmh. 
Our technique builds on known structural results for \hkcut and \mmh \cite{CC20, CC21, BCW22}. 

The technique underlying our enumeration algorithms is not necessarily novel---we simply rely on minimum $(s,t)$-terminal cuts. 
Using fixed-terminal cuts to address global partitioning problems is not a novel technique by itself---it is common knowledge that minimum $(s, t)$-terminal cuts can be used to solve global minimum cut. 
However, there are several problems where naive use of this technique fails to lead to efficient algorithms: e.g., multiway cut does not help in solving \gkcut since multiway cut is NP-hard. Adapting this technique 
for specific partitioning problems requires careful identification of structural properties. In fact, beautiful structural properties have been shown for a rich variety of partitioning problems in combinatorial optimization in order to exploit this technique: 
for example, it was used (1) to design the first efficient algorithm for \gkcut \cite{GH94}, (2) to solve certain constrained submodular minimization problems \cite{GR95, NSZ19}, and (3) more recently, to design fast algorithms for global minimum cut in graphs and for Gomory-Hu tree in unweighted graphs \cite{LP20, AKT21}. Our use of this technique also relies on identifying and proving a suitable structural property, namely Theorem \ref{thm:cut-set-recovery}. 
The advantage of our structural property is that it simultaneously enables 
enumeration of \minkcutsets as well as \minmaxkcutsets in hypergraphs which was not possible via  structural theorems that were developed before. 
Furthermore, it helps in showing the first polynomial bound on the number of \minmaxkcutsets in hypergraphs for every fixed $k$. 


We also emphasize a limitation of our technique. Although it helps in solving \enumhkcut and \enumMMh, it does not help in solving a seemingly related hypergraph $k$-partitioning problem---namely, given a hypergraph $G=(V, E)$ and a fixed integer $k$, find a $k$-partition $(V_1, \ldots, V_k)$ of the vertex set that minimizes $\sum_{i=1}^k |\delta(V_i)|$. Natural variants of our structural theorem fail to hold for this objective. Resolving the complexity of this variant of the hypergraph $k$-partitioning problem for $k\ge 5$ remains open. 

We mention an open question concerning \hkcut and the enumeration of \minkcutsets in hypergraphs for fixed $k$. We recall the status in graphs: the number of minimum $k$-partitions in a connected graph was known to be $O(n^{2k-2})$ via Karger-Stein's algorithm \cite{KS96} and $\Omega(n^k)$ via the cycle example, where $n$ is the number of vertices; recent works have improved on the upper bound to match the lower bound for fixed $k$---this improvement in upper bound also led to the best possible $O(n^k)$-time algorithm for \gkcut for fixed $k$ \cite{GLL19-STOC, GLL20-STOC, GHLL20}. For hypergraphs, the number of \minkcutsets is known to be $O(n^{2k-2})$ and $\Omega(n^k)$. Can we improve the upper/lower bound? Is it possible to design an algorithm for \hkcut that runs in time $O(n^kp)$?


\bibliographystyle{amsplain}
\bibliography{references}

\providecommand{\bysame}{\leavevmode\hbox to3em{\hrulefill}\thinspace}
\providecommand{\MR}{\relax\ifhmode\unskip\space\fi MR }
\providecommand{\MRhref}[2]{%
  \href{http://www.ams.org/mathscinet-getitem?mr=#1}{#2}
}
\providecommand{\href}[2]{#2}
\begin{thebibliography}{10}

\bibitem{AKT21}
A.~Abboud, R.~Krauthgamer, and O.~Trabelsi, \emph{Subcubic algorithms for
  gomory–hu tree in unweighted graphs}, Proceedings of the 53rd Annual ACM
  SIGACT Symposium on Theory of Computing, STOC, 2021, p.~1725–1737.

\bibitem{AG09}
K.~Ahn and S.~Guha, \emph{Graph sparsiﬁcation in the semi-streaming model},
  Proceeings of the 36th International Colloquium on Automata, Languages and
  Programming: Part II, ICALP, 2009, pp.~328--338.

\bibitem{AGM12}
K.~Ahn, S.~Guha, and A.~McGregor, \emph{Analyzing graph structure via linear
  measurements}, Proceedings of the 23rd Annual ACM-SIAM Symposium on Discrete
  Algorithms, SODA, 2012, pp.~459--467.

\bibitem{AGM12-pods}
\bysame, \emph{Graph sketches: Sparsiﬁcation, spanners, and subgraphs},
  Proceedings of the 31st Symposium on Principles of Database Systems, PODS,
  2012, pp.~5--14.

\bibitem{abcc06-book}
D.~Applegate, R.~Bixby, V.~Chv\'{a}tal, and W.~Cook, \emph{{The Traveling
  Salesman Problem: A Computational Study}}, Princeton University Press, 2006.

\bibitem{BTS19}
N.~Bansal, O.~Svensson, and L.~Trevisan, \emph{New notions and constructions of
  sparsification for graphs and hypergraphs}, Proceedings of the 60th Annual
  IEEE Symposium on Foundations of Computer Science, FOCS, 2019, pp.~910--928.

\bibitem{BCW22}
C.~Beideman, K.~Chandrasekaran, and W.~Wang, \emph{{Deterministic enumeration
  of all minimum $k$-cut-sets in hypergraphs for fixed $k$}}, Proceedings of
  the 33rd Annual ACM-SIAM Symposium on Discrete Algorithms, SODA, 2022.

\bibitem{Ben95}
A.~Benczur, \emph{A representation of cuts within 6/5 times the edge
  connectivity with applications}, Proceedings of the 36th Annual IEEE
  Foundations of Computer Science, FOCS, 1995, pp.~92--102.

\bibitem{Benthesis}
\bysame, \emph{Cut structures and randomized algorithms in edge-connectivity
  problems}, Ph.D. thesis, {MIT}, 1997.

\bibitem{BG08}
A.~Benczur and M.~Goemans, \emph{Deformable polygon representation and
  near-mincuts}, Building Bridges: Between Mathematics and Computer Science, M.
  Groetschel and G.O.H. Katona, Eds., Bolyai Society Mathematical Studies
  \textbf{19} (2008), 103--135.

\bibitem{CC20}
K.~Chandrasekaran and C.~Chekuri, \emph{Hypergraph $k$-cut for fixed $k$ in
  deterministic polynomial time}, Proceedings of the 61st Annual Symposium on
  Foundations of Computer Science, FOCS, 2020, pp.~810--821.

\bibitem{CC21}
\bysame, \emph{Min-max partitioning of hypergraphs and symmetric submodular
  functions}, Proceedings of the 32nd annual ACM-SIAM Symposium on Discrete
  Algorithms, SODA, 2021, pp.~1026--1038.

\bibitem{CW21}
K.~Chandrasekaran and W.~Wang, \emph{{Fixed Parameter Approximation Scheme for
  Min-max $k$-cut}}, Integer Programming and Combinatorial Optimization, IPCO,
  2021, pp.~354--367.

\bibitem{CXY19}
K.~Chandrasekaran, C.~Xu, and X.~Yu, \emph{Hypergraph $k$-cut in randomized
  polynomial time}, Mathematical Programming (Preliminary version in SODA 2018)
  \textbf{186} (2019), 85--113.

\bibitem{CZ-private}
K.~Chandrasekaran and X.~Zhang, Private communication.

\bibitem{CL20}
C.~Chekuri and S.~Li, \emph{On the hardness of approximating the $k$-way
  hypergraph cut problem}, Theory of Computing \textbf{16} (2020), no.~14,
  1--8.

\bibitem{CQX20}
C.~Chekuri, K.~Quanrud, and C.~Xu, \emph{{LP} relaxation and tree packing for
  minimum $k$-cut}, SIAM Journal on Discrete Mathematics \textbf{34} (2020),
  no.~2, 1334--1353.

\bibitem{ChekuriX18}
C.~Chekuri and C.~Xu, \emph{Minimum cuts and sparsification in hypergraphs},
  SIAM Journal on Computing \textbf{47} (2018), no.~6, 2118--2156.

\bibitem{CKN20}
Y.~Chen, S.~Khanna, and A.~Nagda, \emph{Near-linear size hypergraph cut
  sparsifiers}, Proceedings of the 61st Annual IEEE Symposium on Foundations of
  Computer Science, FOCS, 2020, pp.~61--72.

\bibitem{ccps-book}
W.~Cook, W.~Cunningham, W.~Pulleyblank, and A.~Schrijver, \emph{{Combinatorial
  Optimization}}, Wiley, 1997.

\bibitem{DKL76}
E.~A. Dinitz, A.~V. Karzanov, and M.~V. Lomonosov, \emph{On the structure of a
  family of minimum weighted cuts in a graph}, Studies in Discrete Optimization
  (A.~A. Fridman, ed.), Nauka Publishers, 1976.

\bibitem{DEFPR03}
R.~Downey, V.~Estivill-Castro, M.~Fellows, E.~Prieto, and F.~Rosamund,
  \emph{Cutting up is hard to do: The parameterised complexity of k-cut and
  related problems}, Electronic Notes in Theoretical Computer Science
  \textbf{78} (2003), 209--222.

\bibitem{FPZ19}
K.~Fox, D.~Panigrahi, and F.~Zhang, \emph{Minimum cut and minimum $k$-cut in
  hypergraphs via branching contractions}, Proceedings of the 30th Annual
  ACM-SIAM Symposium on Discrete Algorithms, SODA, 2019, pp.~881--896.

\bibitem{GKP17}
M.~Ghaffari, D.~Karger, and D.~Panigrahi, \emph{Random contractions and
  sampling for hypergraph and hedge connectivity}, Proceedings of the 28th
  Annual ACM-SIAM Symposium on Discrete Algorithms, SODA, 2017, p.~1101–1114.

\bibitem{GR95}
M.~X. Goemans and V.~S. Ramakrishnan, \emph{Minimizing submodular functions
  over families of sets}, Combinatorica \textbf{15} (1995), 499--513.

\bibitem{GH94}
O.~Goldschmidt and D.~Hochbaum, \emph{{A Polynomial Algorithm for the $k$-cut
  Problem for Fixed $k$}}, Mathematics of Operations Research (Preliminary
  version in FOCS 1988) \textbf{19} (1994), no.~1, 24--37.

\bibitem{GQ}
F.~Guinez and M.~Queyranne, \emph{The size-constrained submodular k-partition
  problem}, Unpublished manuscript. Available at
  \url{https://docs.google.com/viewer?a=v&pid=sites&srcid=ZGVmYXVsdGRvbWFpbnxmbGF2aW9ndWluZXpob21lcGFnZXxneDo0NDVlMThkMDg4ZWRlOGI1}.
  See also
  \url{https://smartech.gatech.edu/bitstream/handle/1853/43309/Queyranne.pdf},
  2012.

\bibitem{GHLL20}
A.~Gupta, D.~Harris, E.~Lee, and J.~Li, \emph{{Optimal Bounds for the $k$-cut
  Problem}}, Preprint in arXiv: 2005.08301, 2020.

\bibitem{GLL18-SODA}
A.~Gupta, E.~Lee, and J.~Li, \emph{{An FPT Algorithm Beating 2-Approximation
  for $k$-Cut}}, Proceedings of the 29th Annual ACM-SIAM Symposium on Discrete
  Algorithms, SODA, 2018, pp.~2821--2837.

\bibitem{GLL18-FOCS}
\bysame, \emph{Faster exact and approximate algorithms for k-cut}, Proceedings
  of the 59th IEEE annual Symposium on Foundations of Computer Science, FOCS,
  2018, pp.~113--123.

\bibitem{GLL19-STOC}
\bysame, \emph{The number of minimum $k$-cuts: improving the {Karger-Stein}
  bound}, Proceedings of the 51st ACM Symposium on Theory of Computing, STOC,
  2019, pp.~229--240.

\bibitem{GLL20-STOC}
\bysame, \emph{{The Karger-Stein algorithm is optimal for $k$-cut}},
  Proceedings of the 52nd Annual ACM Symposium on Theory of Computing, STOC,
  2020, pp.~473--484.

\bibitem{HW96}
M.~Henzinger and D.~Williamson, \emph{On the number of small cuts in a graph},
  Information Processing Letters \textbf{59} (1996), 41--44.

\bibitem{KYN07}
Y.~Kamidoi, N.~Yoshida, and H.~Nagamochi, \emph{{A Deterministic Algorithm for
  Finding All Minimum $k$-Way Cuts}}, SIAM Journal on Computing \textbf{36}
  (2007), no.~5, 1329--1341.

\bibitem{KKTY21}
M.~Kapralov, R.~Krauthgamer, J.~Tardos, and Y.~Yoshida, \emph{Towards tight
  bounds for spectral sparsification of hypergraphs}, Proceedings of the 53rd
  Annual ACM SIGACT Symposium on Theory of Computing, STOC, 2021,
  pp.~598--–611.

\bibitem{Kar93}
D.~Karger, \emph{{Global min-cuts in RNC, and other ramifications of a simple
  min-cut algorithm}}, Proceedings of the 4th annual ACM-SIAM Symposium on
  Discrete Algorithms, SODA, 1993, pp.~21--30.

\bibitem{Karger00}
\bysame, \emph{Minimum cuts in near-linear time}, Journal of the ACM
  \textbf{47} (2000), no.~1, 46--76.

\bibitem{KS96}
D.~Karger and C.~Stein, \emph{A new approach to the minimum cut problem},
  Journal of the ACM \textbf{43} (1996), no.~4, 601--640.

\bibitem{KKG21}
A.~Karlin, N.~Klein, and S-O. Gharan, \emph{A (slightly) improved approximation
  algorithm for metric tsp}, Proceedings of the 53rd Annual ACM SIGACT
  Symposium on Theory of Computing, STOC, 2021, p.~32–45.

\bibitem{KT11}
K.~Kawarabayashi and M.~Thorup, \emph{The minimum k-way cut of bounded size is
  fixed-parameter tractable}, Proceedings of the 52nd Annual Symposium on
  Foundations of Computer Science, FOCS, 2011, pp.~160--169.

\bibitem{KL20}
K-i Kawarabayashi and B.~Lin, \emph{{A nearly 5/3-approximation FPT Algorithm
  for Min-$k$-Cut}}, Proceedings of the 31st ACM-SIAM Symposium on Discrete
  Algorithms, SODA, 2020, pp.~990--999.

\bibitem{KK15}
D.~Kogan and R.~Krauthgamer, \emph{Sketching cuts in graphs and hypergraphs},
  Proceedings of the 2015 Conference on Innovations in Theoretical Computer
  Science, ITCS '15, 2015, pp.~367--376.

\bibitem{La73}
E.~Lawler, \emph{{Cutsets and Partitions of Hypergraphs}}, {Networks}
  \textbf{3} (1973), 275--285.

\bibitem{LP20}
J.~Li and D.~Panigrahi, \emph{{Deterministic Min-cut in Poly-logarithmic
  Max-flows}}, Proceedings of the 61st Annual Symposium on Foundations of
  Computer Science, FOCS, 2020, pp.~85--92.

\bibitem{LSS20}
D.~Lokshtanov, S.~Saurabh, and V.~Surianarayanan, \emph{{A Parameterized
  Approximation Scheme for \textsc{Min $k$-Cut}}}, Proceedings of the 61st IEEE
  annual Symposium on Foundations of Computer Science, FOCS, 2020,
  pp.~798--809.

\bibitem{Ma17dks}
P.~Manurangsi, \emph{{Almost-polynomial Ratio ETH-hardness of Approximating
  Densest $k$-subgraph}}, Proceedings of the 49th Annual ACM Symposium on
  Theory of Computing, STOC, 2017, pp.~954--961.

\bibitem{Ma18}
\bysame, \emph{{Inapproximability of Maximum Biclique Problems, Minimum $k$-Cut
  and Densest At-Least-$k$-Subgraph from the Small Set Expansion Hypothesis}},
  Algorithms \textbf{11(1)} (2018), 10.

\bibitem{NNI97}
H.~Nagamochi, K.~Nishimura, and T.~Ibaraki, \emph{Computing all small cuts in
  an undirected network}, SIAM Journal on Discrete Mathematics \textbf{10}
  (1997), no.~3, 469--481.

\bibitem{NSZ19}
M.~N\"{a}gele, B.~Sudakov, and R.~Zenklusen, \emph{Submodular minimization
  under congruency constraints}, Combinatorica \textbf{39} (2019), 1351--1386.

\bibitem{OFN12}
K.~Okumoto, T.~Fukunaga, and H.~Nagamochi, \emph{Divide-and-conquer algorithms
  for partitioning hypergraphs and submodular systems}, Algorithmica
  \textbf{62} (2012), no.~3, 787--806.

\bibitem{Q19}
K.~Quanrud, \emph{{Fast and Deterministic Approximations for k-Cut}},
  Proceedings of Approximation, Randomization, and Combinatorial Optimization.
  Algorithms and Techniques, APPROX, 2019, pp.~23:1--23:20.

\bibitem{Q99}
M.~Queyranne, \emph{On optimum size-constrained set partitions}, 1999, Talk at
  Aussiois workshop on Combinatorial Optimization.

\bibitem{RS10}
P.~Raghavendra and D.~Steurer, \emph{{Graph Expansion and the Unique Games
  Conjecture}}, Proceedings of the 42nd Annual ACM Symposium on Theory of
  Computing, STOC, 2010, pp.~755--764.

\bibitem{RS08}
R.~Ravi and A.~Sinha, \emph{Approximating k-cuts using network strength as a
  lagrangean relaxation}, European Journal of Operational Research \textbf{186}
  (2008), no.~1, 77--90.

\bibitem{SV95}
H.~Saran and V.~Vazirani, \emph{{Finding k Cuts within Twice the Optimal}},
  SIAM Journal on Computing \textbf{24} (1995), no.~1, 101--108.

\bibitem{SY19}
T.~Soma and Y.~Yoshida, \emph{Spectral sparsification of hypergraphs},
  Proceedings of the 30th Annual ACM-SIAM Symposium on Discrete Algorithms,
  SODA, 2019, pp.~2570--2581.

\bibitem{Th08}
M.~Thorup, \emph{{Minimum $k$-way Cuts via Deterministic Greedy Tree Packing}},
  Proceedings of the 40th Annual ACM Symposium on Theory of Computing, STOC,
  2008, pp.~159--166.

\bibitem{Xi08}
M.~Xiao, \emph{{An Improved Divide-and-Conquer Algorithm for Finding All
  Minimum k-Way Cuts}}, Proceedings of 19th International Symposium on
  Algorithms and Computation, ISAAC, 2008, pp.~208--219.

\bibitem{Zen19}
R.~Zenklusen, \emph{{A $1.5$-Approximation for Path TSP}}, Proceedings of the
  30th Annual ACM-SIAM Symposium on Discrete Algorithms, SODA, 2019,
  pp.~1539--1549.

\bibitem{Zhthesis}
L.~Zhao, \emph{Approximation algorithms for partition and design problems in
  networks}, Ph.D. thesis, {Graduate School of Informatics, Kyoto University,
  Japan}, 2002.

\bibitem{ZNI05}
L.~Zhao, H.~Nagamochi, and T.~Ibaraki, \emph{{Greedy splitting algorithms for
  approximating multiway partition problems}}, Mathematical Programming
  \textbf{102} (2005), no.~1, 167--183.

\end{thebibliography}


\appendix

\end{document}